\documentclass[a4paper,UKenglish,cleveref, autoref, thm-restate]{lipics-v2021}

\title{Minimum Sum Vertex Cover via Minimum Vertex Cover} 

\author{Ahmad Biniaz}{School of Computer Science, University of Windsor}{Ahmad.Biniaz@uwindsor.ca}{https://orcid.org/0000-0002-6396-4494}{Research supported by NSERC.}

\author{Jean-Lou De Carufel}{School of Electrical Engineering and Computer Science, University of Ottawa}{jdecaruf@uottawa.ca }{https://orcid.org/0000-0002-6734-8234}{Research supported by NSERC.}

\author{Anil Maheshwari}{School of Computer Science, Carleton University}{anil@scs.carleton.ca}{https://orcid.org/0000-0002-1274-4598}{Research supported by NSERC.}
\author{Saeed Odak}{Department of Computer Science, Aalto University.}{Saeed.Odak@aalto.fi}{https://orcid.org/0009-0005-6290-0965}{}
\author{Michiel Smid}{School of Computer Science, Carleton University}{michiel@scs.carleton.ca }{https://orcid.org/0000-0003-3683-9054}{Research supported by NSERC.}

\authorrunning{Biniaz, De Carufel, Maheshwari, Odak, and Smid} 

\Copyright{Ahmad Biniaz, Jean-Lou De Carufel, Anil Maheshwari, Saeed Odak, and Michiel Smid} 

\ccsdesc[500]{Theory of computation~Approximation algorithms analysis}
\ccsdesc[300]{Theory of computation~Graph algorithms analysis}
\ccsdesc[100]{Theory of computation~Dynamic programming}

\keywords{Minimum Sum Vertex Cover, Minimum Vertex Cover, Vertex Ordering, Exact Algorithm, Fixed Parameter Tractability} 

\category{} 

\relatedversion{} 

\acknowledgements{We thank the anonymous reviewers for their careful reading of the manuscript and for their valuable comments and suggestions.}

\EventEditors{Lin Chen and Nicole Megow}
\EventNoEds{2}
\EventLongTitle{37th International Symposium on Algorithms and Computation (ISAAC 2026)}
\EventShortTitle{ISAAC 2026}
\EventAcronym{ISAAC}
\EventYear{2026}
\EventDate{December 6--9, 2026}
\EventLocation{Hangzhou, China}
\EventLogo{}
\SeriesVolume{399}
\ArticleNo{7}

\usepackage[utf8]{inputenc}
\usepackage[T1]{fontenc}
\usepackage{amsmath, amssymb, amsthm}
\usepackage{mathtools}
\usepackage{hyperref}
\usepackage{enumitem}
\usepackage{booktabs}
\usepackage{comment}
\usepackage{algorithm2e}
\usepackage{algpseudocode}
\usepackage{xcolor}

\DeclareMathOperator{\cost}{\mathrm{cost}}
\newcommand{\SVC}{\operatorname{\overline{svc}}}
\newcommand{\OPT}{\operatorname{\textsc{opt}}}

\newcommand{\rd}{\operatorname{rd}}

\hypersetup{
    colorlinks=true,
    linkcolor=blue,
    citecolor=blue,
    urlcolor=blue
}

\nolinenumbers

\begin{document}
\begin{titlepage}
\maketitle

\begin{abstract}
The Minimum Sum Vertex Cover (MSVC) problem asks for an ordering of the vertices of a graph that minimizes the sum, over all edges, of the time at which each edge is first covered. We study the problem through the structure of vertex covers and obtain new approximation and exact algorithms, together with conditional lower bounds.

For graphs of maximum degree $\Delta$, we show that a simple ordering algorithm based on a minimum vertex cover achieves approximation ratio $R_\Delta \le {(\sqrt{\Delta}+1)}/{2}$. For $d$-regular graphs, we give a polynomial-time $1.184$-approximation by combining Max-$k$-Vertex-Cover approximation with a structural bound on optimal prefixes.

On the exact side, we give an algorithm parameterized by the vertex cover number $k$ running in $2^{O(k\log k)} + O(n+m)$ time, improving the previous dependence on $k$, where $n$ and $m$ are the number of vertices and edges in the graph, respectively. We also develop a separator-based exact algorithm running in $ 2^{O(\sqrt n \log n)}$ time on planar, bounded-genus, and fixed-minor-free graph classes. 

Finally, we prove that Minimum Sum Vertex Cover is NP-hard on planar graphs and, assuming ETH, admits no $2^{o(\sqrt n)}$-time exact algorithm on $n$-vertex planar graphs. Thus our planar upper bound is tight up to logarithmic factors in the exponent.
\end{abstract}

\end{titlepage}

\section{Introduction}

Many optimization problems ask not only which objects should be selected but also in what order. In such problems the objective is governed by latency: requests should be served as early as possible, and the cost of a solution depends on the time at which each request is first satisfied. This point of view appears naturally in scheduling, query optimization, fault diagnosis, information retrieval, and related covering problems~\cite{DBLP:conf/soda/AzarG11, BansalBFT21}.

The \textsc{Minimum Sum Vertex Cover} (MSVC) problem is a basic graph-theoretic model of this phenomenon. Given a graph $G=(V,E)$, the task is to order the vertices so that edges are covered as early as possible. MSVC is the graph special case of \textsc{Minimum Sum Set Cover}, introduced by Feige, Lov\'asz, and Tetali~\cite{FLT04}. Formally, an ordering is a bijection $\sigma:\{1,\ldots,|V|\} \to V$, and an edge $\{u,v\}\in E$ is covered at time $\min\{\sigma^{-1}(u),\sigma^{-1}(v)\}$. The objective is to compute a bijection $\sigma$ that minimizes 
\[
    \cost(\sigma) := \sum_{\{u,v\}\in E}\min\{\sigma^{-1}(u),\sigma^{-1}(v)\}.
\]
We denote the optimum value by $\mathrm{msvc}(G)$. Equivalently, the vertices are revealed one by one, an edge is covered when its first endpoint is revealed, and the goal is to minimize the sum of the cover times of all edges. Any MSVC ordering naturally induces a vertex cover: for each edge, take the endpoint that appears first in the ordering. This connection motivates studying MSVC through the structure of Minimum Vertex Cover.

\medskip\noindent\textbf{Approximation.}
MSVC is NP-hard, and much of the previous work has focused on approximation algorithms. Feige, Lov\'asz, and Tetali~\cite{FLT04} gave a $2$-approximation as part of their work on \textsc{Minimum Sum Set Cover}. Bansal, Batra, Farhadi, and Tetali~\cite{BansalBFT21} improved this to a $16/9$-approximation using a linear-programming relaxation and a tailored rounding scheme; they also showed a matching integrality gap for their relaxation. This remains the best known polynomial-time approximation ratio for general graphs. On the hardness side, Stankovi\'c~\cite{Stankovic25} proved that, assuming the Unique Games Conjecture, MSVC is hard to approximate within a factor of $1.0748$. The same work studied regular graphs, proving hardness below $1.0157$ and giving a $1.225$-approximation based on approximation algorithms for \textsc{Max-$k$-Vertex-Cover}~\cite{AS19,M19,RT12}.
 
\medskip\noindent\textbf{Structured graph classes and trees.}
Beyond worst-case approximation, MSVC has been studied on structured graph classes, where both exact values and combinatorial bounds are known. Gera, Rasmussen, Stanica, and Horton~\cite{gera2006results} determined the exact optimum for several families, including complete bipartite graphs and hypercubes, and gave general upper and lower bounds on the MSVC cost in terms of the independence number, the girth, and the vertex cover number of the graph. Polynomial-time exact algorithms are known for caterpillars and split graphs. However, whether MSVC is polynomial-time solvable or NP-hard on \emph{trees} has remained open for many years~\cite{AP25, rasmussen2006efficient}. This open problem motivates our study of bounded-degree graphs that admit polynomial-time algorithms for the Vertex Cover problem, a class that includes bounded-degree trees.
 
\medskip\noindent\textbf{Parameterized complexity.}
The problem has also been examined through the lens of parameterized complexity, where the choice of parameter is delicate. The optimum value of any connected $n$-vertex graph is $\Omega(n)$, so parameterizing by the solution cost is uninformative. Parameterizing by treewidth is also problematic, since the complexity of the problem on trees, which have treewidth one, remains open. Aute and Panolan~\cite{AP25} gave fixed-parameter algorithms parameterized by the size $k$ of a minimum vertex cover, running in $2^{2^{O(k)}} n^{O(1)}$ time, and by the size of a minimum clique modulator. Liu, Cao, Gai, and Wang~\cite{LCGW24,LCGW25} studied a different parameter, namely the largest position at which an edge is first covered in the ordering. For this parameter $p$, they obtained a polynomial kernel with $p^2+2p$ vertices and an algorithm running in time $O(2^p p! p^3 + m)$ where $m$ is the size of graph.
A subsequent structural result of Zhang and Cao~\cite{DBLP:journals/corr/abs-2605-21920} shows that, for every graph $G$, $p \le 2k\log k$, where $k$ is the size of a minimum vertex cover. They also exhibit a family of bipartite graphs for which $p = \Omega(k \log k / \log\log k)$. Together with the algorithm of~\cite{LCGW24,LCGW25}, this implies a $2^{O(k\log^2 k)} + O(m)$ time algorithm for MSVC parameterized by~$k$, where $m$ is the size of the input graph.
 
\medskip\noindent\textbf{Our results.}
We make progress on MSVC in approximation, exact algorithms, and conditional lower bounds.

\begin{itemize}[leftmargin=1em]
    \item \textbf{Bounded-degree graphs via minimum vertex cover.}
    We give a simple ordering algorithm that starts from a minimum vertex cover and greedily builds a star decomposition of the edges. For graphs of maximum degree $\Delta$, the algorithm has approximation ratio of
    \[
        R_\Delta = \max_{1\le k\le \Delta} \frac{k(\Delta-1)}{\Delta-1+(k-1)^2},
    \]
    and $R_\Delta\le(\sqrt{\Delta}+1)/2$; see \Cref{thm:min-cover-star-ordering}. In particular, the ratio is $4/3$ for $\Delta=3$, giving a polynomial-time $4/3$-approximation for subcubic graph classes (including trees) where a minimum vertex cover can be found in polynomial-time. We also show that the analysis of this minimum-cover ordering is tight in the prescribed-cover setting; see \Cref{thm:min-cover-tight}.

    \item \textbf{Regular graphs.}
    We give a polynomial-time $1.184$-approximation for MSVC on $d$-regular graphs, improving the previous $1.225$ bound of Stankovi\'c~\cite{Stankovic25}; see \Cref{thm:regular}. The algorithm uses a \textsc{Max-$k$-Vertex-Cover} approximation subroutine, but unlike the previous analysis it does not commit to a single value of $k$. Instead, it tries all choices of $k$ and uses~a structural lemma to bound the number of edges covered by prefixes of an optimal~ordering (\Cref{lem:coverage}).

    \item \textbf{Exact FPT algorithm parameterized by vertex cover.}
    We give an exact algorithm for MSVC parameterized by the size $k$ of a vertex cover. The running time is
    \[
        2^{O(k\log k)}+O(n+m),
    \]
    improving the previous know result by a $\log k$ factor in the exponent; see \Cref{thm:msvc-vc-exact}.

    \item \textbf{Exact algorithms via separators.}
    For general graphs, a Held--Karp style subset dynamic program solves MSVC in $2^n n^2$ time and this is qualitatively optimal under Gap-ETH; see \Cref{thm:exact-general-graphs}. We then obtain faster exact algorithms on sparse graph classes with small balanced separators. If every subgraph admits balanced separators whose sizes sum to $\Lambda(n)$ along a recursive decomposition, then MSVC can be solved exactly in time and space
    \[
        2^{O((\Lambda(n)+\sqrt m)\log n)};
    \]
    see \Cref{thm:exact-general}. This yields $2^{O(\sqrt n\log n)}$ algorithms for planar, bounded-genus, and $H$-minor-free graphs, and a corresponding separator-based bound for bounded-treewidth graphs; see \Cref{thm:planar-msvc-exact,cor:exact-treewidth}.

    \item \textbf{Planar hardness and lower bounds.}
    We extend NP-completeness of the decision version of MSVC to planar graphs; see \Cref{thm:np-complete}. Then using the same reduction, we show that unless ETH fails, MSVC has no $2^{o(\sqrt n)}$ time exact algorithm on $n$-vertex planar graphs; see \Cref{thm:eth}. Thus the separator-based upper bound is tight up to logarithmic factors in the exponent.
\end{itemize}
 
\medskip\noindent\textbf{Related vertex-ordering problems.}
MSVC belongs to a broader family of graph ordering problems whose objectives are sums or maxima over edges. Examples include the minimum linear arrangement problem~\cite{DBLP:journals/algorithmica/CharikarHKR10,DBLP:conf/stoc/DevanurKSV06},
the minimum logarithmic arrangement problem~\cite{MestrePupyrev22}, and the bandwidth problem~\cite{DBLP:journals/jcss/DubeyFU11}. We refer to the survey of D\'iaz, Petit, and Serna~\cite{DiazPetitSerna02} for further variants. Many of these problems become more tractable on graph classes with small separators, such as planar graphs, bounded-genus graphs, minor-free graphs, and graphs of bounded treewidth. Our separator algorithm for MSVC fits this general pattern, although the MSVC objective has an important difference: it depends on absolute positions in the ordering, not only on the distance between the endpoints of an edge. Thus techniques for translation-invariant layout objectives do not apply directly.

\section{Preliminaries}\label{sec:prelim}

All graphs in this paper are finite, simple, and undirected. For a graph $G=(V,E)$ we~write $n=|V|$ and $m=|E|$. Isolated vertices do not affect the MSVC cost and may be placed last. For $v\in V$, let $N(v)$ be the neighbourhood of $v$ and let $\deg(v)=|N(v)|$. The maximum~degree is denoted by $\Delta(G)$, and $G$ is $d$-regular if every vertex has degree $d$. For $S\subseteq V$, we write $G[S]$ for the subgraph induced by $S$ and call $S$ independent if $G[S]$ has no edges. A vertex~cover is a set of vertices meeting every edge, and $\tau(G)$ denotes the minimum size of a vertex~cover.

For an ordering $\sigma$ of $V$, we write $\cost(\sigma)$ for its MSVC cost and $\mathrm{msvc}(G)$ for the optimum. For a minimization problem, a $\rho$-approximation is an algorithm that always returns a feasible solution of cost at most $\rho$ times the optimum; for MSVC this means $\cost(\sigma_{\mathrm{ALG}}) \le \rho\cdot \mathrm{msvc}(G)$.

A balanced separator of an $N$-vertex graph is a set $S\subseteq V$ such that $V\setminus S$ can be partitioned into two sets with no edge between them, each of size at most $2N/3$. For any function $\operatorname{sep}(\cdot)$, a graph class has $\operatorname{sep}(\cdot)$ separators if every $N$-vertex graph has a balanced separator of size at most $\operatorname{sep}(N)$. Planar graphs have $O(\sqrt N)$ separators~\cite{LiptonTarjan79}, and the same asymptotic bound holds for every fixed-minor-free class~\cite{AlonSeymourThomas90}. Graphs of treewidth $t$ have separators of size at most $t+1$; we use standard terminology for tree decompositions and treewidth, as in~\cite{cygan-book}.

Our conditional lower bounds use ETH and Gap-ETH. ETH asserts that $3$-SAT on $N$ variables cannot be solved in $2^{o(N)}$ time~\cite{impagliazzo2001complexity}. Gap-ETH asserts that for some constant $\delta>0$, no $2^{o(N)}$ time algorithm can distinguish satisfiable $3$-SAT instances from instances in which every assignment leaves at least a $\delta$ fraction of clauses unsatisfied~\cite{Dinur16,ManurangsiR17}.

\section{MSVC on Bounded-Degree Graphs}
\label{sec:bounded-degree-vc}

In this section, we analyze a simple approximation algorithm for bounded-degree graphs when a minimum vertex cover is available. This yields a polynomial-time algorithm on any graph class where a minimum vertex cover can be computed in polynomial-time. Examples are trees, bipartite graphs, graphs of bounded treewidth, and perfect graphs (which includes chordal, interval, comparability, and co-comparability graphs)~\cite{grotschel1984polynomial}.

Let $G=(V,E)$ be a graph with $m>0$ edges. A labeling $\varphi:V\to\{1,\ldots,|V|\}$ assigns every edge to the endpoint with smaller label. Thus the labeling induces a partition of $E$ into stars. If the number of edges in the nonempty stars are $s_1,\ldots,s_K$ and their centers receive labels $\ell_1,\ldots,\ell_K$, then the cost is $\sum_{i=1}^K \ell_i s_i$.
For a fixed multiset of star sizes, this expression is minimized by placing stars with more edges earlier. See \Cref{fig:example}.

Fix an integer $\Delta\ge2$, and suppose that $G$ has maximum degree at most $\Delta$. Let $C$ be a minimum vertex cover of $G$. We use the following
ordering algorithm.

\begin{center}
\noindent\fbox{\parbox{0.97\textwidth}{
\textbf{\textsc{Minimum-Cover Star Ordering.}}
Initially, every edge is uncovered. For $i{=}1,\ldots,|C|$, choose an
unprocessed vertex $v_i\in C$ of maximum uncovered degree, assign label $i$ to
$v_i$, and declare all currently uncovered edges incident to $v_i$ to be
covered. After all vertices of $C$ have been processed, assign the remaining
labels arbitrarily to $V\setminus C$.
}}
\end{center}

Note that this labeling is a bijection, and its inverse induces the ordering $\sigma_C$ as in the definition of MSVC. For $\Delta\ge2$, define
\begin{equation}\label{eq:RDelta}
    R_\Delta := \max_{1\le k\le \Delta} \frac{k(\Delta-1)}{\Delta-1+(k-1)^2}.
\end{equation}

\begin{theorem}\label{thm:min-cover-star-ordering} Let $G=(V,E)$ be a graph of maximum degree at most $\Delta\ge2$, and let $C$ be a minimum vertex cover of $G$. The \textsc{Minimum-Cover Star Ordering} $\sigma_C$ constructed from $C$ satisfies $\cost(\sigma_C)\le R_\Delta\,\mathrm{msvc}(G)$. Moreover, $R_\Delta\le (\sqrt{\Delta}+1)/2$.
\end{theorem}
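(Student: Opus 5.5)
The plan is to compare the two costs through their \emph{uncovered-edge profiles}. For any ordering, the cost equals $\sum_{t\ge0}u_t$, where $u_t$ is the number of edges still uncovered after the first $t$ vertices. Let $c=|C|=\tau(G)$ and $m=|E|$. I will bound the profile of $\sigma_C$ from above and that of an optimal ordering from below. Both bounds depend only on $c,m,\Delta$, and the resulting ratio reduces to the expression in \eqref{eq:RDelta} with $k\approx m/c$, the average number of edges per cover vertex, which lies in $[1,\Delta]$ since $c\le m\le c\Delta$.

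\emph{Upper bound for $\sigma_C$.} Let $u_t$ be the uncovered count after $t$ steps of the algorithm, for $0\le t\le c$; we have $u_c=0$. After $t$ steps, the $c-t$ unprocessed vertices of $C$ still cover all $u_t$ uncovered edges. Hence one of them has uncovered degree at least $u_t/(c-t)$, and greedy removes at least that many edges. This gives $u_{t+1}\le u_t\,(c-t-1)/(c-t)$, so by induction $u_t\le m(c-t)/c$. Trivially also $u_t\le (c-t)\Delta$, although the first bound suffices here. Therefore $\cost(\sigma_C)\le \sum_{t=0}^{c-1} m(c-t)/c = m(c+1)/2$.

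\emph{Lower bound for $\mathrm{msvc}(G)$.} Fix an optimal ordering and let $u^*_t$ be its uncovered counts. The first $t$ vertices cover at most $t\Delta$ edges, so $u^*_t\ge m-t\Delta$. Moreover, the uncovered edges must have a vertex cover of size at least $\tau-t$: otherwise, together with the prefix, we would get a vertex cover of $G$ smaller than $\tau$. A graph with vertex cover number $r$ has at least $r$ edges, so $u^*_t\ge c-t$. Hence
\[
\mathrm{msvc}(G)\ge \sum_{t=0}^{c-1}\max\{m-t\Delta,\;c-t\}.
\]
In continuous form, write $m=kc$ and let the crossover be at $t^*=c(k-1)/(\Delta-1)$. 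Then the sum is about $\tfrac{c^2}{2}\bigl(1+(k-1)^2/(\Delta-1)\bigr)$. Dividing the upper bound $kc^2/2$ by this gives exactly $k(\Delta-1)/(\Delta-1+(k-1)^2)$.

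\emph{Main obstacle and final step.} The hard step is the discrete bookkeeping. I need to show that the exact discrete sums, with the $+1$ terms and the floor at the crossover, still give a ratio at most the maximum over \emph{integer} $k\in\{1,\dots,\Delta\}$, rather than merely the supremum over real $k$. I would first try to sharpen the greedy bound using integrality: star sizes are integers at most $\Delta$, so the extremal configuration has stars of sizes $\Delta$ and $1$ in a regime determined by an integer $k$. I would then check that the ratio, as a function of $m$ for fixed $c$, is maximized at such breakpoints. If this fails, the fallback is to compare the two sums term by term after pairing $t$ with the crossover index, and to absorb the lower-order terms using $c\le m$.

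The second claim is routine calculus. Over real $k>0$, the derivative of $f(k)=k(\Delta-1)/(\Delta-1+(k-1)^2)$ vanishes at $k^2=\Delta$. There $f(\sqrt\Delta)=\sqrt\Delta(\Delta-1)/(2\Delta-2\sqrt\Delta)=(\sqrt\Delta+1)/2$, and this is the global maximum, so $R_\Delta\le(\sqrt\Delta+1)/2$.
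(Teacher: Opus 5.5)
\paragraph{The gap: the upper bound $m(c+1)/2$ is too weak to give $R_\Delta$.}
Both of your profile bounds are correct. The lower bound is a neat direct route. Evaluate $\sum_{t=0}^{c-1}\max\{m-t\Delta,c-t\}$ exactly, with $d=\Delta-1$ and $m-c=qd+\rho$, $0\le\rho<d$. You get $c(c+1)/2+dq(q+1)/2+\rho(q+1)$, which is at least $\frac12\bigl(c^2+(m-c)^2/d+m\bigr)$, the paper's lower bound. You reach it without the paper's detour through the number $K\ge c$ of optimal stars and the monotonicity of $f$.

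The problem is the first assertion, $\cost(\sigma_C)\le R_\Delta\,\mathrm{msvc}(G)$, and it is not a matter of discrete bookkeeping. Your upper bound $m(c+1)/2$ is the fractional one, with all stars of size $m/c$. Combined with your lower bound it only gives $\cost(\sigma_C)/\mathrm{msvc}(G)\le g(m/c)$, where $g(x)=dx/(d+(x-1)^2)$ is evaluated at the \emph{real} number $m/c$. The supremum of this over $[1,\Delta]$ is $(\sqrt\Delta+1)/2$, not $R_\Delta$.

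The loss is in the leading term. Take $\Delta=3$, $8\mid c$ and $m=7c/4$. As $c\to\infty$, the ratio of your upper bound to your exact lower-bound sum tends to $56/41=g(7/4)>4/3=R_3$. So neither a term-by-term comparison nor absorbing the $+1$ terms via $c\le m$ can close the gap. Your hint about stars of sizes $\Delta$ and $1$ also points the wrong way: that is the extremal configuration for the optimum's lower bound, not for the algorithm's cost.

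\paragraph{The missing idea: use integrality of the algorithm's star sizes.}
The greedy stars have integer sizes $s_1\ge\cdots\ge s_c\ge1$ with $\sum_i s_i=m$, and $\cost(\sigma_C)=\sum_i i\,s_i$. Write $m=kc+r$ with $0\le r<c$. Then $\sum_{i\le t}s_i\ge tk+\min\{t,r\}$ for every $t$: either $s_t\ge k+1$, or the suffix after position $t$ sums to at most $(c-t)k$. This is equivalent to the paper's balancing exchange. It gives $\cost(\sigma_C)\le k\,c(c+1)/2+r(r+1)/2$, replacing your $r(c+1)/2$ by $r(r+1)/2$.

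With $\theta=r/c$ and your lower bound, the ratio is at most $F_k(\theta)=d(k+\theta^2)/\bigl(d+(k-1+\theta)^2\bigr)$. This function has no strict interior maximum on $[0,1]$. It equals $g(k)$ at $\theta=0$ and $g(k+1)$ at $\theta=1$, and $k+1\le\Delta$ whenever $r>0$ because $m\le c\Delta$. This is what yields $R_\Delta$.

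Your calculus for $R_\Delta\le(\sqrt\Delta+1)/2$ is fine. Once the lower-bound sum is evaluated exactly as above, your argument does prove the weaker statement $\cost(\sigma_C)\le\frac{\sqrt\Delta+1}{2}\,\mathrm{msvc}(G)$.
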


\begin{figure}
    \centering
    \includegraphics[width=0.9\linewidth]{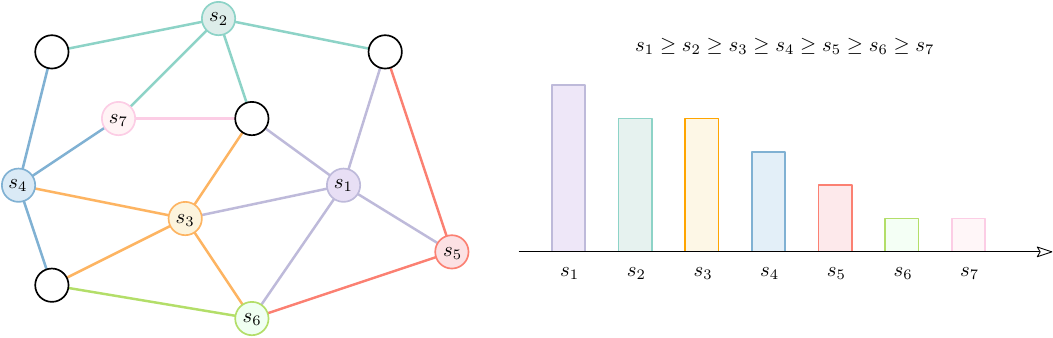}
    \caption{Illustration of the partition into stars: stars of larger size are assigned smaller labels. The white vertices are assigned arbitrarily large labels, as they do not contribute to the cost.}
    \label{fig:example}
\end{figure}

\begin{proof}
Let $M:=|C|=\tau(G)$. Each vertex in $C$ covers at least one distinct edge. Therefore, the algorithm produces exactly $M$ nonempty stars. If their sizes, in the order in which they are produced, are $s_1,\ldots,s_M$, then $\Delta\ge s_1\ge s_2\ge\cdots\ge s_M\ge1$, and $\sum_{i=1}^M s_i=m$.

\begin{restatable}{claim}{clmmincoveralgupper}\label{lem:min-cover-alg-upper}
Write $m=kM+r$, where $k=\lfloor m/M\rfloor$ and $0\le r<M$. Then
$1\le k\le\Delta$, and the cost satisfies $\cost(\sigma_C) \le k\frac{M(M+1)}{2}+\frac{r(r+1)}{2}$.
\end{restatable}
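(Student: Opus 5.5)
The cost of $\sigma_C$ is exactly $\sum_{i=1}^M i\,s_i$, since $v_i$ receives label $i$ and covers exactly $s_i$ edges, while vertices outside $C$ cover nothing. The plan is to bound this weighted sum using only that the $s_i$ are nonincreasing, lie in $[1,\Delta]$, and sum to $m$. First, $1\le k\le\Delta$: since $M\le m$ (each $s_i\ge1$) we get $k\ge1$, and since $m\le \Delta M$ (each $s_i\le\Delta$) we get $k=\lfloor m/M\rfloor\le\Delta$.

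For the cost bound, I will use a Chebyshev-type rearrangement argument. Among nonincreasing sequences with fixed sum, $\sum_i i s_i$ is maximized by the most ``balanced'' sequence. Concretely, define $t_i:=k+1$ for $i\le r$ and $t_i:=k$ for $i>r$; this is the nonincreasing integer sequence of sum $m$ closest to constant. I will show $\sum_i i s_i\le \sum_i i t_i$. Write $\sum_i i s_i=\sum_{j=1}^M \sum_{i\ge j} s_i = \sum_{j=1}^M T_j$, where $T_j=\sum_{i=j}^M s_i$ is a suffix sum. Since $s$ is nonincreasing, the average of the last $M-j+1$ terms is at most the overall average $m/M$, so $T_j\le (M-j+1)m/M$. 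Integrality gives $T_j\le \lfloor (M-j+1)m/M\rfloor$. The remaining step is to compare with the suffix sums of $t$, namely $T'_j=(M-j+1)k+\max(0,r-j+1)$. Rather than check this floor comparison, I will use the slightly cleaner bound $T_j\le (M-j+1)k + \min(r, M-j+1)$. To justify it, note that $s_i\ge k+1$ for all $i\le j-1$ whenever the suffix has more than $(M-j+1)k+\min(r,M-j+1)$ edges, since the suffix entries then average above $k$. So some suffix entry is at least $k+1$, and by monotonicity every prefix entry is too. The prefix sum is then at least $(j-1)(k+1)$, and the total exceeds $Mk+r$; I expect this counting to go through directly.

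Summing this suffix bound over $j$ gives $\sum_j (M-j+1)k=kM(M+1)/2$ plus $\sum_j \min(r,M-j+1)$. Now $\sum_{u=1}^{M}\min(r,u)=r(r+1)/2+r(M-r)$, which is larger than the claimed $r(r+1)/2$. So the claimed bound must place the $+1$ excesses at the end, with the extra edges weighted by positions $1,\dots,r$ counted from the back. I therefore reinterpret the target: the claim $kM(M+1)/2+r(r+1)/2$ equals $\sum_i i\,t_i$ with $t_i = k+1$ for $i\le r$ exactly when $\sum_{i\le r} i=r(r+1)/2$. So the correct extremal sequence is $t_i=k+1$ for $i\le r$, and the suffix bound I need is $T_j\le (M-j+1)k+\max(0,r-j+1)$. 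The main obstacle is proving this sharper suffix bound. For $j>r$, the suffix has $M-j+1<M-r$ terms; if its sum exceeded $(M-j+1)k$, then some entry in it is at least $k+1$, hence all of $s_1,\dots,s_{j-1}$ are at least $k+1$, giving total greater than $(j-1)(k+1)+(M-j+1)k\ge Mk+r$, a contradiction. For $j\le r$, the bound follows from $T_j=m-\sum_{i<j}s_i$ together with $\sum_{i<j}s_i\ge (j-1)\cdot m/M$ (the prefix average is at least the mean), followed by the routine integrality rounding. Summing the sharper bound over $j$ yields exactly $kM(M+1)/2+r(r+1)/2$.
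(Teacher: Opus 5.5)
Your reduction to suffix sums, $\sum_i i s_i=\sum_j T_j$, is correct, as are the target bound $T_j\le (M-j+1)k+\max(0,r-j+1)$ (which sums to exactly $kM(M+1)/2+r(r+1)/2$), the argument for $1\le k\le\Delta$, and your contradiction argument for the case $j>r$. The gap is in the case $j\le r$.

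There the target is equivalent to $\sum_{i<j}s_i\ge (j-1)(k+1)$. ``Prefix average at least the mean'' only gives $\sum_{i<j}s_i\ge (j-1)m/M=(j-1)k+(j-1)r/M$. Rounding up yields $(j-1)k+\lceil (j-1)r/M\rceil$, which falls short because $r<M$. Concretely, take $M=10$, $k=1$, $r=5$ (so $m=15$) and $j=5$: averaging gives only $s_1+\cdots+s_4\ge 6$, whereas you need $8$. In general, the averaged suffix bound $\lfloor (M-j+1)m/M\rfloor$ exceeds your target by $\lfloor (j-1)(M-r)/M\rfloor$, which is positive whenever $(j-1)(M-r)\ge M$. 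This is exactly the floor comparison you postponed earlier, and it does not come out in your favour. Averaging uses only the mean, not the fact that a nonincreasing integer sequence is forced to put its excess at the front.

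The missing inequality is nevertheless true, and the fix is the mirror image of your $j>r$ argument. Suppose $\sum_{i<j}s_i\le (j-1)(k+1)-1$. Then some $s_i$ with $i<j$ is at most $k$, so by monotonicity $s_{i'}\le k$ for every $i'\ge j$. Hence $m\le (j-1)(k+1)-1+(M-j+1)k=Mk+j-2<Mk+r$, using $j\le r$, which is a contradiction.

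With this replacement your proof is complete, and it takes a different route from the paper's. You show directly, by comparing prefix and suffix sums, that every admissible sequence majorizes the balanced one. The paper instead smooths the sequence by exchanges $(s_i,s_{i+1})\mapsto(s_i-1,s_{i+1}+1)$, each of which increases $\sum_i i s_i$. The paper's argument is shorter. Yours avoids having to argue that the smoothing ends at the balanced sequence: adjacent exchanges alone get stuck at, e.g., $(3,2,1)$, so non-adjacent moves are implicitly needed there. You should also drop the abandoned $\min(r,M-j+1)$ detour from the write-up; it is valid but too weak, and it obscures the argument.
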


\begin{proof}
Note that $\cost(\sigma_C)=\sum_{i=1}^M i s_i$,
where $s_1\ge\cdots\ge s_M$ are integers in $[1,\Delta]$ with sum $m$. Among all such sequences, this weighted sum is maximized when the values are as balanced as possible. Indeed, if $s_i\ge s_{i+1}+2$, then replacing $(s_i,s_{i+1})$ by $(s_i-1,s_{i+1}+1)$ increases the weighted sum by $1$. Repeating this exchange yields $r$ entries equal to $k+1$ followed by $M-r$ entries equal to $k$. Therefore
\[
    \cost(\sigma_C)
    \le
    (k+1)\sum_{i=1}^{r}i
    +k\sum_{i=r+1}^{M}i
    =
    k\frac{M(M+1)}{2}+\frac{r(r+1)}{2}.
\]
\end{proof}

\begin{restatable}{claim}{lemmincoveroptlower}\label{lem:min-cover-opt-lower} As a lower bound, we have $\mathrm{msvc}(G) \ge \frac12\left(K^2+\frac{(m-K)^2}{\Delta-1}+m\right)$.
\end{restatable}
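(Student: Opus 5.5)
Here I read $K$ as the minimum vertex cover size $M=\tau(G)$ fixed in the proof of \Cref{thm:min-cover-star-ordering}. The plan is to lower-bound the cost of an optimal ordering via its star decomposition. Splitting each star into one "base" edge plus its remaining "extra" edges makes the two terms $K^2$ and $(m-K)^2/(\Delta-1)$ appear.

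\textbf{Star decomposition and base edges.} Fix an optimal ordering $\sigma^*$ and assign each edge to its earlier endpoint, as in the labeling discussion above. This gives $N$ nonempty stars. Their centers meet every edge, so they form a vertex cover and $N\ge K$.

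Let $t_1,\dots,t_N$ be the star sizes. Let $p_1<\dots<p_N$ be the positions of the star centers, so $p_i\ge i$ and the cost is at least $\sum_i i\,t_{(i)}$ for some arrangement of the sizes. By the rearrangement inequality we may assume $t_1\ge\dots\ge t_N$, which only decreases the sum. So $\mathrm{msvc}(G)\ge\sum_{i=1}^N i\,t_i$ with $1\le t_i\le\Delta$ and $\sum_i t_i=m$.

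Write $t_i=1+x_i$ with $0\le x_i\le\Delta-1$ and $X:=\sum_i x_i=m-N$. Then
\[
\sum_i i\,t_i=\frac{N(N+1)}{2}+\sum_i i\,x_i .
\]

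\textbf{Packing bound for the extra edges.} I would show $\sum_i i\,x_i\ge \frac12\bigl(X^2/(\Delta-1)+X\bigr)$. Since each $x_i\le\Delta-1$, list the $X$ extra units in index order. The $j$-th unit sits at an index at least $\lceil j/(\Delta-1)\rceil$, so $\sum_i i x_i\ge\sum_{j=1}^X\lceil j/(\Delta-1)\rceil$.

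Writing $X=q(\Delta-1)+\rho$ with $0\le\rho<\Delta-1$, this sum equals $(\Delta-1)q(q+1)/2+\rho(q+1)$. A direct computation shows it exceeds $\frac{X^2+(\Delta-1)X}{2(\Delta-1)}$ by exactly $\rho(\Delta-1-\rho)/(2(\Delta-1))\ge0$. Adding the two parts gives
\[
\mathrm{msvc}(G)\ge \tfrac12\Bigl(N^2+\tfrac{(m-N)^2}{\Delta-1}+m\Bigr)=:\tfrac12\bigl(f(N)+m\bigr).
\]

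\textbf{Replacing $N$ by $K$.} This is the main point, since $N$ is not under our control. The function $f$ is a convex quadratic with $f'(N)=2N-2(m-N)/(\Delta-1)$, so $f$ is nondecreasing for $N\ge m/\Delta$. Every vertex covers at most $\Delta$ edges, so $m\le\Delta K$, i.e.\ $K\ge m/\Delta$. Hence $N\ge K\ge m/\Delta$ gives $f(N)\ge f(K)$, which yields the claimed bound.
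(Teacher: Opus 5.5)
Your proof is correct and follows the paper's argument essentially step for step: the star decomposition of an optimal ordering, one base edge plus at most $\Delta-1$ extra edges per star, packing the extra units as far left as possible (you use the counting bound $\lceil j/(\Delta-1)\rceil$ where the paper uses an exchange argument, and you make the $p_i\ge i$ and rearrangement step explicit), the same remainder term $\rho(\Delta-1-\rho)/(2(\Delta-1))$, and monotonicity of the quadratic for arguments at least $m/\Delta$. One naming remark: in the paper $K$ denotes the number of nonempty stars of the optimal ordering (your $N$), and your bound with $\tau(G)$ is the paper's final step $f(K)\ge f(M)$, so you have proved both versions.
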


\begin{proof}
Let an optimal ordering induce $K$ nonempty stars, and order their sizes 
(i.e., number of edges) $a_1\ge\cdots\ge a_K$. Their centers form a vertex cover, so $K\ge M$. Also $K\ge m/\Delta$, and $M\ge m/\Delta$ because every star, and every vertex in a vertex cover, accounts for at most $\Delta$ edges.

Set $d:=\Delta-1$ and write $a_i=1+x_i$, where $0\le x_i\le d$ and $\sum_i x_i=m-K=:S$. The cost is $K(K+1)/2+\sum_i i x_i$. The second term is minimized by packing the extra units as far left as possible: if $i<j$, $x_i<d$, and $x_j>0$, moving one unit from $j$ to $i$ only decreases the cost. Thus, writing $S=qd+\rho$ with $0\le\rho<d$, we get
\[
    \sum_i i x_i
    \ge d\frac{q(q+1)}{2}+\rho(q+1)
    \ge \frac{S^2}{2d}+\frac S2,
\]
where the last inequality follows from
$dq(q+1)/2+\rho(q+1)-S^2/(2d)-S/2=\rho(d-\rho)/(2d)\ge0$. Hence any $K$-star partition has cost at least
\[
    f(K):=\frac12\left(K^2+\frac{(m-K)^2}{\Delta-1}+m\right).
\]
Finally, $f'(t)=(\Delta t-m)/(\Delta-1)$, so $f$ is nondecreasing for $t\ge m/\Delta$. Since $K\ge M\ge m/\Delta$, we have $\mathrm{msvc}(G)\ge f(K)\ge f(M)$.
\end{proof}

Put $d:=\Delta-1$ and use the division $m=kM+r$ from
Claim~\ref{lem:min-cover-alg-upper}. Let $\theta:=r/M\in[0,1)$. Then
$m=M(k+\theta)$ and $m-M=M(k-1+\theta)$. By Claim~\ref{lem:min-cover-alg-upper} and Claim~\ref{lem:min-cover-opt-lower}, we have 
\[
    2\cost(\sigma_C) \le M^2(k+\theta^2)+m \qquad\text{and}\qquad 2\,\mathrm{msvc}(G) \ge M^2(1+{(k-1+\theta)^2}/{d})+m.
\]
Therefore,
\[
    \frac{\cost(\sigma_C)}{\mathrm{msvc}(G)} \le \frac{k+\theta^2+m/M^2}{1+(k-1+\theta)^2/d+m/M^2} \le \frac{d(k+\theta^2)}{d+(k-1+\theta)^2} =: F_k(\theta).
\]
A direct derivative check shows that $F_k$ has no strict interior maximum on $[0,1]$. Hence the maximum is attained at an endpoint, and checking both endpoints proves that $F_k(\theta) \le R_\Delta$. Finally, $g(x)=dx/(d+(x-1)^2)$ satisfies $g'(x)=d(\Delta-x^2)/(d+(x-1)^2)^2$, so its  maximum is $g(\sqrt\Delta)=(\sqrt\Delta+1)/2$. Therefore $R_\Delta\le(\sqrt\Delta+1)/2$.
This completes the proof.
\end{proof}

For general graphs, the best approximation ratio that we use for comparison is the $16/9$-approximation of~\cite{BansalBFT21}. The following table gives the value of $R_\Delta$ for small values of $\Delta$, stopping at the first value that exceeds $16/9$. Thus the bound is better than $16/9$ for $\Delta\le6$.
\[
\renewcommand{\arraystretch}{1.3}
\begin{array}{c|@{\hspace{0.35cm}}ccccc}
\Delta     & 3 & 4 & 5 & 6 & 7 \\
\midrule
R_\Delta  & \frac{4}{3} & \frac{3}{2} & \frac{8}{5} & \frac{5}{3} & \frac{9}{5} \\
\end{array}
\]

\medskip\noindent\textbf{Tightness of Minimum-Cover Star Ordering}.
We now show that the approximation ratio in the \textsc{Minimum-Cover Star Ordering} algorithm cannot be improved when a minimum vertex cover is prescribed. Fix $\Delta\ge2$, put $d:=\Delta-1$, and choose $k\in\{1,\ldots,\Delta\}$ that maximizes $R_\Delta$.

For an integer $L\ge2$, 
let $\mathbb{Z}_L$ denote the integers modulo $L$. 
We will define a bipartite graph $G_L$ with vertex set $A \cup B$. 
Let $A=\{a_{\ell,t}:\ell\in\mathbb Z_L,\ 1\le t\le d\}$. The opposite side of the bipartition consists of high-degree and low-degree vertices:
\[
    B^{\mathrm H} = \{b^{\mathrm H}_{\ell,j}:\ell\in\mathbb Z_L,\ 1\le j\le k-1\},
    \qquad
    B^{\mathrm L} = \{b^{\mathrm L}_{\ell,t}:\ell\in\mathbb Z_L,\ k\le t\le d\}.
\]
Empty index ranges are allowed. Let $B=B^{\mathrm H}\cup B^{\mathrm L}$. Then $|A|=|B|=dL=:M$, and the number of high-degree vertices is $q:=|B^{\mathrm H}|=(k-1)L$. The edge set consists of three families. First, for every $\ell\in\mathbb Z_L$, every $1\le t\le d$, and every $1\le j\le k-1$, add $\{a_{\ell,t},b^{\mathrm H}_{\ell,j}\}$. Second, for every $\ell\in\mathbb Z_L$ and $1\le j\le k-1$, add $\{b^{\mathrm H}_{\ell,j},a_{\ell+1,j}\}$. Third, for every $\ell\in\mathbb Z_L$ and $k\le t\le d$, add $\{b^{\mathrm L}_{\ell,t},a_{\ell,t}\}$. Since $L\ge2$, the second family does not duplicate edges from the first family, so the graph is simple. Let the resulting graph be~$G_L$.

\begin{restatable}{proposition}{thmmincovertight}\label{thm:min-cover-tight}
The graph $G_L$ is bipartite, has maximum degree at most $\Delta$, and has $A$ as a minimum vertex cover. If the \textsc{Minimum-Cover Star Ordering} is applied to this specified vertex cover $A$, then $\lim_{L\to\infty} \frac{\cost(\sigma_A)}{\mathrm{msvc}(G_L)} =R_\Delta$.
\end{restatable}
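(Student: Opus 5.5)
The plan is to compute both sides of the ratio explicitly. The structure of $G_L$ makes the algorithm's cost exactly computable and gives an explicit ordering whose cost matches the lower bound of Claim~\ref{lem:min-cover-opt-lower} asymptotically. The final $\le$ direction comes from \Cref{thm:min-cover-star-ordering}, so we only need to exhibit a cheap ordering for $G_L$.

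\emph{Step 1: structure of $G_L$.} First I list the degrees.
\begin{itemize}
\item Each $a_{\ell,t}$ has the $k-1$ neighbours $b^{\mathrm H}_{\ell,j}$. It has exactly one further neighbour: $b^{\mathrm H}_{\ell-1,t}$ if $t\le k-1$, or $b^{\mathrm L}_{\ell,t}$ if $t\ge k$. So $\deg(a_{\ell,t})=k\le\Delta$.
\item Each $b^{\mathrm H}_{\ell,j}$ is adjacent to the $d$ vertices $a_{\ell,t}$ and to $a_{\ell+1,j}$, so it has degree $d+1=\Delta$.
\item Each $b^{\mathrm L}$ vertex has degree $1$.
\end{itemize}
Bipartiteness is immediate, since all edges join $A$ to $B$, and we get $m=kM$. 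Minimality of $A$ follows from a perfect matching of size $M$: match $b^{\mathrm H}_{\ell,j}$ with $a_{\ell+1,j}$, and $b^{\mathrm L}_{\ell,t}$ with $a_{\ell,t}$. This covers every $a$ exactly once, so $\tau(G_L)\ge M=|A|$.

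\emph{Step 2: the algorithm's cost.} Because $A$ is independent, the uncovered degree of a vertex in $A$ never drops when other vertices of $A$ are processed. Hence, regardless of tie-breaking, every star has exactly $k$ edges, and $\cost(\sigma_A)=kM(M+1)/2=k d^2L^2/2+O(L)$.

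\emph{Step 3: a cheap ordering.} Take all of $B^{\mathrm H}$ first, with $q=(k-1)L$ vertices. These vertices are pairwise nonadjacent and have degree $\Delta$, so they contribute $\Delta q(q+1)/2$ and cover every edge at an $a_{\ell,t}$ with $t\le k-1$. The only remaining edges are the $(d-k+1)L$ pendant edges at the $b^{\mathrm L}$ vertices. Cover these next, one per step, at positions $q+1,\dots$; their contribution is $q(d-k+1)L+\frac{((d-k+1)L)^2}{2}+O(L)$. Summing the two contributions and using $(k-1)^2+2(k-1)(d-k+1)+(d-k+1)^2=d^2$ gives
\[
\mathrm{msvc}(G_L)\le \tfrac{L^2}{2}\bigl(d^2+d(k-1)^2\bigr)+O(L)=\tfrac{dL^2}{2}\bigl(d+(k-1)^2\bigr)+O(L).
\]
Hence $\liminf_{L\to\infty}\cost(\sigma_A)/\mathrm{msvc}(G_L)\ge kd/(d+(k-1)^2)=R_\Delta$, by the choice of $k$. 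Combined with the upper bound $R_\Delta$ from \Cref{thm:min-cover-star-ordering}, the limit exists and equals $R_\Delta$.

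The main obstacle is the bookkeeping in Step 3: checking that the ordering covers every edge exactly as claimed, in particular that the "wrap-around" edges $\{b^{\mathrm H}_{\ell,j},a_{\ell+1,j}\}$ are all covered within the $B^{\mathrm H}$ prefix. The edge cases $k=1$ (where $B^{\mathrm H}$ is empty) and $k=\Delta$ (where $B^{\mathrm L}$ is empty) also need to be handled within the same computation.
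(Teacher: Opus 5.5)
Your proposal is correct and follows essentially the same route as the paper. Like the paper, you use the perfect matching formed by the second and third edge families to show that $A$ is a minimum cover. You use independence of $A$ to force every algorithmic star to have size $k$, so $\cost(\sigma_A)=kM(M+1)/2$. You then use the ordering that lists $B^{\mathrm H}$ first and the low vertices after, of cost $\frac12\bigl(M(M+1)+dq(q+1)\bigr)$, to get the $\liminf$ bound, and the matching upper bound comes from \Cref{thm:min-cover-star-ordering}. Your asymptotic computation via $(k-1)^2+2(k-1)(d-k+1)+(d-k+1)^2=d^2$ is just an expanded form of the paper's exact sum $\Delta\sum_{i\le q}i+\sum_{q<i\le M}i$.
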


\begin{proof}
The graph is bipartite with sides $A$ and $B$. From the construction, every $a_{\ell,t}$ has degree $k$, every high vertex has degree $d+1=\Delta$, and every low vertex has degree one. Hence $\Delta(G_L)\le\Delta$.

The second and third edge families form a perfect matching between $A$ and $B$, so every vertex cover has size at least $M$. Since both sides of the bipartition are vertex covers of size $M$, the set $A$ is a minimum vertex cover.

If the algorithm is applied to this prescribed cover $A$, then no edge is covered before its endpoint in $A$ is processed, because $A$ is independent. Thus every algorithmic star has size $k$, and
\begin{equation}\label{eq:tight-alg-cost}
    \cost(\sigma_A)=k\sum_{i=1}^M i=k\frac{M(M+1)}{2}.
\end{equation}
On the other hand, using $B$ as the set of star centers and placing the $q$ high-degree vertices first gives a feasible labeling of cost
\begin{equation}\label{eq:tight-B-cost}
    \mathcal W_B
    =\Delta\sum_{i=1}^{q}i+\sum_{i=q+1}^{M}i
    =\frac12\bigl(M(M+1)+d q(q+1)\bigr).
\end{equation}
Therefore $\mathrm{msvc}(G_L)\le\mathcal W_B$, and with $M=dL$ and $q=(k-1)L$, we get
\[
    \liminf_{L\to\infty}
    \frac{\cost(\sigma_A)}{\mathrm{msvc}(G_L)}
    \ge
    \frac{kd}{d+(k-1)^2}
    =R_\Delta.
\]
By \Cref{thm:min-cover-star-ordering}, since $A$ is a minimum vertex cover and $\Delta(G_L) \le \Delta$, we have that $\cost(\sigma_A)/\mathrm{msvc}(G_L)$ is at most $R_\Delta$ for every $L$. Combined with the preceding lower bound, the limit is exactly $R_\Delta$.
\end{proof}

\section{MSVC on \texorpdfstring{$d$}{d}-Regular Graphs} \label{sec:regular}

In this section we prove that MSVC admits a $1.184$-approximation on $d$-regular graphs. The algorithm is a modification of the regular-graph algorithm discussed by Feige et al.~\cite{FLT04} and Stankovi\'c~\cite{Stankovic25}. The main difference is that we run a Max-$k$-Vertex-Cover\footnote{Max-$k$-Vertex Cover is the problem of finding $k$ vertices that cover the largest number of edges in a graph.} algorithm for every integer value of $k$, rather than fixing $k=n/2$ or attempting to guess the value of the optimum. Moreover, our structure lemma (\Cref{lem:coverage}) gives a lower bound on the number of edges that an optimal ordering covers at every prefix.

Let $G=(V,E)$ be a $d$-regular graph, let $n=|V|$, and let $m=|E|=nd/2$. We assume that $d\ge 1$, since the problem is trivial when $E=\varnothing$. Following the notation of~Stankovi\'c~\cite{Stankovic25}, for an ordering $\sigma:\{1,2,\dots,n\}\to V$, define $\SVC(\sigma) := \frac{1}{m}\sum_{\{u,v\}\in E} \min\{\sigma^{-1}(u),\sigma^{-1}(v)\}$.
This is the average edge-cover time. Multiplying this quantity by $m$ gives the usual MSVC objective, so the normalization does not change approximation ratios. For $i=0,\ldots,n$, let $u_i^\sigma$ be the number of edges with both endpoints in $V\setminus \sigma(\{1,\dots,i\})$. Thus $u_0^\sigma=m$, 
$u_n^\sigma=0$, and
\begin{equation}\label{eq:svc-uncovered}
    \SVC(\sigma)=\frac{1}{m}\sum_{i=0}^{n-1}u_i^\sigma.
\end{equation}
Indeed, an edge covered at step $j$ is counted among the uncovered edges after steps $0,1,\ldots,j{-}1$, and is therefore counted exactly $j$ times in the right-hand side of~\eqref{eq:svc-uncovered}.

Let $\sigma^\star$ be an optimal ordering, write $u_i=u_i^{\sigma^\star}$, and let $\OPT=\SVC(\sigma^\star)$. Define $c_i=u_{i-1}-u_i$, the number of new edges covered at step $i$.
The following observation is a standard exchange argument~\cite{FLT04, Stankovic25}. See \Cref{fig:cover}(a).

\begin{restatable}{observation}{obsnonincreasing}\label{obs:nonincreasing}
For every optimal ordering $\sigma^\star$, the sequence $c_1,c_2,\ldots,c_n$ is
non-increasing.
\end{restatable}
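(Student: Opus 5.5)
The plan is a local exchange argument on two consecutive positions of an optimal ordering. Suppose, for contradiction, that for some optimal $\sigma^\star$ and some index $i$ we have $c_i<c_{i+1}$. Let $x=\sigma^\star(i)$ and $y=\sigma^\star(i+1)$, and let $P=\sigma^\star(\{1,\ldots,i-1\})$ be the set of vertices placed before them. Form $\sigma'$ from $\sigma^\star$ by swapping $x$ and $y$. All other positions are unchanged, so only the cover times of edges first covered at steps $i$ and $i+1$ can change.

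By \eqref{eq:svc-uncovered} the cost equals $\sum_{j=0}^{n-1}u_j$, and the prefix sets of $\sigma'$ and $\sigma^\star$ agree for every $j\neq i$. Hence
\[
m\bigl(\SVC(\sigma^\star)-\SVC(\sigma')\bigr)=u_i^{\sigma^\star}-u_i^{\sigma'}.
\]
It therefore suffices to compare the numbers of edges left uncovered after removing $P\cup\{x\}$ and after removing $P\cup\{y\}$.

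Let $a$ be the number of edges from $x$ to $V\setminus(P\cup\{x,y\})$, let $b$ be the analogous number for $y$, and let $e\in\{0,1\}$ indicate whether $xy$ is an edge. In $\sigma^\star$ we get $c_i=a+e$ and $c_{i+1}=b$. In $\sigma'$, vertex $y$ comes first and newly covers $b+e$ edges. Then
\[
u_i^{\sigma^\star}=u_{i-1}-a-e \quad\text{and}\quad u_i^{\sigma'}=u_{i-1}-b-e,
\]
so the difference is $b-a=c_{i+1}-c_i+e$.

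If $c_{i+1}>c_i$, this difference is at least $1$, so $\sigma'$ is strictly cheaper, which contradicts the optimality of $\sigma^\star$. Thus $c_i\ge c_{i+1}$ for every $i$.

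The only delicate step is the bookkeeping for the edge $xy$. It counts toward whichever of $x$ and $y$ comes first, and it only strengthens the inequality. The term $u_i$ is the sole changed summand, so no further obstacle is expected.
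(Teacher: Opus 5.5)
Your proof is correct and follows the paper's argument: swap the vertices at positions $i$ and $i+1$, note that only the prefix after step $i$ changes, and account for the possible edge $xy$ to show that $u_i$ strictly drops by $c_{i+1}-c_i+e\ge 1$. Your explicit computation $u_i^{\sigma^\star}-u_i^{\sigma'}=b-a$ spells out in a little more detail the bookkeeping the paper states in one line.
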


\begin{proof}
Suppose that $c_i<c_{i+1}$ for some $i$. Let $x=\sigma^\star(i)$ and
$y=\sigma^\star(i+1)$, and let $a\in\{0,1\}$ indicate whether $\{x,y\}\in E$.
If $x$ and $y$ are swapped, then $y$ covers $c_{i+1}+a$ edges at step $i$,
whereas $x$ covers $c_i-a$ edges at step $i+1$. After step $i+1$ the same
vertices have been selected, so all later uncovered-edge counts are unchanged.
Since $c_{i+1}+a>c_i$, the swap strictly decreases $u_i$ and hence strictly
decreases the objective, contradicting optimality.
\end{proof}

We first record the range in which the optimum lies. Since a vertex covers at
most $d$ new edges, $u_i\ge m-id$ for every $i$. Summing the positive terms of
this inequality in~\eqref{eq:svc-uncovered} gives
$\OPT\ge n/4+1/2$. On the other hand, in a uniformly random ordering each
edge has expected cover time $(n+1)/3$, and hence $\OPT\le(n+1)/3$. Therefore
there is a value $\delta\in[0,1/12]$ such that
\begin{equation}\label{eq:opt-delta}
    \OPT=\frac{n}{4}+\frac{1}{2}+\delta n.
\end{equation}
The additive $1/2$ in~\eqref{eq:opt-delta} is useful for keeping the following
structure lemma exact for finite $n$. See \Cref{fig:cover}(a) for an illustration of the following structure lemma.

\begin{restatable}{lemma}{lemcoverage}\label{lem:coverage}
Let $G$ be a $d$-regular graph whose optimum is as in~\eqref{eq:opt-delta}.
Set $\tau=(1-\sqrt{\delta})/2$, let $r=\lceil \tau n\rceil$, and let
$s=u_r$ be the number of edges left uncovered by $\sigma^\star$ after step
$r$. Then $s\le \varphi\sqrt{\delta}\,m$, where
$\varphi=(1+\sqrt{5})/2$. Equivalently, the first $r$ vertices of the optimal
ordering cover at least $(1-\varphi\sqrt{\delta})m$ edges.
\end{restatable}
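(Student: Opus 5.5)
The plan is to lower-bound $\OPT$ in terms of $s=u_r$ using only the convexity of the sequence $u_0,u_1,\ldots,u_n$, and then compare this bound with~\eqref{eq:opt-delta}. By \Cref{obs:nonincreasing} the increments $c_i=u_{i-1}-u_i$ are non-increasing, so $i\mapsto u_i$ is convex. Let $c:=c_{r+1}$ (or $c_r$; I will pick whichever makes the bookkeeping cleanest). I will use three facts:
\[
  u_i\ge m-id \quad (\text{each step covers at most } d \text{ edges}),
\]
\[
  u_i\ge s+(r-i)c \quad (i\le r, \text{ since } c_i\ge c),
\]
\[
  u_{r+j}\ge s-jc \quad (j\ge0, \text{ since } c_{r+j}\le c).
\]
Summing in~\eqref{eq:svc-uncovered} gives
\[
  m\,\OPT\ \ge\ \sum_{i=0}^{r-1}\max\{m-id,\ s+(r-i)c\}\ +\ \sum_{j\ge0}\max\{s-jc,0\}.
\]
The tail sum is about $s^2/(2c)+s/2$.

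Next I normalise. Write $m=nd/2$, $x:=s/m$, $\gamma:=c/d\in(0,1]$ and $r\approx\tau n$. Dividing by $m$ turns the bound into a lower bound of the form $n\cdot\Phi(x,\gamma,\tau)+(\text{lower-order terms})$. The two head terms are straight lines in $i$ that cross at some point $i_0$, so the head sum is the area under the upper envelope of two lines and can be computed in closed form. Two natural regimes appear. If the lines do not cross before $r$, then $m-s\le rd$ bounds the head by the line $m-id$ alone. Otherwise, $m-id$ dominates up to $i_0$ and $s+(r-i)c$ dominates after it. I will then minimise $\Phi$ over $\gamma$; the optimum should balance the $\gamma$-dependence of the head against the $x^2/\gamma$ behaviour of the tail. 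Setting the resulting minimum $\Phi^\ast(x,\tau)\le \tfrac14+\delta$ gives the inequality that should yield the bound.

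With $\tau=(1-\sqrt\delta)/2$, I expect the inequality $\Phi^\ast(x,\tau)\le\tfrac14+\delta$ to reduce to a quadratic in $x/\sqrt\delta$ of the shape $(x/\sqrt\delta)^2-(x/\sqrt\delta)-1\le0$, which gives $x\le\varphi\sqrt\delta$ with $\varphi=(1+\sqrt5)/2$. This is consistent with the golden ratio appearing in the statement. The intuition is as follows. The bound $\sum_i\max\{m-id,0\}\approx m(n/4+1/2)$ is exactly the trivial lower bound, so the excess $\delta n m$ of $\OPT$ over it must pay both for the "wasted" area in the head (where $u_i$ exceeds $m-id$) and for the tail area after step $r$. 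Leaving $s$ edges past step $r=\tau n$ forces an excess of order $s^2/m$ plus a linear term in $s\sqrt\delta$, and the choice of $\tau$ is what makes the golden-ratio quadratic appear.

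The main obstacle is making this exact for finite $n$ rather than only asymptotically. Three sources of rounding have to be handled: the ceiling in $r=\lceil\tau n\rceil$, the integrality of the crossing point $i_0$, and the discrete tail sum $\sum_j\max\{s-jc,0\}$ compared with $s^2/(2c)$. My first attempt will be to replace each sum by the corresponding trapezoid or integral and keep every $+\tfrac12$ correction term. The additive $\tfrac12$ built into~\eqref{eq:opt-delta} is meant to absorb exactly the $\sum_i(m-id)=m(n+2)/4$ discretisation, so after cancelling it the remaining error terms should all have a favourable sign. If some term does not, I will fall back to proving the inequality in continuous form for a convex piecewise-linear interpolation of $(u_i)$. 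That interpolation satisfies the same three constraints, and its integral lower-bounds the discrete sum up to precisely the $m/2$ term already accounted for.
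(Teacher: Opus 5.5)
Your plan is correct and is essentially the paper's proof. It uses the same three linear lower bounds on $u_i$ from \Cref{obs:nonincreasing}. The area under their envelope gives $\OPT\ge \frac n4+\frac12+\frac{(2s-ncq)^2}{4nc(d-c)}$ with $q=1-2r/n\le\sqrt\delta$. Your optimization over $c/d$, in its discriminant form $y^2-y-1\le0$, is equivalent to the paper's maximization of $x+2\sqrt{x(1-x)}$, whose maximum value is $\varphi$.

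The paper makes the finite-$n$ bound exact by tracking fractional parts explicitly; your interpolation fallback, via $\sum_{i=0}^{n-1}u_i=\int_0^n U+\frac m2$, gives the same bound more cleanly. Your ``no crossing'' regime never occurs, because $rc\le m-s\le rd$ forces the crossing point into $[0,r]$, and the degenerate cases $c\in\{0,d\}$ are handled directly.
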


\begin{proof}
Let $q=1-2r/n$, and let $c=c_r$. Since $r\ge \tau n$, we have
$q\le 1-2\tau=\sqrt{\delta}$. By Observation~\ref{obs:nonincreasing}, for
every $i=0,\ldots,n$ we have
\begin{equation}\label{eq:ui}
    u_i\ge \max\bigl\{m-id,\ s+(r-i)c,\ 0\bigr\}.
\end{equation}
The first bound holds because at most $d$ edges can be covered at each step,
and the third bound is immediate. For the second bound, if $i\le r$, then
$u_i-u_r=\sum_{j=i+1}^{r}c_j\ge(r-i)c$, while if $i\ge r$, then
$u_r-u_i=\sum_{j=r+1}^{i}c_j\le(i-r)c$.

If $c=d$, then $c_j=d$ for every $j\le r$, so
$s=m-rd=mq\le\sqrt{\delta}\,m$. If $c=0$, then $c_j=0$ for every $j>r$, so
$s=u_r-u_n=0$. We may therefore assume that $0<c<d$.

\begin{claim}\label{clm:cost}
Under the preceding assumptions,
\[
    \OPT \ge \frac{n}{4}+\frac{1}{2} + \frac{(2s-ncq)^2}{4nc(d-c)}.
\]
\end{claim}

\begin{proof}
For real $x$, define $L_1(x)=m-dx$, $L_2(x)=s+(r-x)c$, and $L_3(x)=0$.
Let $a$ be the unique real number satisfying $L_1(a)=L_2(a)$, and let $b$ be
the unique real number satisfying $L_2(b)=0$. Since $0<c<d$, these values are
\[
    a=\frac{m-s-rc}{d-c}
    \qquad\text{and}\qquad
    b=r+\frac{s}{c}.
\]
We have $0\le a\le r\le b\le n$. Indeed,
$m-s=\sum_{j=1}^{r}c_j\ge rc$ gives $a\ge0$, while $m-s\le rd$ gives
$a\le r$. Moreover, $s=\sum_{j=r+1}^{n}c_j\le(n-r)c$ gives $b\le n$.
Consequently, the pointwise maximum in~\eqref{eq:ui} is $L_1$ before $a$,
$L_2$ between $a$ and $b$, and $L_3$ after $b$.

Set $p=\lfloor a\rfloor$, $\ell=\lfloor b\rfloor$,
$\alpha=a-p$, and $\gamma=b-\ell$. Since $u_n=0$, equations
~\eqref{eq:svc-uncovered} and~\eqref{eq:ui} imply
\begin{equation*}
    m\cdot\OPT =\sum_{i=0}^{n}u_i \ge
    \sum_{i=0}^{p}(m-id)
    +
    \sum_{i=p+1}^{\ell}\bigl(s+(r-i)c\bigr).
\end{equation*}
The two arithmetic sums are
\[
    \sum_{i=0}^{p}(m-id)=(p+1)m-\frac{d}{2}p(p+1)
\]
and
\[
    \sum_{i=p+1}^{\ell}\bigl(s+(r-i)c\bigr)
    =
    (\ell-p)(s+rc)
    -
    \frac{c}{2}\bigl(\ell(\ell+1)-p(p+1)\bigr).
\]
Let $M=L_1(a)=L_2(a)=m-da$. Since $b=a+M/c$ and $s+rc=M+ac$, substituting $p=a-\alpha$ and $\ell=b-\gamma$ into the preceding sums gives
\begin{align*}
    m\OPT
    \ge{}&
    am-\frac{da^2}{2}+\frac{(m-da)^2}{2c}+\frac{m}{2} \\
    &\quad+
    \frac{d-c}{2}\alpha(1-\alpha)
    +
    \frac{c}{2}\gamma(1-\gamma).
\end{align*}
The final two terms are non-negative. Writing $a=tn$ and using $m=nd/2$,
we obtain
\[
    \OPT
    \ge
    \frac{n}{4}+\frac{1}{2}
    +
    \frac{n(d-c)}{c}\left(\frac{1}{2}-t\right)^2.
\]
Finally, the equation $L_1(tn)=L_2(tn)$ gives
\[
    1-2t
    =
    \frac{2s-ncq}{n(d-c)}.
\]
Substituting this identity into the previous lower bound proves the claim.
\end{proof}

Combining Claim~\ref{clm:cost} with~\eqref{eq:opt-delta} gives
$(2s-ncq)^2\le 4\delta n^2c(d-c)$, and hence
$2s\le ncq+2n\sqrt{\delta c(d-c)}$. Set $x=c/d$. Since $2m=nd$ and
$q\le\sqrt{\delta}$, division by $nd$ yields
\[
    \frac{s}{m}
    \le
    qx+2\sqrt{\delta x(1-x)}
    \le
    \sqrt{\delta}\bigl(x+2\sqrt{x(1-x)}\bigr).
\]
The function $x+2\sqrt{x(1-x)}$ is maximized on $[0,1]$ at
$x=(1+\sqrt{5})/(2\sqrt{5})$, where its value is $\varphi$. Therefore
$s\le\varphi\sqrt{\delta}\,m$.
\end{proof}

\begin{figure}
    \centering
    \includegraphics[width=0.95\linewidth]{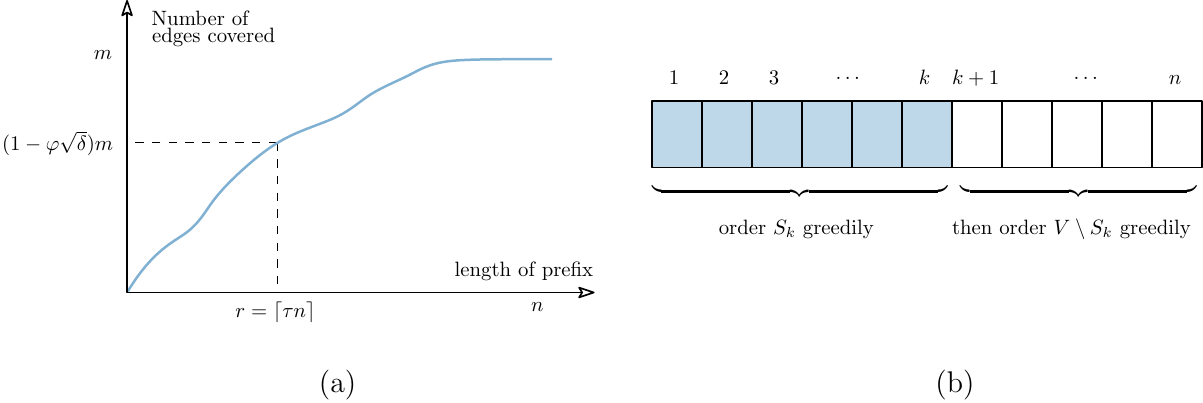}
    \caption{(a) Illustration of the number of edges covered by an optimal solution in a given prefix. Note that in an optimal solution the number of newly covered edges should decrease as the prefix grows. (b) Step~(b) of the \textsc{MSVC-Regular} algorithm.}
    \label{fig:cover}
\end{figure}

\noindent\textbf{The algorithm.}
We use the following approximation guarantee for Max-$k$-Vertex-Cover.

\begin{theorem}[Raghavendra and Tan~\cite{RT12}, Austrin and Stankovi\'c~\cite{AS19}]\label{thm:max-k-vc}
There is a polynomial-time algorithm which, given an edge-weighted graph and an integer $k$, outputs a set $S\subseteq V$ of size $k$ whose covered edge weight is at least $\beta$ times the optimum, where $\beta\ge0.929$.
\end{theorem}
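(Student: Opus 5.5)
This statement is imported from the literature, so the plan is to reconstruct the argument of Raghavendra and Tan~\cite{RT12}, with the sharper rounding analysis of Austrin and Stankovi\'c~\cite{AS19}, rather than derive it from anything proved earlier in this paper. The approach has three steps: solve a strengthened semidefinite relaxation, round it with a bias-dependent threshold rule, and then repair the cardinality constraint.

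\textbf{The relaxation.} I would encode Max-$k$-Vertex-Cover with $\pm1$ variables $x_v$ and a special vector $v_0$, where $x_v=1$ means $v\in S$. An edge $\{u,v\}$ of weight $w_{uv}$ contributes $w_{uv}\bigl(3+x_u+x_v-x_ux_v\bigr)/4$. I would relax this to vectors $\mathbf v_u$, with $x_u\mapsto\langle \mathbf v_0,\mathbf v_u\rangle$ and $x_ux_v\mapsto\langle\mathbf v_u,\mathbf v_v\rangle$. Three constraints are added:
\begin{itemize}
\item the cardinality constraint $\sum_u\langle\mathbf v_0,\mathbf v_u\rangle=2k-n$;
\item the triangle inequalities involving $\mathbf v_0$;
\item a constant number $r=r(\varepsilon)$ of Lasserre (or Sherali--Adams plus SDP) rounds, so that local distributions on any $O(r)$ vertices exist.
\end{itemize}
The optimum of this program upper-bounds the true optimum and can be computed in polynomial time for fixed $r$.

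\textbf{Rounding.} The first ingredient is global correlation reduction, as in Raghavendra--Tan. I would sample $O(r)$ vertices and condition the Lasserre solution on their values according to the local distribution. A potential argument then shows that, for a good choice, the average correlation between the remaining vertices is at most $\varepsilon$, while the expected objective and the cardinality constraint are preserved.

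After conditioning, each vertex $u$ has a bias $\mu_u=\langle\mathbf v_0,\mathbf v_u\rangle$. I would round with a threshold rule: pick a common Gaussian vector $g$, and put $u$ in $S$ iff $\langle g,\mathbf w_u\rangle\ge f(\mu_u)$. Here $\mathbf w_u$ is the component of $\mathbf v_u$ orthogonal to $\mathbf v_0$, normalised, and $f$ is a carefully chosen threshold function; Austrin--Stankovi\'c tune $f$ and allow a mixture with independent rounding. Because correlations are low, $|S|$ concentrates around $\sum_u(1+\mu_u)/2=k$, up to $\varepsilon n$. The final step adds or removes $O(\varepsilon n)$ vertices, chosen greedily by marginal coverage, which loses only an $O(\varepsilon)$ fraction of the objective when $k=\Omega(n)$. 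When $k=o(n)$, the greedy $(1-1/e)$ argument alone is too weak, so that regime needs separate handling: either by direct enumeration when $k$ is constant, or by scaling the loss relative to $k$.

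\textbf{Analysis and main obstacle.} The ratio is analysed edge by edge. An edge's configuration after conditioning is described by $(\mu_u,\mu_v,\rho_{uv})$, subject to the triangle inequalities. The probability that the rounding covers the edge is a bivariate-Gaussian expression in these parameters, and the guarantee is the infimum over feasible configurations of (covering probability)$/$(SDP contribution).

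I expect the main obstacle to be showing that this infimum is at least $0.929$ (minus $\varepsilon$). This is a non-convex three-parameter optimisation and cannot be handled in closed form. I would attack it by discretising the domain and bounding the Gaussian CDF terms on each cell via Lipschitz estimates, which is essentially the computer-assisted verification that Austrin and Stankovi\'c carry out. Weights pose no additional difficulty, since both the SDP and the per-edge analysis are linear in $w$.
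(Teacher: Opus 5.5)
The paper does not prove this statement. It imports it from \cite{RT12,AS19}, noting only that Max-$k$-Vertex-Cover is the monotone case of CC-Max-2SAT, and that $0.929$ comes from a sharper analysis of the \emph{same} Raghavendra--Tan algorithm. Your pipeline (Lasserre lifting, conditioning to reduce global correlation, Gaussian threshold rounding, repair, numerical per-edge analysis) is that algorithm. However, two steps do not go through as written.

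The first is the rounding. Your concentration claim $|S|\approx\sum_u(1+\mu_u)/2=k$ requires $\Pr[u\in S]=(1+\mu_u)/2$ for every $u$. This is also what bounds the covariances of the rounded indicators by the SDP covariances, so that low global correlation transfers to $|S|$. For a rule $\langle g,\mathbf w_u\rangle\ge f(\mu_u)$ this forces $f(\mu)=\Phi^{-1}\bigl((1-\mu)/2\bigr)$, so there is nothing to tune. A tuned threshold of the kind that gives $\approx0.940$ for unconstrained Max-2-SAT shifts $\mathbb{E}|S|$ away from $k$, in general by $\Theta(n)$, and the repair then costs a constant fraction of the objective. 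So $0.929$ must be certified for the bias-preserving rule; mixtures of bias-preserving rules are fine, but a retuned $f$ is not. Relatedly, the cardinality constraint must be imposed in lifted form, $\tilde{\mathbb{E}}\bigl[(\sum_u x_u-(2k-n))\,p\bigr]=0$ for all low-degree $p$. With only $\sum_u\langle\mathbf v_0,\mathbf v_u\rangle=2k-n$, conditioning preserves it only on average over the sampled outcome, not for the outcome you actually condition on.

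The second gap is small $k$, which you flag but do not resolve, and it is not a matter of rescaling. Low global correlation gives only $|S|=k\pm O(\sqrt{\varepsilon}\,n)$, so the repair may lose an $O(\sqrt{\varepsilon}\,n/k)$ fraction. Because the rounded value and $|S|$ are correlated, these losses are naturally charged against the total edge weight, which for small $k$ can far exceed the optimum. Enumeration handles only constant $k$, so for example $k=\sqrt n$ is not covered. The missing idea is Manurangsi's approximate kernel \cite{M19}: some $k$-subset of the $O(k/\varepsilon)$ vertices of largest weighted degree is within a factor $1-\varepsilon$ of optimal. Drop edges with no endpoint in this set, and turn edges leaving it into unary clauses. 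This yields a CC-Max-2SAT instance on $n'=O(k/\varepsilon)$ variables with $k=\Omega(\varepsilon n')$ and optimum an $\Omega(\varepsilon)$ fraction of its total weight. There the linear-regime argument applies with conditioning accuracy $\varepsilon'\ll\varepsilon$. For this paper the gap is immaterial, since the proof of Theorem~\ref{thm:regular} only invokes the statement with $k=r\ge(1-1/\sqrt{12})\,n/2$. The statement as written, however, covers every $k$.
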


\noindent\emph{Remark.}
Max-$k$-Vertex-Cover is the monotone special case of CC-Max-2SAT. Raghavendra and Tan \cite{RT12} state the rounded guarantee $0.92$; a more detailed analysis of the same algorithm gives the value approximately $0.929$ \cite{AS19} (See also~\cite{JainKPSSSU26}). We use the same worst-case approximation factor $\beta$ for all cardinalities $k$. This is deliberate, although the balanced case $k=n/2$ admits a better guarantee and was the split used in the earlier regular-graph analysis~\cite{Stankovic25}. Our proof needs the set size $r=\lceil(1-\sqrt{\delta})n/2\rceil$, which may be smaller than $n/2$ and is not known to the algorithm. The improvement below comes from trying all values of
$k$ and then applying Lemma~\ref{lem:coverage}, rather than from improving the Max-$k$-Vertex-Cover~subroutine.

\medskip
Fix the constant $n_0=6000$. The exhaustive-search step below is included only to remove finite-size rounding terms from the final approximation ratio. Since $n_0$ is a fixed constant, this does not affect polynomial-time solvability.

\begin{center}
\noindent\fbox{
\begin{minipage}{0.97\textwidth}
\textbf{\textsc{MSVC-Regular Algorithm}$(G)$.}
\begin{enumerate}[label=\arabic*.,leftmargin=2em,itemsep=0.25em,topsep=0.3em]
    \item If $n<n_0$, enumerate all vertex orderings and return an optimal one.
    \item Run the greedy MSVC algorithm and let $\sigma_{\mathrm{greedy}}$ be
    its output.
    \item For every integer $k\in\{1,2,\ldots,n\}$:
    \begin{enumerate}[label=\alph*.,leftmargin=1.5em,itemsep=0.15em,topsep=0.2em]
        \item Run the $\beta$-approximate Max-$k$-Vertex-Cover algorithm on
        $G$ and obtain a set $S_k\subseteq V$ with $|S_k|=k$.
        \item Construct $\sigma_k$ by derandomizing, via conditional expectation, a uniformly random permutation of $S_k$ followed by one of $V\setminus S_k$; see Lemma~\ref{prop:two-phase} and \Cref{fig:cover}(b).
    \end{enumerate}
    \item Return the minimum-cost ordering among $\sigma_{\mathrm{greedy}}$
    and $\{\sigma_k:k\in[n]\}$.
\end{enumerate}
\end{minipage}}
\end{center}

Here $\sigma_{\mathrm{greedy}}$ is the ordering produced by the standard greedy rule for MSVC: beginning from the empty prefix, repeatedly append a vertex incident to the largest number of still uncovered edges, breaking ties arbitrarily, and delete the edges it covers. The analysis of the following proposition adapts a similar argument from the analysis of the greedy algorithm in~\cite{FLT04} to the setting of step~(b) of the algorithm described above.

\begin{restatable}{lemma}{proptwophase}\label{prop:two-phase}
Let $S\subseteq V$ have size $k$, and suppose that $S$ covers a fraction $w$ of the edges. There is a polynomial-time computable ordering $\sigma_S$ that places $S$ first and $V\setminus S$ second and satisfies $\SVC(\sigma_S) \le
 (n+2k+1-k(k+1)/n-(n-1)w)/3$.
\end{restatable}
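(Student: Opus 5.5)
The plan is to analyze the random two-phase ordering exactly in expectation and then derandomize it. We place a uniformly random permutation of $S$ in positions $1,\ldots,k$ and a uniformly random permutation of $V\setminus S$ in positions $k+1,\ldots,n$. We then compute $\mathbb E[\SVC]$ edge by edge, using linearity of expectation. Throughout, $G$ is $d$-regular as in this section; this is what lets the bound depend only on $w$, $k$ and $n$.

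\textbf{Step 1 (edge classes).} Let $e_1$ be the number of edges with both endpoints in $S$, $e_2$ the number with exactly one endpoint in $S$, and $e_3$ the number with no endpoint in $S$. Then $e_1+e_2=wm$ and $e_3=(1-w)m$. Regularity gives $kd=2e_1+e_2$. Solving these relations yields $e_2=2wm-kd$ and $e_1=kd-wm$. Using $m=nd/2$, the ratios $e_1/m$ and $e_2/m$ become functions of $w$ and $k/n$ alone.

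\textbf{Step 2 (expected cover times).} For an edge inside $S$, the cover time is the minimum of two distinct uniform positions in $\{1,\ldots,k\}$, whose expectation is $(k+1)/3$. For an edge with one endpoint in $S$, the cover time is the position of that endpoint, with expectation $(k+1)/2$. For an edge inside $V\setminus S$, the cover time is $k$ plus the minimum of two distinct uniform positions among $n-k$, with expectation $k+(n-k+1)/3$. Summing over the three classes and dividing by $m$ gives
\[
3\,\mathbb E[\SVC]=(k+1)\Bigl(2w-\frac kn\Bigr)+(1-w)(n+2k+1)=n+2k+1-\frac{k(k+1)}n-(n-1)w .
\]
This is exactly the claimed bound, with equality in expectation.

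\textbf{Step 3 (derandomization).} We fix positions one at a time using the method of conditional expectations. First we fill positions $1,\ldots,k$ with vertices of $S$, then positions $k+1,\ldots,n$ with vertices of $V\setminus S$. At each step we choose the vertex that minimizes the conditional expectation of the cost. The conditional expectation is computable in polynomial time, because each edge contributes a closed-form term. If an endpoint is already placed, the cover time is determined or bounded by that placement. If both endpoints are still free within the current phase, the contribution is the expected minimum of uniformly random free positions, which has the same kind of closed form as in Step 2. If the endpoints are free in different phases, the edge is covered by the endpoint in the earlier phase at a uniformly random free position of that phase. The minimizing choice never increases the conditional expectation. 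Hence the final deterministic ordering $\sigma_S$ satisfies the bound.

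The only delicate point is the algebra in Steps 1 and 2. In particular, regularity is what eliminates $e_1$ and $e_2$, and it produces the $-k(k+1)/n$ term together with the coefficient $-(n-1)$ on $w$. The derandomization itself is routine.
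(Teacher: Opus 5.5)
Your proposal is correct and follows the paper's proof essentially step for step: the same regularity-based count of internal, crossing and outer edges, the same expected cover times $(k+1)/3$, $(k+1)/2$ and $k+(n-k+1)/3$ (with the algebra checking out), and the same derandomization by conditional expectations. One small remark on Step 3: positions are filled left to right, so any edge with a placed endpoint has its cover time exactly determined. The phrase ``or bounded'' is therefore unnecessary, and replacing an exact conditional expectation by a bound would need a separate pessimistic-estimator justification.
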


\begin{proof}
Let $a,b,c$ be the numbers of edges internal to $S$, crossing the cut, and
internal to $V\setminus S$. Since $G$ is $d$-regular, the degree sum over $S$
gives $2a+b=dk$, while $a+b=wm$ and $m=nd/2$. Hence
\begin{equation}\label{A.1}
 \frac am=\frac{2k}{n}-w,\qquad
 \frac bm=2w-\frac{2k}{n},\qquad
 \frac cm=1-w.
\end{equation}
Choose independently a uniformly random permutation of $S$ and one of
$V\setminus S$, and concatenate them. If an edge has two endpoints in a
block of size $q$, the expected minimum of their two distinct positions is
\[
 \frac{1}{\binom q2}\sum_{1\le i<j\le q}i=\frac{q+1}{3}.
\]
Thus an edge internal to $S$ has expected cover time $(k+1)/3$. A crossing
edge is covered when its endpoint in $S$ appears, whose expected position is
$(k+1)/2$. Finally, an edge internal to $V\setminus S$ waits through the first
$k$ positions and then has expected additional cover time $(n-k+1)/3$.
By (\ref{A.1}) and linearity of expectation,
\begin{align*}
\mathbb E[\SVC]
&=\left(\frac{2k}{n}-w\right)\frac{k+1}{3}
 +\left(2w-\frac{2k}{n}\right)\frac{k+1}{2}
 +(1-w)\left(k+\frac{n-k+1}{3}\right)\\
&=\frac{n+2k+1-k(k+1)/n-(n-1)w}{3}.
\end{align*}

It remains to derandomize. Suppose a prefix of the two permutations has
already been fixed. For every eligible remaining vertex $v$, let $Q(v)$ be
the expected final cost conditional on placing $v$ next and completing all
unfixed positions uniformly at random. The current conditional expectation
is the average of $Q(v)$ over the eligible vertices. Therefore some $v$ has
$Q(v)$ no larger than the current conditional expectation; fixing such a
vertex preserves the bound. Repeating this for all positions yields a
deterministic ordering whose cost is at most the displayed expectation.
Each $Q(v)$ is computable in polynomial time by summing edge contributions:
contributions of already covered edges are fixed, while each remaining edge
has an expectation determined by whether its endpoints are fixed, in the same
block, or in different blocks. Hence the derandomization is polynomial-time.
\end{proof}

\begin{theorem}\label{thm:regular}
There is a polynomial-time $1.184$-approximation algorithm for \textsc{Minimum Sum Vertex Cover} on $d$-regular graphs.
\end{theorem}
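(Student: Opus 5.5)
We bound the ratio by a worst-case analysis over the unknown parameter $\delta\in[0,1/12]$ from~\eqref{eq:opt-delta}. The algorithm returns the best of the greedy ordering and all $\sigma_k$, so its cost is at most the cost of whichever candidate is best for the actual $\delta$. Running time is clear. The algorithm makes $n$ calls to the Max-$k$-Vertex-Cover routine of \Cref{thm:max-k-vc}, $n$ polynomial derandomizations from Lemma~\ref{prop:two-phase}, one greedy run, and brute force only when $n<n_0$, which is constant time. Also for $n<n_0$ the output is optimal, so we may assume $n\ge n_0$.

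\emph{Bound via the two-phase candidate.} Take $k=r=\lceil \tau n\rceil$ with $\tau=(1-\sqrt\delta)/2$. By Lemma~\ref{lem:coverage}, the optimum over $r$-sets covers at least $(1-\varphi\sqrt\delta)m$ edges. Hence \Cref{thm:max-k-vc} returns $S_r$ with covered fraction $w\ge \beta(1-\varphi\sqrt\delta)$. Plugging into Lemma~\ref{prop:two-phase} gives
\[
\SVC(\sigma_r)\le \tfrac13\bigl(n+2r+1-r(r+1)/n-(n-1)\beta(1-\varphi\sqrt\delta)\bigr).
\]
The right side is increasing in $r$ for $r\le n$, so we replace $r$ by $\tau n+1$. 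Dividing by $n$ yields
\[
\SVC(\sigma_r)/n\le g(\delta)+O(1/n),
\qquad
g(\delta)=\tfrac13\bigl(1+2\tau-\tau^2-\beta(1-\varphi\sqrt\delta)\bigr).
\]
Compare this with $\OPT/n\ge 1/4+\delta$.

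\emph{Bound via greedy, or the trivial bound.} We also need a bound that is good when $\delta$ is large, since $g(\delta)/(1/4+\delta)$ may be poor there. Using $\OPT\ge n/4+\delta n$ together with a universal upper bound on every candidate, either the random-ordering bound $(n+1)/3$ (the $k=n$ case of Lemma~\ref{prop:two-phase}) or the greedy guarantee, gives ratio at most $(1/3+1/(3n))/(1/4+\delta)$. This bound is decreasing in $\delta$. If a sharper large-$\delta$ bound is needed, we would use the Feige--Lov\'asz--Tetali greedy analysis specialized to regular graphs.

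\emph{Optimization, the main obstacle.} We must show
\[
\max_{\delta\in[0,1/12]}\ \min\Bigl\{\frac{g(\delta)}{1/4+\delta},\ \frac{1/3}{1/4+\delta}\Bigr\}\le 1.184
\]
with $\beta=0.929$, and absorb the $O(1/n)$ and additive $1/2$ terms using $n\ge n_0=6000$. At $\delta=0$ the first ratio is $(7/4-\beta)\cdot 4/3\approx 1.095$. The crossover with the trivial bound occurs near $\delta\approx 0.03$. I expect the max of the min to land around $1.18$, but this is not verified and the constant depends on that crossover. The real work is this one-variable numerical verification, plus checking that the finite-$n$ slack stays below the margin at $n_0=6000$. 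If the trivial bound turns out to be too weak at the crossover, the greedy bound should be substituted for it.
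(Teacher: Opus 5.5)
Your argument follows the paper's. You use the single candidate $k=r=\lceil(1-\sqrt\delta)n/2\rceil$, combine Lemma~\ref{lem:coverage} with Theorem~\ref{thm:max-k-vc} to get $w\ge\beta(1-\varphi\sqrt\delta)$, and then apply Lemma~\ref{prop:two-phase}. With $y=\sqrt\delta$, your ratio $g(\delta)/(1/4+\delta)$ is exactly the paper's
\[
h(y)=\frac{7-4\beta+(4\beta\varphi-2)y-y^2}{3+12y^2}.
\]

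What you leave unverified is the one computation that carries the theorem. Your worry about it is misplaced: the two-phase bound is not poor for large $\delta$, and no second bound is needed.

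\begin{itemize}
\item Take $\beta=0.929$. This is the worst case, because the numerator depends on $\beta$ only through $-4\beta(1-\varphi y)$, and $1-\varphi y>0$ on the range.
\item The condition $h'(y)=0$ reduces to $48.15\,y^2+84.82\,y-12.04\approx 0$. So the maximum over $[0,1/\sqrt{12}]$ is at $y\approx 0.132$ (i.e.\ $\delta\approx 0.0174$), with value $\approx 1.18297$.
\item At the right endpoint, $h(1/\sqrt{12})\approx 1.09$.
\end{itemize}

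The maximum therefore lies before your crossover with the $(1/3)/(1/4+\delta)$ bound, which is at $\delta\approx 0.035$ rather than $0.03$. Taking the minimum changes nothing, and the constant does not depend on the crossover. The paper's analysis uses neither the greedy ordering nor the trivial bound.

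Finally, the slack is only about $10^{-3}$, so you cannot leave the correction as $O(1/n)$; it must be made explicit. The additive part of the two-phase bound is at most $4/3$ and $\OPT\ge n/4$. This gives an extra term of $16/(3n)\le 0.0009$ for $n\ge 6000$, and $1.18297+0.00089<1.184$.
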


\begin{proof}
Let $G$ have $n\ge n_0$ vertices, and write its optimum as in~\eqref{eq:opt-delta}. Set $\tau=(1-\sqrt{\delta})/2$ and $r=\lceil\tau n\rceil$. The algorithm does not know $\delta$, but it tries every integer value of $k$, so it considers $k=r$.

By Lemma~\ref{lem:coverage}, the first $r$ vertices of an optimal ordering
cover at least $(1-\varphi\sqrt{\delta})m$ edges. Hence the set $S_r$
returned by Theorem~\ref{thm:max-k-vc} covers a fraction at least
$w_\delta=\beta(1-\varphi\sqrt{\delta})$ of the edges. Since $\tau n\le r<\tau n+1$, we have
$2r\le2\tau n+2$ and $r(r+1)/n\ge\tau^2n+\tau$. Applying
Lemma~\ref{prop:two-phase} with $k=r$ and using
$r\le(1-\sqrt{\delta})n/2+1$ gives

\[
 \SVC(\sigma_r)\le
 \frac{n+2r+1-r(r+1)/n-(n-1)w}{3}
 \le\frac{n(1+2\tau-\tau^2-w)+3-\tau+w}{3}.
\]
Substituting
$\tau=(1-y)/2$ and $w=\beta(1-\varphi y)$ into the leading coefficient yields
\begin{align*}
 F_\beta(y)&:=\frac{1+2\tau-\tau^2-w}{3}
 =\frac{1+(1-y)-(1-y)^2/4-\beta(1-\varphi y)}{3}\\
 &=\frac{7-4\beta+(4\beta\varphi-2)y-y^2}{12}.
\end{align*}
Since $0\le\tau,w\le1$, we have $\SVC(\sigma_r) \le nF_\beta(y)+4/3$. From $\OPT=n(1/4+\delta)+1/2$, we obtain
\[
 \frac{\SVC(\sigma_r)}{\OPT}
 \le
 \frac{7-4\beta+(4\beta\varphi-2)y-y^2}{3+12y^2}
 +\frac{16}{3n}.
\]
For $\beta=0.929$ and $0\le y\le1/\sqrt{12}$, the first term is at
most $1.1830$ (its maximum is attained near $y=0.1321$). For $n\ge6000$,
the finite-size term is at most $0.0009$, so the ratio is below $1.184$.
For $n<n_0$ the algorithm returns an optimal ordering by exhaustive search. Thus the algorithm is a polynomial-time $1.184$-approximation for all $d$-regular graphs.
\end{proof}

\section{An Exact Algorithm Parameterized by Vertex Cover}
\label{sec:vc-exact}

In this section we give an exact algorithm for MSVC when a vertex cover of size $k$ is given. The algorithm is close in spirit to the approaches of \cite{AP25} and \cite{LCGW25}. Its main viewpoint is that an ordering can be described by its right degrees (to be defined below). In an optimal ordering these right degrees may be assumed to be non-increasing, and consecutive vertices with the same right degree can be treated as one block. Thus, instead of guessing the complete order of all vertices, we guess the block structure and decide which vertices can be placed in each block. The vertex cover makes this search space small, because all vertices outside the cover form an independent set and can be grouped by their neighbourhoods in the cover.

Let $G=(V,E)$ be a graph with $|V|=n$ and $|E|=m$. Let $X\subseteq V$ be a vertex cover with $|X|=k$, and put $I:=V\setminus X$. Thus $I$ is an independent set. For an ordering $\sigma:\{1,2,\ldots,n\}\to V$, the \emph{right degree} of a vertex $v$ is $\rd_\sigma(v):=|\{u\in N(v):\sigma^{-1}(v)<\sigma^{-1}(u)\}|$. We drop the subscript when the ordering is clear. The MSVC cost of $\sigma$ can then be written as $\sum_{v\in V}\sigma^{-1}(v)\rd_\sigma(v)$.

\medskip\noindent\textbf{Right-degree blocks.}
For an ordering $\sigma$, define its right-degree sequence as $r^\sigma=(r^\sigma_1,\ldots,r^\sigma_n)$, where $r^\sigma_i:=\rd_\sigma(\sigma(i))$. By Observation~\ref{obs:nonincreasing}, there is an optimal ordering whose right-degree sequence is non-increasing. We therefore describe such an ordering by grouping consecutive equal entries of this sequence.

Consider distinct values $d_1>d_2>\cdots>d_q\ge 0$ and positive integers $s_1,\ldots,s_q$. The intended meaning is that the first $s_1$ vertices in $r^\sigma$ have right degree $d_1$, the next $s_2$ vertices in~$r^\sigma$ have right degree $d_2$, and so on. For an $n$-vertex, $m$-edge graph, any feasible choice~must satisfy $\sum_{j=1}^q s_j=n$ and $\sum_{j=1}^q s_jd_j=m$, because every vertex gets one right degree value and every edge contributes exactly once to the total right degree. Note that these equalities are necessary, but not sufficient. The vertices must still be assignable to the corresponding~blocks.

Set $p_0:=0$ and $p_j:=s_1+\cdots+s_j$ for $j\in\{1,\ldots,q\}$. The $j$th block is the interval of positions $p_{j-1}+1,\ldots,p_j$, and every vertex placed in this block is required to have right degree $d_j$. Thus an ordering realizes the chosen values and block sizes if $\rd_\sigma(\sigma(i))=d_j$ for every $i\in\{p_{j-1}+1,\ldots,p_j\}$ and every $j$. Equivalently, the blocks are exactly the equal-right-degree classes of the ordering, listed from larger to smaller right degree.

Once the values $d_j$ and sizes $s_j$ are fixed, the contribution of every position is fixed. Hence all orderings realizing these blocks have the same cost:
\begin{equation}\label{eq:block-cost}
   \mathrm{cost}(d,s)
   :=
   \sum_{j=1}^q d_j\sum_{i=p_{j-1}+1}^{p_j} i
   =
   \sum_{j=1}^q d_j\frac{(p_{j-1}+1+p_j)s_j}{2} .
\end{equation}
Thus, after fixing the right-degree values and block sizes, the algorithm only has to decide whether the vertices of $G$ can be assigned to the blocks so that the prescribed right degrees are obtained.

The next observation shows that the blocks of an optimal ordering may be treated as independent sets.

\begin{restatable}{observation}{obsindependent}\label{obs:independent}
In every optimal ordering, the two endpoints of any edge have different right degrees. In particular, each block of an optimal ordering is an independent set.
\end{restatable}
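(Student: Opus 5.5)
Let $\sigma$ be an optimal ordering and let $\{x,y\}\in E$ with $\sigma^{-1}(x)<\sigma^{-1}(y)$. The plan is to show $\rd(x)>\rd(y)$ strictly. Since $y$ is a later neighbour of $x$, this edge counts toward $\rd(x)$. The argument splits into two cases according to whether $x$ and $y$ are consecutive.

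\emph{Consecutive case.} Suppose $y$ comes immediately after $x$, say $x=\sigma(i)$ and $y=\sigma(i+1)$. The right degree of a vertex counts exactly the edges it newly covers, so $c_i=\rd(x)$ and $c_{i+1}=\rd(y)$. Now apply the swap from the proof of Observation~\ref{obs:nonincreasing} with $a=1$. After the swap, $y$ covers $c_{i+1}+1$ edges at step $i$. All other uncovered-edge counts are unchanged, so the cost changes by exactly $c_i-(c_{i+1}+1)$ at $u_i$. Optimality forces this change to be nonnegative, which gives $c_i\ge c_{i+1}+1$, i.e.\ $\rd(x)>\rd(y)$.

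\emph{General case.} Suppose $x$ and $y$ are not consecutive. By Observation~\ref{obs:nonincreasing}, the right degrees $c_i,c_{i+1},\dots,c_j$ (where $x=\sigma(i)$, $y=\sigma(j)$) are non-increasing, so $\rd(x)\ge\rd(y)$. Assume for contradiction that equality holds. Then all vertices at positions $i,\dots,j$ have the same right degree $c$. Consider $z=\sigma(j-1)$, the vertex immediately before $y$.
\begin{itemize}
\item If $z$ is adjacent to $y$, the consecutive case applied to the pair $(z,y)$ already gives $c_{j-1}>c_j$, a contradiction.
\item If $z$ is not adjacent to $y$, swap $z$ and $y$. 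Here $a=0$, so the swap leaves the cost unchanged, since both vertices cover $c$ new edges. The new ordering is therefore still optimal, and the edge $\{x,y\}$ now has its endpoints one position closer together, while both endpoints keep right degree $c$. (The right degree of $y$ stays $c$ because $z$ is not its neighbour.)
\end{itemize}
Iterating the second bullet, either some step hits the first bullet and yields a contradiction, or $y$ eventually becomes adjacent in position to $x$. In that situation we have an optimal ordering in which the consecutive pair $x,y$ has equal right degree $c$, contradicting the consecutive case.

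The part needing the most care is the bookkeeping in this iteration. We must check that each swap $(z,y)$ with $z\notin N(y)$ leaves every other vertex's right degree unchanged; this holds because only the relative order of the non-adjacent pair $z,y$ changes. With that, the sequence of swaps is well-defined and ends in one of the contradictions above.

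Finally, if two vertices lie in the same block, they have equal right degree. By what was just shown, they cannot be adjacent. Hence every block of an optimal ordering is an independent set.
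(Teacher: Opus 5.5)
Your proposal is correct and follows essentially the same argument as the paper. Both use an adjacent-swap exchange for consecutive endpoints, then slide $y$ leftward past non-adjacent predecessors of equal right degree at no cost, until an adjacent predecessor yields a contradiction. The only cosmetic difference is that you run the second part as an explicit iteration of optimal orderings, whereas the paper picks a counterexample with minimum distance between the endpoints.
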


\begin{proof}
First suppose that $x$ and $y$ are consecutive vertices in an optimal ordering, with $x$ immediately before $y$, and that $xy\in E$ and $\rd(x)=\rd(y)$. If we exchange $x$ and $y$, then the only right-degree changes come from the edge $xy$: the right degree of $x$ decreases by one, and the right degree of $y$ increases by one. All other right degrees stay the same. Since $x$ moves one position later and $y$ moves one position earlier, the total cost decreases by one. This contradicts optimality.

Now suppose, for contradiction, that some optimal ordering contains an edge $uv$ whose endpoints have the same right degree. Among all such choices, take an optimal ordering and such an edge with minimum distance between its endpoints. By Observation~\ref{obs:nonincreasing}, all vertices between $u$ and $v$ have the same right degree as $u$ and $v$. If $u$ and $v$ are consecutive, the previous paragraph gives a contradiction. Otherwise, let $w$ be the vertex immediately before $v$. If $wv\in E$, then $wv$ is an equal-right-degree edge with smaller distance, contradicting the choice of $uv$. If $wv\notin E$, then exchanging $w$ and $v$ changes no right degree and leaves the cost unchanged, while decreasing the distance between $u$ and $v$. This again contradicts the choice of $uv$. Therefore no such edge exists.
\end{proof}

This independence property is what makes the block viewpoint useful. If no edge joins two vertices of the same block, then the later neighbours of a vertex are exactly its neighbours in strictly later blocks. Therefore the right degree of a vertex depends only on the block assigned to each vertex, and not on the internal order of vertices inside a block. In Section~\ref{sec:separator} we state this as an order-free feasibility test. In the present section, we exploit the given vertex cover to carry out this feasibility test in FPT time.

The next observation is the reason the vertex-cover parameter gives a small search space.
 
\begin{restatable}{observation}{obsvcfewblocks}\label{obs:vc-few-blocks}
Let $X$ be a vertex cover of size $k$. In any optimal ordering whose right degrees are non-increasing and whose equal right degree classes are independent, there are at most $2k+1$ non-empty blocks.
\end{restatable}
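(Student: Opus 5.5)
The approach is to split the non-empty blocks into two kinds: \emph{$X$-blocks}, which contain at least one vertex of $X$, and \emph{pure $I$-blocks}, which consist only of vertices of $I=V\setminus X$. Blocks are disjoint, so there are at most $k$ $X$-blocks. The pure $I$-blocks sit in at most $k+1$ maximal runs of consecutive blocks: one run before the first $X$-block, one between each pair of consecutive $X$-blocks, and one after the last. It therefore suffices to show that every such run has length at most one. That gives at most $k+(k+1)=2k+1$ non-empty blocks.

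Throughout, I use the two standing assumptions. Right degrees are non-increasing along the ordering (Observation~\ref{obs:nonincreasing}), and blocks are independent sets (Observation~\ref{obs:independent}). By independence, the right degree of a vertex $v\in I$ is exactly the number of its neighbours, all of which lie in $X$, that occur in strictly later blocks. Two simple cases come first. In the last run, every vertex has no later $X$-neighbours, so every right degree there is $0$; the run is a single block. More generally, all vertices of a run see the same set $X_{>}$ of later cover vertices. So their right degrees are $|N(v)\cap X_{>}|$, and these can differ only because the neighbourhoods differ.

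The main step, and the one I expect to be the real obstacle, is to rule out two consecutive pure $I$-blocks $B_j,B_{j+1}$ with $d_j>d_{j+1}$ inside the same run. Note that $d_{j+1}\ge1$ unless the run is the last one. Right degrees alone do not forbid such a configuration, so an exchange argument against optimality is needed. I would first try the following exchange. Let $x$ be a vertex of the first $X$-block $B_l$ after the run. Move $x$ to the position just before $B_{j+1}$, or before the whole run. Then $x$ gains right degree from its neighbours in the run and in the intermediate blocks. Each such neighbour $u\in B_{j+1}$ loses exactly one unit of right degree, and every other right degree is unchanged.

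I would then compute the cost change position by position using $\sum_v\sigma^{-1}(v)\rd(v)$. The edges re-assigned to $x$ are now paid at an earlier position than before, since $x$ moves ahead of the vertices that used to cover them. Meanwhile $x$ itself moves earlier, which matters because $\rd(x)=d_l<d_{j+1}$. The aim is to show that some such move is a strict improvement, or that it produces an optimal ordering violating the non-increasing or independence property, a contradiction either way. If $x$ has no neighbour in $B_{j+1}$, I would instead pick an $x\in X_{>}$ that does have one; such an $x$ exists when $d_{j+1}\ge1$. I would then argue with $B_{j+1}$ and the vertex of $X_{>}$ closest to it.

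If this local exchange turns out not to suffice, the fallback is a global exchange. Merge the two blocks by moving all of $B_{j+1}$ directly after the appropriate cover vertices and recompute via~\eqref{eq:block-cost}. The goal remains the same: at most one pure $I$-block per run, and hence the bound $2k+1$.
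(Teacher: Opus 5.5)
\paragraph{The reduction to one pure $I$-block per run is false.}
Your first paragraph reduces the statement to the claim that every maximal run of consecutive pure $I$-blocks has length at most one. That claim is false, even for optimal orderings and a minimum vertex cover, so neither the local nor the global exchange can establish it.

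Take $X=\{x_1,x_2,x_3\}$, $I=\{v,w,p\}$ and edges $vx_1,vx_2,vx_3,wx_1,wx_2,x_3p$. The matching $wx_1,vx_2,x_3p$ shows that $X$ is a minimum vertex cover. Write the cost as $\sum_{i\ge 0}u_i$. Then $u_0=6$, $u_1\ge 3$ (maximum degree $3$) and $u_2\ge 1$ (no two vertices cover more than $5$ edges), so $\mathrm{msvc}(G)\ge 10$. This is attained by $v,w,x_3,x_1,x_2,p$.

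Equality forces the first vertex to be $v$, the only vertex of degree $3$. It forces the second to be $w$, the only vertex covering two of the remaining edges $wx_1,wx_2,x_3p$. Hence \emph{every} optimal ordering starts with $v,w$, with right degrees $3,2$. Each such ordering satisfies both hypotheses: the right-degree-$0$ class is $\{x_1,x_2,p\}$ or $\{x_1,x_2,x_3\}$, both independent. Yet $\{v\}$ and $\{w\}$ are consecutive pure $I$-blocks in the run before the first block meeting $X$. If $p$ comes third, that run has length three. So optimality does not forbid the configuration your main step tries to exclude.

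\paragraph{The missing idea: count blocks by value.}
Count blocks by their right-degree values instead of by their positions. Because right degrees are non-increasing, each block is an entire right-degree class, so different blocks carry different values. A vertex of $I$ has all its neighbours in $X$, so its right degree is at most $k$.

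Inside your own split this finishes the proof at once. The pure $I$-blocks have pairwise distinct values in $\{0,\ldots,k\}$, so there are at most $k+1$ of them in total, however they are distributed among the runs. Together with the at most $k$ blocks meeting $X$, this gives $2k+1$.

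The paper states the same count as follows: blocks with value larger than $k$ contain only vertices of $X$, hence there are at most $k$ of them, and blocks with value in $\{0,\ldots,k\}$ number at most $k+1$. No exchange argument is required, and no use of optimality beyond the two stated hypotheses.
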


\begin{proof}
Every vertex of $I$ has all its neighbours in $X$, and therefore has right degree at most $k$. Hence every block of right degree larger than $k$ contains only vertices of $X$. There are at most $k$ such high-degree blocks. All remaining blocks have right degree in $\{0,1,\ldots,k\}$, so there are at most $k+1$ of them. See \Cref{fig:blocks}.
\end{proof}

\begin{figure}
    \centering
    \includegraphics[width=0.85\linewidth]{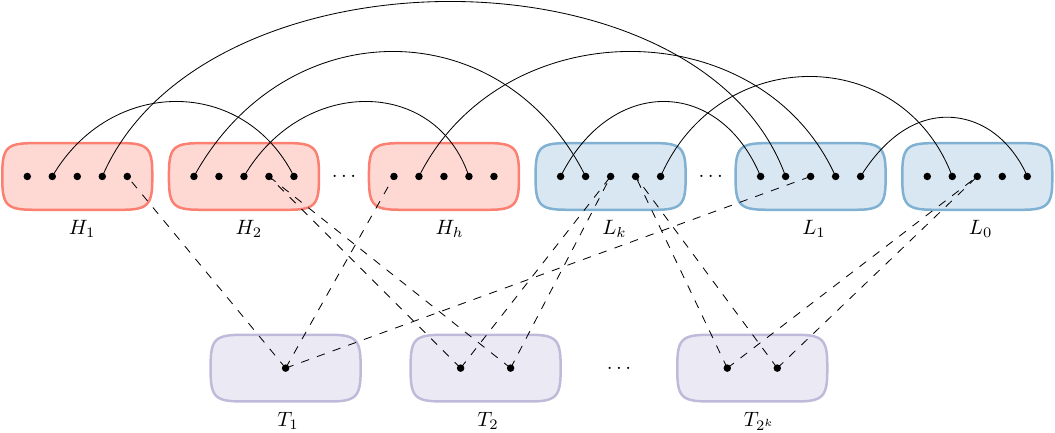}
    \caption{Skeleton obtained from the vertex cover $X$: The red color represents the block with high right-degree, the blue color represents the low right-degree block and the purple color represents the types of vertices in $I$.}
    \label{fig:blocks}
\end{figure}

\medskip\noindent\textbf{Skeletons and placement of the cover.}
We enumerate the possible placements of the vertices of $X$ into at most $2k+1$ blocks. A \emph{skeleton} consists of an integer $0\le h\le k$ and a map $\alpha:X\to \{H_1,\ldots,H_h,L_k,L_{k-1},\ldots,L_0\}$, where the intended order of the blocks is
\[
   H_1\prec H_2\prec\cdots\prec H_h\prec L_k\prec L_{k-1}\prec\cdots\prec L_0 .
\]
The block $L_r$ is intended to hold the vertices of right degree exactly $r$, and the blocks $H_i$ those of right degree larger than $k$. We stress the orientation of the low blocks: \emph{larger} right degree comes \emph{earlier}, so for a vertex in $L_r$ its later neighbours are precisely its neighbours in the strictly later low blocks $L_{r-1},\ldots,L_0$. Empty low blocks are allowed and ignored in the final ordering, and empty high blocks are discarded.
 
The number of skeletons is at most $\sum_{h=0}^{k}(h+k+1)^k = 2^{O(k\log k)}$. For a skeleton, we first check that every block is independent on $X$, i.e.\ no edge of $G[X]$ has both endpoints assigned to the same block. For a high block
$H_i$, the right degree of a vertex $x\in X\cap H_i$ is forced by the skeleton,
\[
   D_i(x):=|N_I(x)|+|\{y\in N_X(x): \alpha(y)\succ H_i\}| ,
\]
its independent neighbours (all of which lie in later low blocks) together with its cover neighbours placed in later blocks. We reject the skeleton unless all
vertices in $X\cap H_i$ share one value, denoted $D_i$, and $D_1>D_2>\cdots>D_h>k$.

It remains to assign the vertices of $I$ to the low blocks $L_k,\ldots,L_0$. For the low blocks, write
\[
   X_r := X\cap L_r,\qquad
   X_{<r}:=\bigcup_{r'<r}X_{r'},\qquad
   X_{>r}:=\bigcup_{r'>r}X_{r'},\qquad
   X_{\mathrm{low}}:=\bigcup_{r=0}^{k}X_r ,
\]
so that, by the orientation above, $X_{<r}$ are the cover vertices in low blocks \emph{after} $L_r$ and $X_{>r}$ those in low blocks \emph{before} $L_r$.
 
\medskip\noindent\textbf{Types outside the vertex cover.} Partition the vertices of the independent set $I$ based on their neighbourhoods in the cover~$X$. There are at most $2^k$ neighbourhoods, we call each one a \emph{type}. 
That is, each $T\subseteq X$ represents a type. Let $I_T:= \{u\in I : N(u)=T\}$. We say that the vertices of $I_T$ have type $T$. Let $\mu_T := |I_T|$. We refer to $\mu_T$ as the multiplicity of~$T$.

Since these multiplicities are fixed across all skeletons, we store them once as preprocessing~in a table: allocate an array indexed by the $k$-bit mask of $T\subseteq X$, initialize its $2^k$ entries to~$0$, and, for each $u\in I$, form the mask of $N(u)$ by scanning the edges at $u$ and increment the corresponding entry. The masks are built in $\sum_{u\in I}\deg(u)=O(m)$ time, because $I$~is independent and so every edge has exactly one endpoint in $I$, together with the $O(2^k)$ initialization this is $O(2^k+n+m)$ in total. The $2^k$ term is dominated by the $2^{O(k\log k)}$ spent below and so is absorbed into the running time, while the dependence on the graph size stays linear in $n+m$.\footnote{If one prefers to keep even the preprocessing within $O(n+m)$, at most $n$ masks that occur with non-zero multiplicities can instead be radix-sorted and grouped, avoiding the $2^k$ array altogether.}

Fix a skeleton. A vertex of independent set $I$ of type $T$ may be placed in the low block $L_r$ precisely when
\begin{equation}\label{eq:admissible}
   T\cap X_r=\varnothing
   \qquad\text{and}\qquad
   |T\cap X_{<r}|=r .
\end{equation}
The first condition keeps $L_r$ independent and the second makes the vertex have exactly $r$ later neighbours (all of which lie in $X_{<r}$ and right degree is equal to $r$). We call such a block $L_r$ \emph{admissible} for $T$. There are at most $k+1$ blocks admissible for $T$.
 
Now fix a cover vertex $x\in X_r$. Its neighbours in later $X$-blocks contribute $a_x := |N_X(x)\cap X_{<r}|$ to its right degree, so to reach right degree $r$ the vertices of $I$ placed after $x$ must contribute exactly $b_x := r-a_x$.
If $b_x<0$ for some $x$, we reject the skeleton. Otherwise $0\le b_x\le k$, and the total remaining demand is $B:=\sum_{x\in X_{\mathrm{low}}} b_x \le k^2$.
 
For a placement of one vertex of type $T$ into an admissible block $L_r$, define its \emph{demand-contribution vector} $p(T,r)\in\{0,1\}^{X_{\mathrm{low}}}$ by
\[
   p(T,r)_x =
   \begin{cases}
      1, & \text{if } x\in T \text{ and } \alpha(x)=L_s \text{ for some } s>r,\\
      0, & \text{otherwise.}
   \end{cases}
\]
that is, $x$ is a cover-neighbour of the type that is lying in a block \emph{before} $L_r$, so the placed vertex in $L_r$ is one of $x$'s later neighbours and supplies one unit of the demand towards $b_x$. A placement is \emph{neutral} if $p(T,r)=(0,\dots,0)$, i.e. it supplies no demand at all. 

For each type $T$ there is at most one neutral admissible block, which we denote by $L_{z(T)}$ (if it exists). Indeed $p(T,r)=0$ means $T\cap X_{>r}=\varnothing$, so every low-block cover-neighbour of $T$ lies in $X_{\le r}$, combined with admissibility $T\cap X_r=\varnothing$ and $|T\cap X_{<r}|=r$, all $|T\cap X_{\mathrm{low}}|$ of them lie in $X_{<r}$. Hence $z(T)=|T\cap X_{\mathrm{low}}|$ is the neutral block if and only if that block is admissible for $T$.

\medskip\noindent\textbf{Default placements and exceptions.}
Intuitively, the neutral block of a type is the cheapest to use, because filling it supplies no contributions towards demands. We therefore fix a \emph{default placement}: every vertex of a type $T$, that has a neutral block, is put in $L_{z(T)}$, and types with no neutral block are left unplaced. Under the default, no demand is supplied, so the demands $b_x$ are met only after we \emph{deviate} from it by moving some vertices into non-neutral admissible blocks. Two observations bound how much deviation is needed:

\emph{Each deviation supplies at least one unit of demand.} Moving a vertex of type $T$ from its neutral block to a non-neutral admissible block $L_r$ adds the non-zero vector $p(T,r)$ to the supplied demand, hence at least one unit. placing a vertex of a type with no neutral block likewise adds a non-zero vector $p(T,r)$.
 
\emph{At most $B$ vertices deviate.} The total demand to be supplied is exactly $B=\sum_x b_x\le k^2$, and each supplied unit comes from one deviating vertex, so at most $B$ vertices are placed non-defaultly. In particular, if a type with no neutral block has multiplicity exceeding $B$, the skeleton is infeasible. Moreover, from any type at most $\min\{\mu_T,B\}$ vertices ever leave the neutral block.
 
Consequently we need not track the placement of each independent vertex individually. It suffices to record, on top of the default, the few deviations and the demand they have supplied so far. Let $c^0_r:=\sum_{T:\,z(T)=r}\mu_T$ be the number of independent vertices placed in $L_r$ by the default.

\medskip\noindent\textbf{The dynamic program.}
Fix an enumeration $T_1,T_2,\ldots,T_{2^k}$ of the types. The dynamic program processes the types in this order, and a state $(i,\,j,\,\beta,\,\delta)$ records that types $T_1,\ldots,T_{i-1}$ have been fully placed and that $j$ copies of the current type $T_i$ have been placed so far, where
\[
   1\le i\le 2^k,\qquad 0\le j\le\min\{\mu_{T_i},B\},\qquad
   \beta\in\!\!\prod_{x\in X_{\mathrm{low}}}\!\!\{0,\ldots,b_x\}, \qquad
   \delta\in\{-B,\ldots,B\}^{k+1} .
\]
Here $\beta$ is the demand supplied so far to each low cover vertex\footnote{Here we assume a fixed ordering on the vertices of $X_\mathrm{low}$ such that the result of the above product is a vector where each entry corresponds to a distinct vertex in $X_\mathrm{low}$.} 
and $\delta_r$ for $0 \le r \le k$, is the deviation of the current number of independent vertices in $L_r$ from the default count $c^0_r$. The parameter $j$ lets the program keep track of how many copies of the current type it has already placed. Since at most $B$ vertices deviate in total, $\beta\le b$ (component-wise) and $|\delta_r|\le B$ hold throughout, so the number of states is at most
\[
   2^k\,(B{+}1)\!\!\prod_{x\in X_{\mathrm{low}}}\!\!(b_x{+}1)\,(2B{+}1)^{k+1}
   \le (2^k)(k^2{+}1)\,(k{+}1)^k\,(2k^2{+}1)^{k+1}=2^{O(k\log k)} .
\]
From a state $(i,j,\beta,\delta)$ there are two kinds of transition.

\emph{1. Place one more copy of $T_i$} (allowed while $j<\min\{\mu_{T_i},B\}$): choose one of the at most $k+1$ admissible non-neutral blocks $L_r$ for $T_i$ and move to
state $(i,j{+}1,\beta',\delta')$, where
\[
   \beta'=\beta+p(T_i,r),\qquad
   \delta'=
   \begin{cases}
      \delta-e_{z(T_i)}+e_r, & \text{if $T_i$ has a neutral block $L_{z(T_i)}$,}\\[2pt]
      \delta+e_r, & \text{if $T_i$ has no neutral block,}
   \end{cases}
\]
and the transition is rejected if $\beta'\not\le b$ in some coordinate. In the first case the copy is \emph{displaced} from its default block $L_{z(T_i)}$ (one fewer there, one more in $L_r$); in the second the copy has no default and is simply placed in $L_r$.
 
\emph{2. Advance to the next type}: move to state $(i{+}1,0,\beta,\delta)$. If $T_i$ has a neutral block this is allowed for every value of $j$, and the $\mu_{T_i}-j$ copies not placed explicitly stay in $L_{z(T_i)}$, where the default already counts them. If $T_i$ has no neutral block it is allowed only when $j=\mu_{T_i}$, which forces every copy to be placed; in particular the skeleton is infeasible if $\mu_{T_i}>B$, or if some copy has no admissible block at all.
 
The program starts in state $(1,0,\mathbf 0,\mathbf 0)$. After the last type is closed it reaches the states $(2^k{+}1,0,\beta,\delta)$, of which the \emph{accepting} ones are those with $\beta=b$, i.e., every demand is met exactly. It is clear that each transition can be done in time which is polynomial in $k$, so the total  time to populate the dynamic program table is $2^{O(k\log{k})}$.
 
\medskip\noindent\textbf{Computing the cost.}
An accepting state $(2^k{+}1,0,b,\delta)$ determines the whole block structure. The high blocks have the forced sizes $|X\cap H_i|$ and right degrees $D_1,\ldots,D_h$, while the low block $L_r$ has right degree $r$ and final size $|X_r|+c^0_r+\delta_r$.
Listing the blocks in the skeleton order $H_1,\ldots,H_h,L_k,\ldots,L_0$ fixes the positions, and the cost follows from the block-cost formula~\eqref{eq:block-cost}. We return the cheapest accepting state over all skeletons.

\begin{center}
\noindent\fbox{\parbox{0.97\textwidth}{
\textbf{Algorithm \textsc{VC-MSVC}$(G,X)$.}
\begin{enumerate}[label=\arabic*., leftmargin=2em, itemsep=0.25em, topsep=0.25em]
   \item Compute the nonzero type multiplicities $\mu_T$, $T\subseteq X$.
   \item Enumerate all skeletons $\alpha:X\to\{H_1,\ldots,H_h,L_k,\ldots,L_0\}$, for all $0\le h\le k$.
   \item Reject skeletons whose blocks are not independent on $X$, or whose high blocks do not have well-defined strictly decreasing right degrees larger than $k$.
   \item For each surviving skeleton, run the above dynamic programming algorithm.
   \item Return an ordering of minimum cost over all accepting states of all skeletons, obtained by listing the blocks in decreasing right degree and placing the vertices of each block arbitrarily inside it.
\end{enumerate}
}}
\end{center}
 
\begin{restatable}{theorem}{thmmsvcvcexact}\label{thm:msvc-vc-exact}
Let $G$ be a graph and let $X$ be a vertex cover of size $k$. \textsc{Minimum Sum Vertex Cover} on $G$ can be solved exactly in time
$2^{O(k\log k)}+O(n+m)$.
\end{restatable}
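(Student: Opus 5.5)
We will prove Theorem~\ref{thm:msvc-vc-exact} by establishing correctness (soundness and completeness) of Algorithm \textsc{VC-MSVC} and then bounding its running time.

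\emph{Completeness.} Take an optimal ordering $\sigma^\star$. By Observation~\ref{obs:nonincreasing} and Observation~\ref{obs:independent}, we may assume its right-degree sequence is non-increasing and every equal-right-degree class is independent. We fix one such optimal ordering; note that the exchange arguments in those observations only move vertices and never increase cost, so such an ordering exists among optimal ones. By Observation~\ref{obs:vc-few-blocks}, its blocks of right degree $>k$ contain only cover vertices. We map these blocks to $H_1,\dots,H_h$ and the blocks of degree $r\le k$ to $L_r$. This defines a skeleton $\alpha$ that passes the rejection tests of step~3. Every vertex of $I$ lands in an admissible block in the sense of~\eqref{eq:admissible}, and every $x\in X_r$ receives exactly $b_x$ later neighbours from $I$.

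The key step is to show that the placement of $I$ corresponds to a path in the DP. The DP does not track which copy goes where; all copies of a type are interchangeable. Hence for each type $T$ only the numbers of copies sent to each admissible block matter. Copies outside the neutral block each contribute a nonzero vector $p(T,r)$, and the total contribution equals $b$. Therefore at most $B$ copies overall are non-neutral, so each type has at most $\min\{\mu_T,B\}$ of them, and $|\delta_r|\le B$. These non-neutral copies are exactly the sequence of type-1 transitions, ending with $\beta=b$. The block sizes read from the accepting state coincide with those of $\sigma^\star$, so by~\eqref{eq:block-cost} the reported cost equals $\mathrm{msvc}(G)$.

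\emph{Soundness.} Conversely, from any accepting state we build an ordering by listing the blocks in skeleton order. We must check that every vertex's actual right degree equals its block label. Each block is independent: for $X$ this is checked by the skeleton, for $I$ it holds because $I$ is independent and $T\cap X_r=\varnothing$. So a vertex's later neighbours are exactly its neighbours in strictly later blocks. For $I$-vertices in $L_r$ this count is $r$ by~\eqref{eq:admissible}. For $x\in X_r$ it is $a_x+\beta_x=a_x+b_x=r$. For high vertices it is $D_i$, since all of $N_I(x)$ lies in low blocks, which come later. Thus the actual cost equals $\mathrm{cost}(d,s)$ from~\eqref{eq:block-cost}, and the minimum over accepting states is achievable. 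This soundness direction, together with the compression argument in the completeness part (that exchangeable copies and a bounded number of deviations lose no solution), is where I expect the main care to be needed. A particular subtlety is that types with $\mu_T>B$ but no neutral block are correctly rejected, and that a skeleton in which some low blocks end up empty is still handled consistently by~\eqref{eq:block-cost}, since empty blocks contribute nothing.

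\emph{Running time.} Preprocessing the multiplicities $\mu_T$ costs $O(2^k+n+m)$, or $O(n+m)$ via radix sort. There are $2^{O(k\log k)}$ skeletons. For each, the per-skeleton checks cost $\mathrm{poly}(k)$ using stored neighbourhoods within $X$. The DP has $2^{O(k\log k)}$ states with $O(k)$ transitions each, every transition computable in $\mathrm{poly}(k)$ time. One point to settle is that types with $\mu_T=0$ can be skipped: iterating over only the nonzero types keeps the graph-dependent work out of the per-skeleton loop. The total is therefore $2^{O(k\log k)}+O(n+m)$. Finally, outputting an ordering requires assigning actual vertices to blocks in $O(n+m)$ time once, for the best state.
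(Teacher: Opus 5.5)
Your proposal is correct and follows the paper's proof essentially step for step. Soundness lists the blocks of an accepting state in skeleton order and verifies right degrees via block independence, admissibility and $a_x+b_x=r$; completeness reads a skeleton off an optimal ordering with non-increasing right degrees and independent classes, notes that at most $B$ copies leave their neutral blocks, and uses the same $2^{O(k\log k)}$ bounds on skeletons and DP states, with the details you add (why the $D_i$ are the true right degrees of high vertices, skipping zero-multiplicity types, and the one-time $O(n+m)$ reconstruction of the ordering) being consistent with and slightly more explicit than the paper.
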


\begin{proof} The multiplicities $\mu_T$ are computed once in $2^k+O(n+m)$ time, as shown above. Everything that follows is independent of $n$ and $m$ beyond this term. There are $2^{O(k\log k)}$ skeletons, and for a fixed skeleton the independence and high-block checks take time polynomial in $k$ once the values $|N_I(x)|$ are known.
 
For one skeleton, the dynamic program has $2^{O(k\log k)}$ states: the type index and the per-type counter contribute factors $2^k{+}1$ and $B{+}1\le k^2{+}1$, while the target demand $B\le k^2$ gives at most $(k+1)^k$ demand vectors and $(2B+1)^{k+1}\le(2k^2+1)^{k+1}$ deviation vectors. Each state has at most $k+2$ transitions, and each is evaluated in $O(k)$ time. Hence one skeleton is handled in $2^{O(k\log k)}$ time, and multiplying by the number of skeletons keeps the total at $2^{O(k\log k)}$.
 
For correctness, every accepting state yields, by construction, a block assignment in which each block is independent and each vertex has exactly the right degree of its block, so listing the blocks in decreasing right degree produces a valid ordering of the stated cost. Conversely, fix an optimal ordering; by Observations~\ref{obs:nonincreasing} and~\ref{obs:independent} it has non-increasing right degrees and independent blocks, so by Observation~\ref{obs:vc-few-blocks} its blocks form one of the enumerated skeletons on $X$, and assigning each independent vertex to its block is an admissible placement meeting every demand exactly. This is reached by the dynamic program as an accepting state of that skeleton. Thus the minimum over all accepting states return the correct output.
\end{proof}

If the vertex cover $X$ is not given, we first compute a minimum vertex cover. For example, standard vertex-cover kernelization followed by branching computes such a set in $2^{O(k)}+O(n+m)$ time, where $k=\tau(G)$~\cite{cygan-book}. This is absorbed by the bound above. Thus MSVC parameterized by the vertex cover number admits an exact algorithm running in $2^{O(k\log k)}+O(n+m)$.

\section{An Exact Algorithm via Balanced Separators}
\label{sec:separator}

In this section we study the exact complexity of \textsc{Minimum Sum Vertex Cover}. For general graphs, a Held and Karp~\cite{HeldK61} style subset dynamic program runs in $\tilde{O}(2^n)$ time and one can show that, by the padded matching construction of \cite[Theorem~14]{FLT04}, through a linear-size reduction from bounded-degree gap Vertex Cover, this running time is qualitatively optimal: under Gap-ETH, MSVC admits no $2^{o(n)}$ time algorithm. For completeness, we state this in the following theorem and provide the proof in \Cref{sec:appendix-general}.

\begin{theorem}\label{thm:exact-general-graphs}
Let $G$ be an $n$-vertex graph. \textsc{Minimum Sum Vertex Cover} on $G$ can be solved in $2^n n^2$ time and $2^n n$ space. Moreover, assuming Gap-ETH, there is no algorithm running in $2^{o(n)}$ time.
\end{theorem}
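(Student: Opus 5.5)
The proof has two parts: a subset dynamic program for the upper bound, and a linear-size reduction that transfers a Gap-ETH lower bound for Vertex Cover.

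\emph{Upper bound.} For a set $S\subseteq V$, let $e(S)$ denote the number of edges with both endpoints in $V\setminus S$; this is exactly $u_{|S|}$ for any ordering whose first $|S|$ vertices form $S$. By \eqref{eq:svc-uncovered}, the cost of an ordering equals $\sum_{i=0}^{n-1}u_i$. This sum depends only on the chain of prefix sets, so the minimum cost is the minimum, over all maximal chains $\varnothing=S_0\subset S_1\subset\cdots\subset S_n=V$, of $\sum_{i<n} e(S_i)$. Define $F(\varnothing)=e(\varnothing)=m$ and, for $S\neq\varnothing$,
\[
F(S)=e(S)+\min_{v\in S}F(S\setminus\{v\}).
\]
Then $F(V)-e(V)=F(V)$ is the optimum, since $e(V)=0$. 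We evaluate $F$ in increasing order of $|S|$, storing $F$ and $e$ in arrays indexed by bitmasks; this takes $2^n n$ space. For each $S$ there are $n$ choices of $v$, and $e(S)$ can be obtained from $e(S\setminus\{v\})$ by subtracting $|N(v)\cap(V\setminus S)|$, which costs $O(n)$ with adjacency bitmasks. The total time is therefore $2^n n^2$. A parent pointer at each $S$ lets us recover the optimal ordering.

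\emph{Lower bound.} We start from Gap-ETH hardness of Vertex Cover on bounded-degree graphs. Under Gap-ETH, there are constants $\Delta$, $\varepsilon>0$ and $c$ such that no $2^{o(N)}$ algorithm can distinguish $N$-vertex graphs of maximum degree at most $\Delta$ with $\tau\le cN$ from those with $\tau\ge(c+\varepsilon)N$. This follows from Gap-ETH for 3-SAT via the standard linear-size, gap-preserving FGLSS/clause-gadget reduction followed by degree reduction (Dinur; Manurangsi--Raghavendra). Next we apply the padded construction of \cite[Theorem~14]{FLT04}: add a perfect matching, or more generally linearly many pendant edges, so that the MSVC optimum becomes an affine function of $\tau(H)$ up to lower-order error. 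Roughly, one pays $\sum$ over cover vertices of their star sizes plus a term in which each additional cover vertex costs a fixed amount determined by the padding. Because $\Delta$ is bounded and the padding has size $O(N)$, the resulting graph $G$ has $n=O(N)$ vertices. The completeness and soundness thresholds for $\mathrm{msvc}(G)$ then differ by a multiplicative $(1+\varepsilon')$ factor. A $2^{o(n)}$ exact algorithm for MSVC would compute $\mathrm{msvc}(G)$ exactly and thereby decide the gap instance in $2^{o(N)}$ time, contradicting Gap-ETH.

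The main obstacle is making the FLT padding argument quantitatively gap-preserving. Specifically, we must show that $\mathrm{msvc}(G)$ in the YES case is at most some value $A$, and in the NO case at least $A(1+\varepsilon')$, while keeping $n$ linear in $N$. I would first try to reuse the exact statement of \cite[Theorem~14]{FLT04}, which already gives hardness of approximation for MSVC by a reduction of linear size from bounded-degree Vertex Cover. With that in hand, the only remaining work is to check that every step has linear blow-up, so the exponential lower bound transfers from $N$ to $n$.
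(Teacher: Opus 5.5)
Your upper bound is correct and is the paper's subset dynamic program: your backward recurrence $F(S)=e(S)+\min_{v\in S}F(S\setminus\{v\})$ is the same as the paper's forward relaxation $D(S\cup\{v\})\gets D(S)+u(S)$.

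For the lower bound you follow the paper's route as well: bounded-degree gap Vertex Cover under Gap-ETH, padded by a matching as in \cite[Theorem~14]{FLT04}. But you stop at the only nontrivial step, which you yourself call the main obstacle, and propose to inherit it from FLT. That is not enough as written. FLT's theorem is an NP-hardness-of-approximation statement whose padding is tuned to its own source instances, so you still have to verify the thresholds for the Gap-ETH instances with your constants $\Delta,\varepsilon$. Your heuristic that $\mathrm{msvc}$ becomes ``affine in $\tau(H)$ up to lower-order error'' is also inaccurate. In the YES case the edges of $H$ themselves can cost about $m_H(t+1)/2\le \Delta N(N+1)/4$, which is of the same order $N^2$ as the gap term. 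The missing idea is to choose the padding size $M=kN$ with a constant $k$ large relative to $\Delta/\varepsilon$ (the paper takes $k\ge\Delta/\gamma$), so that the gap term dominates this internal cost.

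Concretely, the argument you need is as follows.
\begin{itemize}
\item \emph{Construction.} Let $G'$ be the disjoint union of $H$ with $M=kN$ isolated edges, and let $t=\lfloor cN\rfloor$. The padding must be vertex-disjoint from $H$: pendant edges hanging off $H$ would let vertices of $H$ also cover padding edges and break the counting below.
\item \emph{YES case.} If $\tau(H)\le t$, list a cover padded to size $t$ first, in an order achieving at most the random-order average. Then list one endpoint of each padding edge. This gives $\mathrm{msvc}(G')\le B:=m_H(t+1)/2+Mt+M(M+1)/2$.
\item \emph{NO case.} If $\tau(H)\ge t+\varepsilon N$, fix any ordering. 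The $\theta\ge\tau(H)$ vertices of $H$ that cover a new edge of $H$, together with the $M$ vertices at which padding edges are first covered, occupy $\theta+M$ distinct positions. Each of these vertices covers at least one edge. So the cost is at least $\binom{\theta+M+1}{2}\ge\binom{M+1}{2}+\theta M\ge L:=\binom{M+1}{2}+(t+\varepsilon N)M$.
\item \emph{Comparison.} We get $L-B=\varepsilon NM-m_H(t+1)/2\ge\varepsilon kN^2-\Delta N(N+1)/4>0$ for $k\ge\Delta/\varepsilon$. Since $G'$ has $(1+2k)N=O(N)$ vertices, a $2^{o(n)}$ exact algorithm would separate the two cases in $2^{o(N)}$ time.
\end{itemize}
Only $L>B$ is needed here. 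The multiplicative $(1+\varepsilon')$ separation you aim for is unnecessary for an exact-algorithm lower bound.
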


We next show that the running time can be improved on hereditary graph classes that admit small balanced separators. Similar to \Cref{sec:vc-exact}, the algorithm has two ingredients. First, we enumerate the possible choices of right-degree values and block sizes for an optimal ordering. Second, for each such choice, we test whether the vertices of the graph can be assigned to the corresponding blocks by a dynamic program over a recursive separator decomposition.

Fix distinct values $d_1>d_2>\cdots>d_q\ge 0$ and positive integers $s_1,\ldots,s_q$ satisfying $\sum_j s_j=n$ and $\sum_j s_jd_j=m$. This data specifies $q$ blocks: block $j$ has size $s_j$ and target right degree~$d_j$. Recall that by Observation~\ref{obs:nonincreasing}, there is an optimal ordering whose right-degree sequence is non-increasing. By Observation~\ref{obs:independent}, the vertices with the same right degree form an independent set. Thus, once the values $d_j$ and sizes $s_j$ are fixed, the remaining question is not how to order vertices inside a block, but only whether the vertices can be assigned to the blocks so that the prescribed right degrees are respected.

We make this feasibility question explicit. Let $b:V\to\{1,\ldots,q\}$ be a map that assigns each vertex to a block. We say that $b$ realizes the chosen values and sizes if the following three conditions hold.
\begin{enumerate}[label=(R\arabic*)]
   \item For every block $j$, exactly $s_j$ vertices $v$ satisfy $b(v)=j$.
   \item No edge has both endpoints in the same block.
   \item Every vertex $v$ has exactly $d_{b(v)}$ neighbours $u$ with $b(u)>b(v)$.
\end{enumerate}

\begin{lemma}\label{lem:order-free}
The values $d_1>\cdots>d_q\ge 0$ and sizes $s_1,\ldots,s_q$ can be realized by an ordering whose blocks are independent if and only if there is a map $b:V\to\{1,\ldots,q\}$ satisfying \textnormal{(R1)--(R3)}. Consequently, by formula~\eqref{eq:block-cost}
\[
   \mathrm{msvc}(G)
   =
   \min\bigl\{
      \mathrm{cost}(d,s):
      \text{there is a map satisfying \textnormal{(R1)--(R3)}}
   \bigr\}.
\]
\end{lemma}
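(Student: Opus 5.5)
We prove the two directions of the equivalence separately and then derive the formula for $\mathrm{msvc}(G)$ from Observations~\ref{obs:nonincreasing} and~\ref{obs:independent}.

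\emph{Forward direction.} Suppose an ordering $\sigma$ realizes the values $d_j$ and sizes $s_j$, and each of its blocks is independent. Define $b(v):=j$ whenever $\sigma^{-1}(v)\in\{p_{j-1}+1,\ldots,p_j\}$.
\begin{itemize}
\item (R1) holds because block $j$ occupies exactly $s_j$ positions.
\item (R2) is precisely the independence assumption.
\item For (R3), fix $v$ and count its later neighbours. Since the block of $v$ is independent, no neighbour lies in the same block. A neighbour $u$ is therefore later than $v$ exactly when $b(u)>b(v)$. Hence $\rd_\sigma(v)=|\{u\in N(v):b(u)>b(v)\}|$, and this equals $d_{b(v)}$ because $\sigma$ realizes the blocks.
\end{itemize}

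\emph{Backward direction.} Given $b$ satisfying (R1)--(R3), build $\sigma$ by listing the vertices with $b(v)=1$ in any order, then those with $b(v)=2$, and so on. By (R1), block $j$ lands on positions $p_{j-1}+1,\ldots,p_j$. By (R2), each block is independent. As above, the later neighbours of $v$ are then exactly its neighbours in strictly larger blocks, so (R3) gives $\rd_\sigma(v)=d_{b(v)}$. Thus $\sigma$ realizes the data with independent blocks. Its cost is $\sum_v \sigma^{-1}(v)\rd_\sigma(v)$, which is $\mathrm{cost}(d,s)$ by~\eqref{eq:block-cost}. Note that this cost depends only on $(d,s)$, not on the internal order within blocks.

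\emph{The formula.} Both inequalities follow from the equivalence.
\begin{itemize}
\item \emph{$\le$.} Every feasible $(d,s)$ yields an actual ordering of cost $\mathrm{cost}(d,s)$, so $\mathrm{msvc}(G)$ is at most the minimum.
\item \emph{$\ge$.} Take an optimal ordering. By Observation~\ref{obs:nonincreasing}, its sequence $c_i$ (which equals its right-degree sequence) is non-increasing. By Observation~\ref{obs:independent}, equal right degrees form independent sets, and the equal-right-degree classes are exactly the maximal runs of equal entries. Grouping these runs gives distinct values $d_1>\cdots>d_q\ge 0$ and sizes $s_j$ realized by this ordering. By the forward direction, a map satisfying (R1)--(R3) exists for this $(d,s)$, and its cost $\mathrm{cost}(d,s)$ equals $\mathrm{msvc}(G)$.
\end{itemize}

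The proof is largely routine. The one point needing care is the identification in (R3) of ``later neighbour'' with ``neighbour in a strictly later block''. This identification relies essentially on (R2): an edge inside a block would make the right degree depend on the internal order. That is why the independence hypothesis appears in the statement.
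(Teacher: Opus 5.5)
Your proof is correct and follows essentially the same route as the paper: in both directions you list the blocks in index order and use independence to identify later neighbours with neighbours in higher-indexed blocks. Your explicit derivation of the formula for $\mathrm{msvc}(G)$ via Observations~\ref{obs:nonincreasing} and~\ref{obs:independent} spells out the ``consequently'' step, which the paper leaves implicit.
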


\begin{proof}
Suppose first that such a map $b$ is given. We order the vertices block by block, starting with block $1$ and ending with block $q$, and we order the vertices inside each block arbitrarily. Since no block contains an edge, a vertex has no neighbour in its own block. Therefore its later neighbours in the ordering are exactly its neighbours in blocks with larger index. Condition \textnormal{(R3)} gives the required right degree, and condition \textnormal{(R1)} gives the required block sizes.

Conversely, suppose an ordering realizes the chosen values and sizes and its blocks are independent. Assign each vertex to the block in which it appears. The block sizes give \textnormal{(R1)}, independence gives \textnormal{(R2)}, and the definition of the right degree gives \textnormal{(R3)}.
\end{proof}

There are several choices of block partitions to enumerate. Indeed, the expanded non-increasing right-degree sequence is an integer partition of $m$ into at most $n$ parts, so the Hardy and Ramanujan bound gives $2^{O(\sqrt m)}$ possibilities~\cite{HR18}. Also, because the values $d_1,\ldots,d_q$ are distinct nonnegative integers, $m\ge \sum_{j=1}^q d_j\ge 0+1+\cdots+(q-1)=\binom q2$. Hence every choice has $q=O(\sqrt m)$ non-empty blocks.

It remains to test whether a fixed choice of values and sizes is feasible. This is where the separator structure is used. Fix $d_1>\cdots>d_q\ge 0$ and $s_1,\ldots,s_q$ as above. A subproblem consists of an active vertex set $U$, a boundary set $B$ disjoint from $U$, a block assignment $b_B:B\to\{1,\ldots,q\}$ for the boundary vertices, a residual demand function $t:B\to\{0,\ldots,n\}$, and a vector $\widehat s=(\widehat s_1,\ldots,\widehat s_q)$ of remaining block sizes. We denote such a subproblem by $(U,B,b_B,t,\widehat s)$.

The intended meaning is the following. The vertices in $B$ have already been assigned to blocks. The vertices in $U$ are still active and must be assigned to a block now or in recursive calls. For a boundary vertex $v\in B$, the number $t(v)$ is the number of later neighbours that $v$ still needs to find inside the active set $U$. Finally, $\widehat s_j$ is the number of vertices of $U$ that must be assigned to block $j$.

Formally, the subproblem $(U,B,b_B,t,\widehat s)$ is feasible if $b_B$ can be extended to a map $b:U\cup B\to\{1,\ldots,q\}$ satisfying the following conditions.
\begin{enumerate}[label=(S\arabic*)]
   \item For every block $j$, exactly $\widehat s_j$ vertices of $U$ are assigned to block $j$.
   \item No edge of $G[U\cup B]$ has both endpoints in the same block.
   \item Every active vertex $u\in U$ has exactly $d_{b(u)}$ neighbours $x\in U\cup B$ with $b(x)>b(u)$.
   \item Every boundary vertex $v\in B$ has exactly $t(v)$ neighbours $x\in U$ with $b(x)>b_B(v)$.
\end{enumerate}
The chosen values and sizes are feasible exactly when the root subproblem $(V,\emptyset,\emptyset,\emptyset,(s_1,\ldots,s_q))$ is feasible.

We now describe the recursion. If the active set $U$ is small, say $|U|=O(\sqrt m)$, we solve the subproblem directly by trying all $q^{|U|}$ assignments of the active vertices and checking the four feasibility conditions.

Otherwise, we take a balanced separator $S\subseteq U$ in $G[U]$. Let $A$ and $C$ be the two sides of $U\setminus S$, so there is no edge between $A$ and $C$, and $|A|,|C|\le 2|U|/3$. We guess the block assignment of every vertex in $S$ and then move $S$ to the boundary. Write $B'=B\cup S$ for the new boundary. If the guessed assignment creates an edge inside $G[B']$ whose endpoints are in the same block, then this guess is infeasible and we discard it.

After moving $S$ to the boundary, we update the residual demands. For an old boundary vertex $v\in B$, some of its remaining demand may already be satisfied by vertices of $S$. For a new boundary vertex $v\in S$, its demand is its prescribed right degree minus the later neighbours it already has inside the new boundary. Thus we define
\[
   \rho(v)=
   \begin{cases}
      t(v)-|\{x\in N(v)\cap S : b(x)>b_B(v)\}|, & v\in B,\\[2mm]
      d_{b(v)}-|\{x\in N(v)\cap B' : b(x)>b(v)\}|, & v\in S.
   \end{cases}
\]
If $\rho(v)<0$ for some $v\in B'$, the guess is infeasible. Otherwise, for every $v\in B'$, we guess how much of $\rho(v)$ will be satisfied on the two sides; that is, we choose nonnegative integers $t_A(v)$ and $t_C(v)$ with $\rho(v)=t_A(v)+t_C(v)$.

We also split the remaining block sizes. Let $a_j$ be the number of separator vertices assigned to block $j$. If $a_j>\widehat s_j$ for some $j$, the guess is infeasible. Otherwise, for every block $j$, we split the remaining quantity $\widehat s_j-a_j$ into two nonnegative integers $\widehat s^A_j$ and $\widehat s^C_j$, where $\widehat s_j-a_j=\widehat s^A_j+\widehat s^C_j$.

This produces two recursive subproblems,
\[
   (A,B',b_{B'},t_A,\widehat s^A)
   \qquad\text{and}\qquad
   (C,B',b_{B'},t_C,\widehat s^C).
\]
The current subproblem is feasible if and only if at least one set of guesses makes both recursive subproblems feasible. The reason is that there are no edges between $A$ and $C$. Hence the only interaction between the two sides is through the boundary demands and the block-size constraints, and these are exactly the quantities that we split.

\begin{center}
\noindent\fbox{\parbox{0.97\textwidth}{
\textbf{Algorithm \textsc{Separator-MSVC}$(G)$.}
\begin{enumerate}[label=\arabic*.]
   \item Enumerate all choices $(d_1^{s_1},\ldots,d_q^{s_q})$ with $d_1>\cdots>d_q\ge 0$, $\sum_j s_j=n$, and\ \ \ $\sum_j s_jd_j{=}m$.
   \item Test feasibility by the separator recursion described above.
   \item Return a feasible choice of minimum cost using~\eqref{eq:block-cost}, and output an ordering by listing its blocks in order.
\end{enumerate}
}}
\end{center}

We now bound the running time. Suppose every subgraph of $G$ on $N$ vertices has a balanced separator of size at most $\operatorname{sep}(N)$, where $\operatorname{sep}(.)$ is nondecreasing, and suppose that such separators are given or can be found in polynomial-time. Define
\[
   \Lambda(n)
   :=
   \sum_{i\ge 0}
   \operatorname{sep}\!\left(\left\lceil (2/3)^i n\right\rceil\right).
\]
The quantity $\Lambda(n)$ upper-bounds the number of boundary vertices that can accumulate along any root-to-leaf path of the recursion.

\begin{restatable}{theorem}{thmexactgeneral}\label{thm:exact-general}
\textsc{Minimum Sum Vertex Cover} can be solved exactly in time and space
\[
   2^{O((\Lambda(n)+\sqrt m)\log n)}.
\]
\end{restatable}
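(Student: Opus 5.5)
The proof has two parts: correctness, and a bound on the number of subproblems.

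\emph{Correctness.} By Lemma~\ref{lem:order-free}, $\mathrm{msvc}(G)$ is the minimum of $\mathrm{cost}(d,s)$ over the feasible choices $(d,s)$. So it suffices to show that the separator recursion decides feasibility of the root subproblem $(V,\emptyset,\emptyset,\emptyset,(s_1,\ldots,s_q))$. I would prove, by induction on $|U|$, that the recursion accepts $(U,B,b_B,t,\widehat s)$ exactly when an extension satisfying (S1)--(S4) exists.

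The base case is exhaustive search. For the inductive step, there are two directions.
\begin{itemize}
\item Given a valid extension $b$, take the guesses to be its restriction to $S$. Set $t_A(v)$ and $t_C(v)$ to the numbers of later neighbours of $v$ in $A$ and $C$, and $\widehat s^A_j,\widehat s^C_j$ to the block counts on each side. The two children are then feasible.
\item Conversely, take feasible extensions for the two children and glue them along $B'$. There are no $A$--$C$ edges, so (S2) holds. For (S3), a vertex of $A$ has all its neighbours in $A\cup B'$, so its count is unchanged. The demands of vertices in $B'$ add up to $\rho(v)$, which gives (S4) for $B$ and (S3) for $S$ by the definition of $\rho$. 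The block sizes add up, giving (S1).
\end{itemize}

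\emph{Running time.} I would run the recursion as a memoized dynamic program, indexing a subproblem by its tuple. The active set $U$ is determined by the path of side choices in the fixed recursive separator decomposition. That decomposition is computed once, independently of the guesses. The depth is $O(\log n)$, and there are at most $2^{O(\log n)}=n^{O(1)}$ reachable sets $U$.

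For a given $U$, the boundary $B$ is the union of the separators along its root path, so $|B|\le\Lambda(n)$. The remaining data per subproblem is bounded as follows:
\begin{itemize}
\item $b_B$ has at most $q^{|B|}$ choices;
\item $t$ has at most $(n+1)^{|B|}$ choices;
\item $\widehat s$ has at most $(n+1)^q$ choices.
\end{itemize}
Using $q=O(\sqrt m)$ and $q\le n$, the number of subproblems is $2^{O((\Lambda(n)+\sqrt m)\log n)}$.

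The work at one subproblem is of the same order:
\begin{itemize}
\item guessing the assignment of $S$ costs $q^{|S|}$;
\item splitting the demands costs $(n+1)^{|B'|}$;
\item splitting the block sizes costs $(n+1)^q$.
\end{itemize}
The leaves have $|U|=O(\sqrt m)$ and cost $q^{O(\sqrt m)}=2^{O(\sqrt m\log n)}$. (If $\sqrt m$ is not comparable to $n$, replace the threshold by $\min\{|U|,\sqrt m\}$; a constant-size leaf is also fine.)

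The outer enumeration of $(d,s)$ has $2^{O(\sqrt m)}$ choices by the Hardy--Ramanujan bound. Multiplying everything together gives time and space $2^{O((\Lambda(n)+\sqrt m)\log n)}$.

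\emph{Main obstacle.} The delicate step is keeping the boundary bounded by $\Lambda(n)$ and not by the number of subproblems. If $B$ were never pruned this would be immediate. A sharper argument would drop from $B$ the vertices with no remaining neighbours in $U$, or with zero residual demand, but this is not needed. Along any root-to-leaf path the sizes of $U$ shrink by a factor of $2/3$, and $\mathrm{sep}$ is nondecreasing, so the $i$-th separator on the path has size at most $\mathrm{sep}(\lceil(2/3)^i n\rceil)$. Summing over the path gives $|B|\le\Lambda(n)$. I would state this explicitly.

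The other point to check is the $\log n$ factor in the exponent. It comes from encoding the residual demands and the block sizes, each of which is a number up to $n$.
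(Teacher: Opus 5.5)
Your proposal is correct and follows essentially the same route as the paper: enumerate the $2^{O(\sqrt m)}$ choices of right-degree values and block sizes, decide feasibility of each by the memoized separator recursion, and bound the states by $q^{\Lambda(n)}(n+1)^{\Lambda(n)}(n+1)^q$ with per-node work and brute-force leaves of the same order. Your explicit induction for correctness, and your remark that fixing the decomposition in advance leaves only polynomially many active sets $U$, spell out what the paper leaves implicit.
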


\begin{proof}
We first enumerate all choices of right-degree values and block sizes. As argued above, there are $2^{O(\sqrt m)}$ choices, and each has $q=O(\sqrt m)$ blocks. By Lemma~\ref{lem:order-free}, it is enough to test each choice for feasibility.

Fix one such choice. The recursion depth is $O(\log n)$, because each separator step reduces the size of the active set by a constant factor. Along a root-to-leaf path, the number of boundary vertices is at most $\Lambda(n)$ by definition. Therefore, at any node of the recursion tree, a dynamic-programming state is determined by three types of information: the block assignment of the boundary vertices, the residual demand of each boundary vertex, and the remaining block-size vector.

The block assignment contributes at most $q^{\Lambda(n)}$ possibilities. The residual demands contribute at most $(n+1)^{\Lambda(n)}$ possibilities. The block-size vector contributes at most $(n+1)^q$ possibilities. Since $q=O(\sqrt m)$, the number of states for a fixed node and a fixed choice is at most
\[
   q^{\Lambda(n)}(n+1)^{\Lambda(n)}(n+1)^q
   =
   2^{O((\Lambda(n)+\sqrt m)\log n)}.
\]

At a leaf, we test feasibility by enumerating all assignments of the active vertices; this is within the same bound because the leaf size is $O(\sqrt m)$. At an internal node, the transition tries assignments of the separator vertices and all compatible splits of the residual demands and block counts. Equivalently, one may fill the table bottom-up by matching pairs of compatible child states. The cost is polynomial in the number of compatible choices times the table size, and is absorbed by the above bound.

Multiplying by the number of choices does not change the asymptotic expression, because enumerating the candidate block structures contributes only the additional factor $2^{O(\sqrt m)}$. Finally, by storing witnesses in the dynamic program, we can reconstruct a realizing map $b$, and then obtain an optimal ordering by listing the blocks in increasing order.
\end{proof}

\begin{restatable}{corollary}{thmplanarmsvcexact}\label{thm:planar-msvc-exact}
For planar graphs, bounded-genus graphs, and $H$-minor-free graphs, \textsc{Minimum Sum Vertex Cover} can be solved exactly in time and space $2^{O(\sqrt n\log n)}$.
\end{restatable}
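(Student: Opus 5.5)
The corollary follows by instantiating \Cref{thm:exact-general}, so the plan is to check its two hypotheses for each of the three graph classes. First, each class is hereditary, i.e., closed under taking subgraphs. This holds for planar graphs, for graphs of genus at most $g$, and for $H$-minor-free graphs, since a subgraph is also a minor. Second, every $N$-vertex member has a balanced separator of size $\operatorname{sep}(N)=c\sqrt N$, and such a separator can be computed in polynomial time:
\begin{itemize}
\item for planar graphs by Lipton--Tarjan~\cite{LiptonTarjan79};
\item for bounded genus by the analogous Gilbert--Hutchinson--Tarjan result (also covered by the minor-free case, since genus-$g$ graphs exclude some fixed $K_h$);
\item for $H$-minor-free graphs by Alon--Seymour--Thomas~\cite{AlonSeymourThomas90}, whose proof is algorithmic.
\end{itemize}
Here $c$ depends only on the class ($g$ or $H$).

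Next I bound $\Lambda(n)$ as a geometric series:
\[
\Lambda(n)=\sum_{i\ge0} c\sqrt{\lceil (2/3)^i n\rceil}\le c\sum_{i\ge0}\bigl(\sqrt{(2/3)^i n}+1\bigr).
\]
The additive $+1$ terms need care, because the sum over $i$ is formally infinite. I handle this by letting the recursion stop once $|U|$ reaches the leaf size $O(\sqrt m)$, or at the latest at constant size. This caps the number of levels at $O(\log n)$, so the $+1$ terms contribute only $O(\log n)$. The main part is $\sqrt n/(1-\sqrt{2/3})=O(\sqrt n)$. Hence $\Lambda(n)=O(\sqrt n)$. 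Alternatively, one can take $\operatorname{sep}(N)=0$ for $N$ below a constant, which makes the tail vanish.

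Finally, I bound $\sqrt m$. Each of these classes is sparse, with $m=O(n)$:
\begin{itemize}
\item planar graphs have $m\le 3n-6$;
\item graphs of genus $g$ have $m\le 3n+6g$;
\item $H$-minor-free graphs have $m=O(|V(H)|\sqrt{\log|V(H)|}\,n)$ by Kostochka--Thomason.
\end{itemize}
So $\sqrt m=O(\sqrt n)$, and substituting into \Cref{thm:exact-general} gives time and space $2^{O((\sqrt n+\sqrt n)\log n)}=2^{O(\sqrt n\log n)}$. The hidden constant depends on the class. Isolated vertices need no special treatment, since $m\le$ const$\cdot n$ already covers them.

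There is no real obstacle; the only step needing care is the bookkeeping of $\Lambda(n)$ described above.
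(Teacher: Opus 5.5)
Your proposal is correct and is essentially the paper's proof. You verify $O(\sqrt N)$ balanced separators (Lipton--Tarjan, Alon--Seymour--Thomas) to get $\Lambda(n)=O(\sqrt n)$, use sparsity $m=O(n)$, and substitute into \Cref{thm:exact-general}. Your care with the tail of $\Lambda(n)$ improves on the paper's one-line claim, since the sum as literally defined diverges because $\operatorname{sep}(1)\ge 1$. Capping at the $O(\log n)$ recursion levels is the right fix. Setting $\operatorname{sep}(N)=0$ for small $N$ is only a convention, justified because the recursion never separates sets below leaf size.
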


\begin{proof}
These graph classes have balanced separators of size $O(\sqrt N)$: for planar graphs this is the theorem of Lipton and Tarjan~\cite{LiptonTarjan79}, and for $H$-minor-free graphs this follows from the theorem of Alon, Seymour, and Thomas~\cite{AlonSeymourThomas90}. Therefore $\Lambda(n)=O(\sqrt n)$. These classes are also sparse, so $m=O(n)$. The result follows from Theorem~\ref{thm:exact-general}.
\end{proof}

\begin{restatable}{corollary}{corexacttreewidth}\label{cor:exact-treewidth}
For graphs of treewidth $t$, \textsc{Minimum Sum Vertex Cover} can be solved exactly in time and space $2^{O((t\log n+\sqrt{tn})\log n)}$. In particular, for constant treewidth the running time is $2^{O(\sqrt n\log n)}$.
\end{restatable}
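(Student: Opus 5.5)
We derive the corollary from \Cref{thm:exact-general}, so the work is to bound $\Lambda(n)$ and $m$ for a graph of treewidth $t$. First, the class of graphs of treewidth at most $t$ is hereditary: every subgraph also has treewidth at most $t$. Every such graph on $N$ vertices has a balanced separator of size at most $t+1$, obtained from a central bag of a tree decomposition. Such a bag can be found from a tree decomposition; we compute the decomposition once, either exactly in $2^{O(t^3)}n$ time or approximately with width $O(t)$ in $2^{O(t)}n$ time. The approximation loses only constant factors, and the decomposition restricts to any induced subgraph. Hence $\operatorname{sep}(N)=O(t)$ is constant in $N$. The sum defining $\Lambda(n)$ has $O(\log n)$ nonzero terms before the active set drops below the leaf threshold, so $\Lambda(n)=O(t\log n)$.

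Second, graphs of treewidth $t$ are $t$-degenerate, so $m\le tn$ and $\sqrt m\le\sqrt{tn}$. Substituting both bounds into \Cref{thm:exact-general} gives the running time and space $2^{O((t\log n+\sqrt{tn})\log n)}$. For constant $t$, the term $t\log n$ is dominated by $\sqrt n$, which yields $2^{O(\sqrt n\log n)}$.

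The only delicate point is the definition of $\Lambda(n)$ as an infinite sum. With a constant separator size, the terms for $i$ beyond $\log_{3/2} n$ would be evaluated at $\lceil(2/3)^i n\rceil=1$, and summing them literally diverges. I would resolve this by noting that the proof of \Cref{thm:exact-general} uses $\Lambda(n)$ only as a bound on the boundary accumulated along a root-to-leaf path. That path has depth $O(\log n)$, since recursion stops once the active set has size $O(\sqrt m)$. So $\Lambda(n)$ may be read as the sum truncated at depth $O(\log n)$. Equivalently, we add the boundary size $O(t)$ per level over $O(\log n)$ levels. This is the main, though minor, obstacle; the rest is substitution.
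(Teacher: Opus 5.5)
Your proposal is correct and matches the paper's proof: bound $\Lambda(n)=O(t\log n)$ using separators of size $t+1$ in every subgraph, bound $m=O(tn)$, and substitute both into \Cref{thm:exact-general}. Your additional points are sound refinements that the paper leaves implicit: how the separators are computed, and reading the literally divergent sum defining $\Lambda(n)$ as truncated at recursion depth $O(\log n)$.
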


\begin{proof}
Every graph of treewidth $t$ has a balanced separator of size at most $t+1$, and the same is true for all subgraphs. Hence $\Lambda(n)=O(t\log n)$. Moreover, graphs of treewidth $t$ have $m=O(tn)$ edges. The claim follows from Theorem~\ref{thm:exact-general}.
\end{proof}

\section{Lower Bound for Exact Planar MSVC} \label{sec:planar-lb}

In this section we prove that the separator-based algorithm (Corollary~\ref{thm:planar-msvc-exact}) is  best possible, up to logarithmic factors in the exponent, for planar
graphs.  We show that MSVC remains NP-complete on planar graphs, and that, under
ETH, the running time cannot be improved to $2^{o(\sqrt n)}$.

We start with the following orientation formulation of minimum sum vertex cover. We show that the reduction of Cardinal, Fiorini, and Joret~\cite{CFJ08} developed for the NP-hardness of \emph{minimum entropy orientation} can be used to show that the minimum sum vertex cover of an $n$-vertex planar graph does not admit an exact algorithm with running time~$2^{o(\sqrt{n})}$, unless ETH fails. 

\begin{lemma}
\label{lem:orientation-equivalence}
For an orientation $\vec G$ of $G$, let $d_1(\vec G)\ge d_2(\vec G)\ge\cdots\ge d_n(\vec G)$ be the sorted indegree sequence of $\vec G$. Let $F(\vec G)=\sum_{i=1}^n i\,d_i(\vec G)$. Then $\mathrm{msvc}(G)=\min_{\vec G}F(\vec G)$.
\end{lemma}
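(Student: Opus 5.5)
We prove the two inequalities $\mathrm{msvc}(G)\ge\min_{\vec G}F(\vec G)$ and $\mathrm{msvc}(G)\le\min_{\vec G}F(\vec G)$ separately. The common idea is that an ordering and an orientation encode the same star partition of $E$: orient every edge towards the endpoint that covers it. Under this convention the indegree of a vertex equals the number of edges it covers, which for the ordering $\sigma$ is exactly its right degree $\rd_\sigma(v)$.

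\emph{From orderings to orientations.} Fix any ordering $\sigma$ and orient each edge $\{u,v\}$ towards the endpoint that appears first in $\sigma$. Then the indegree of $v$ is $\rd_\sigma(v)$, and the cost is
\[
\cost(\sigma)=\sum_{v}\sigma^{-1}(v)\,\rd_\sigma(v)=\sum_{i=1}^n i\, r^\sigma_i ,
\]
where $r^\sigma_i$ is the indegree of $\sigma(i)$. The sequence $(r^\sigma_i)$ is a permutation of the indegree multiset of $\vec G$. By the rearrangement inequality, the weighted sum $\sum_i i\,x_i$ over all arrangements of a fixed multiset is minimized by the non-increasing arrangement. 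Hence $\cost(\sigma)\ge F(\vec G)$. Applying this to an optimal $\sigma$ gives $\mathrm{msvc}(G)\ge\min_{\vec G}F(\vec G)$. Alternatively, one can take an optimal ordering whose right-degree sequence is non-increasing (Observation~\ref{obs:nonincreasing}), in which case equality $\cost(\sigma)=F(\vec G)$ holds directly.

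\emph{From orientations to orderings.} Fix an orientation $\vec G$ and list the vertices $v_1,\dots,v_n$ in non-increasing order of indegree, so that $\sigma(i)=v_i$. Each edge $(u\to v)$ is covered in $\sigma$ at time $\min\{\sigma^{-1}(u),\sigma^{-1}(v)\}\le\sigma^{-1}(v)$. Summing over all edges and grouping them by head gives
\[
\cost(\sigma)\le\sum_{v}\sigma^{-1}(v)\,\mathrm{indeg}(v)=\sum_i i\,d_i(\vec G)=F(\vec G).
\]
Therefore $\mathrm{msvc}(G)\le F(\vec G)$ for every orientation, and so $\mathrm{msvc}(G)\le\min_{\vec G}F(\vec G)$. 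Combining the two directions proves the lemma.

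The only point needing care is the second direction. The ordering built from $\vec G$ need not cover each edge at its head; some edges may be covered earlier, at their tail. This only helps, because the covering time is a minimum over both endpoints, so the inequality $\cost(\sigma)\le F(\vec G)$ goes in the right direction. Apart from this, the argument is a direct application of the rearrangement inequality, and no step is expected to be a real obstacle.
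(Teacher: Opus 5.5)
Your proof is correct and follows the paper's argument. For one direction you orient each edge toward the endpoint that covers it; for the other you order the vertices by non-increasing indegree and charge each edge to its head. The only difference is in showing $\cost(\sigma)\ge F(\vec G)$. You use the rearrangement inequality, which works for an arbitrary ordering. The paper instead invokes Observation~\ref{obs:nonincreasing} to pick an optimal ordering with non-increasing right degrees, where equality holds. Your variant is slightly more self-contained.
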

\begin{proof}
Let $\sigma^\star$ be an optimal ordering whose right-degree sequence is nonincreasing; such an ordering exists by Observation~\ref{obs:nonincreasing}. Orient every edge toward its earlier endpoint in $\sigma^\star$. Then the indegree of the vertex in position $i$ is exactly its right degree in $\sigma^\star$. Since these right degrees are nonincreasing, the resulting sorted indegree sequence is the same sequence. Hence the orientation has value equal to the MSVC cost of $\sigma^\star$, and therefore
\[
   \min_{\vec G}\sum_{i=1}^n i\,d_i(\vec G)\le \mathrm{msvc}(G).
\]

Conversely, fix an arbitrary orientation $\vec G$. Order the vertices by nonincreasing indegree, breaking ties arbitrarily, and let $\psi(v)$ be the position of vertex $v$ in this ordering. Consider an edge $uv$ oriented into $v$. In the ordering $\psi$, this edge is covered at time $\min\{\psi(u),\psi(v)\}$, which is at most $\psi(v)$. Charging each edge to its head and summing over all edges gives
\[
  \sum_{uv\in E(G)}\min\{\psi(u),\psi(v)\}
  \le
  \sum_{v\in V(G)}\psi(v)d_v
  = F(\vec G).
\]
Minimizing over all orientations gives the reverse inequality, and the lemma follows.
\end{proof}

Let $a=(a_1,\ldots,a_n)$ and $b=(b_1,\ldots,b_n)$ be nonincreasing integer
sequences with the same sum. We say that $a$ \emph{dominates} $b$ if $\sum_{i=1}^t a_i\ge \sum_{i=1}^t b_i$ for every $t$,
and at least one inequality is strict. The MSVC orientation objective has the strict dominance property: if $a$ dominates $b$, then $\sum_{i=1}^n i a_i < \sum_{i=1}^n i b_i$. Indeed, with $A_t=\sum_{i=1}^t(a_i-b_i)$ and using $A_n=0$,
\[
  \sum_{i=1}^n i(a_i-b_i)=-\sum_{t=1}^{n-1} A_t<0.
\]
Thus, as in the minimum entropy orientation reduction of Cardinal, Fiorini,
and Joret \cite{CFJ08}, an orientation whose indegree sequence is more concentrated is strictly better for MSVC. We use the Planar Exact Cover as the reduction source.

\begin{theorem}[Planar Exact Cover]
\label{thm:exact-cover-source}
Let $I=(X,\mathcal S)$ be an instance with $|X|=|\mathcal S|=q$, every set $S\in\mathcal S$ of size $3$, every element $x\in X$ contained in exactly three sets, and planar incidence graph. Deciding whether there is a subfamily $\mathcal S'\subseteq \mathcal S$ that covers every element of $X$ exactly once is NP-complete. Moreover, unless ETH fails, this problem has no $2^{o(\sqrt q)}$ time algorithm.
\end{theorem}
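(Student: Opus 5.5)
This statement is a known result that the paper imports from the literature. My plan is to assemble it from standard reductions rather than prove it from scratch. The source problem is Planar 3-SAT, with ETH lower bound $2^{o(\sqrt N)}$ for $N$ variables and clauses; Lichtenstein's reduction composed with the sparsification lemma gives this bound. I would then compose with the classical reduction from Planar 3-SAT to Planar X3C, or to planar positive 1-in-3 SAT, in the form used by Dyer and Frieze. That reduction maps an instance with $N$ variables and clauses to an exact-cover instance with $O(N)$ elements and sets, preserving planarity of the incidence graph. A linear blow-up carries $2^{o(\sqrt N)}$ hardness over to $2^{o(\sqrt q)}$ hardness. Membership in NP is immediate, since a subfamily is a certificate checkable in polynomial time.

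The key steps, in order, are as follows.

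\begin{enumerate}
\item Recall that Planar 3-SAT with $N$ variables and $O(N)$ clauses has no $2^{o(\sqrt N)}$ algorithm under ETH. The argument is to sparsify, so that $M=O(N)$ clauses remain, and then apply Lichtenstein's crossover gadgets. These replace each of the $O(M^2)$ crossings of a drawing of the incidence graph by a constant-size gadget, which produces a planar instance of size $O(M^2)$.
\item Reduce planar 3-SAT to planar exact cover by 3-sets using variable cycles and clause gadgets. The variable cycles are alternating set families whose two perfect covers encode true and false. The clause gadgets are local and placed within the faces, so planarity is kept and the size grows only linearly.
\item Regularize the instance so that every set has size $3$, every element lies in exactly three sets, and $|X|=|\mathcal S|$. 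I would do this by attaching constant-size planar padding gadgets, which have a forced unique local exact cover. The padding is attached to elements of low occurrence and to sets that do not yet have the correct multiplicity.
\end{enumerate}

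Cardinal, Fiorini, and Joret~\cite{CFJ08} used exactly this restricted source, citing Dyer--Frieze. Since their reductions are linear, the ETH bound transfers directly.

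The main obstacle is the regularity requirement in the third step. Forcing every element to occur in exactly three sets, with $|X|=|\mathcal S|$ and planarity preserved, requires padding gadgets that neither create new exact covers nor destroy existing ones. I would first try taking the Dyer--Frieze construction verbatim, since it already yields the $3$-regular, $3$-uniform planar form. The remaining task would then be to check that its size is linear in the planar 3-SAT instance, so that $q=O(N)$ and the $2^{o(\sqrt q)}$ lower bound follows.
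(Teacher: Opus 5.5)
Your overall chain (sparsification plus Lichtenstein, so that planar 3-SAT has no $2^{o(\sqrt N)}$ algorithm, followed by linear-size, planarity-preserving reductions) is the same as the paper's, and the ETH bookkeeping is right. The gap is step~3. This is the one non-standard part of the statement: every element must lie in \emph{exactly} three sets. You either assert that the Dyer--Frieze construction already has this form or fall back on unspecified padding, and neither is justified. Dyer and Frieze give planar 3DM/X3C hardness but not exact $3$-regularity, and you give no argument that their instances happen to be $3$-regular.

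The padding, as you describe it, cannot work. Suppose a gadget adds only new elements, each of which must end up in exactly three sets, and new $3$-sets. Counting incidences of the new sets gives $3\cdot\#\{\text{new sets}\}=3\cdot\#\{\text{new elements}\}+o$, where $o$ is the number of occurrences the gadget adds to original elements. So $o$ is divisible by $3$. Hence an element occurring once or twice can never be repaired by a gadget attached to it alone. A gadget must serve several deficient elements at once, keep the incidence graph planar, and still neither create nor destroy exact covers. That is a genuine construction, not bookkeeping. By contrast, $|X|=|\mathcal S|$ needs no work: it follows from $3$-uniformity and $3$-regularity by double counting.

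The paper avoids this by going through satisfiability rather than 3DM. Mulzer and Rote reduce planar 3-SAT to planar positive 1-in-3-SAT with linear blow-up. Moore and Robson's construction then produces \emph{cubic} planar positive 1-in-3-SAT. This is literally the stated problem if you take the clauses as elements and, for each variable $v$, the set $S_v$ of the three clauses containing $v$. Three variables per clause means three sets per element, and a 1-in-3 assignment is exactly an exact cover. The resulting size is $q=O((N+M)^2)$, which gives the $2^{o(\sqrt q)}$ bound. To repair your proposal, replace your steps~2--3 by this chain, or exhibit and verify an explicit regularizing gadget.
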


The theorem is exactly Cubic Planar Positive 1-in-3-SAT written in exact-cover language. Moore and Robson prove NP-completeness of this problem~\cite{moore2001hard}. The ETH lower bound together with the Sparsification Lemma rules out $2^{o(N+M)}$ time algorithms for 3-SAT \cite{impagliazzo2001complexity}; Lichtenstein's planarization reduces 3-SAT to Planar 3-SAT with size $O((N+M)^2)$ \cite{lichtenstein1982planar}, and the planar reduction of Mulzer and Rote reduces Planar 3-SAT to Planar Positive 1-in-3-SAT with only linear blow~up~\cite{mulzer2008minimum}. Applying the Moore and Robson cubic planar positive construction yields an equivalent instance of the stated exact-cover problem with $q=O((N+M)^2)$; hence a $2^{o(\sqrt q)}$ time algorithm for the theorem's problem would give a $2^{o(N+M)}$ time algorithm for 3-SAT, contradicting ETH. 

\begin{lemma}[\cite{CFJ08}]\label{lem:cfj-gadget}
Let $I=(X,\mathcal S)$ be an instance of Planar Exact Cover with $|X|=|\mathcal S|=q$ and $q$ divisible by $3$. In polynomial-time one can construct a planar graph $G_I$ with $|V(G_I)|=7q$ and $|E(G_I)|=12q$, such that the following holds. Define the sorted sequence
\[
  T_q:= (\underbrace{4,\ldots,4}_{q}, \underbrace{3,\ldots,3}_{7q/3}, \underbrace{1,\ldots,1}_{q}, \underbrace{0,\ldots,0}_{8q/3}).
\]
Every sorted in-degree sequence of an orientation of $G_I$ is either dominated by $T_q$ or equal to $T_q$, and $T_q$ is realized by an orientation of $G_I$ if and only if $I$ has an exact cover.
\end{lemma}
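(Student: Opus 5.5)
The plan is to recall the Cardinal--Fiorini--Joret construction~\cite{CFJ08} for minimum entropy orientation and re-verify the two properties we need, since only the sorted in-degree sequence matters and not the entropy objective. I have not re-derived the exact gadget, so the wiring below is a target to be matched against~\cite{CFJ08}, not a checked construction.

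\textbf{Construction and counting.} Build $G_I$ from the planar incidence graph of $I$. Each element $x\in X$ and each set $S\in\mathcal S$ is replaced by a constant-size planar gadget. The element gadgets are attached to the set gadgets along the incidence edges, in the cyclic order given by a planar embedding, so planarity is inherited. The sizes should be fixed so that each element/set pair contributes $7$ vertices and $12$ edges; this gives $|V(G_I)|=7q$ and $|E(G_I)|=12q$ directly. As a consistency check, the multiplicities in $T_q$ sum to $q+7q/3+q+8q/3=7q$ and the weighted sum is $4q+7q+q=12q$. Divisibility by $3$ is needed because an exact cover uses exactly $q/3$ sets. The gadgets should be designed so that the in-degree-$3$ and in-degree-$0$ classes split according to chosen versus unchosen sets, which explains the counts $7q/3$ and $8q/3$.

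\textbf{Domination.} Let $D(t)$ be the sum of the $t$ largest entries of a sequence. For a fixed $t$, the key tool is that in any orientation the in-degrees of a vertex set $W$ sum to at most $|E(W)|+|\partial W|$, the number of edges meeting $W$. So the $t$ largest in-degrees sum to at most $\max_{|W|=t}(\text{edges meeting }W)$.

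First, show that every vertex has degree at most $4$, with degree-$4$ vertices forming one class of size exactly $q$. This gives $D(t)\le D_{T_q}(t)$ for $t\le q$.

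For $q<t\le q+7q/3$, bound the number of edges meeting any $t$ vertices by a local discharging argument over gadgets. Each gadget can supply at most a fixed amount of in-degree to its top vertices, and the budgets add up to the prefix sums of $T_q$. The remaining prefixes (the $1$s and $0$s) follow because the total is $12q$ in both sequences.

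This gives $D(t)\le D_{T_q}(t)$ for every $t$, so every sorted in-degree sequence is dominated by or equal to $T_q$.

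\textbf{Equality case.} Suppose all prefix bounds are tight. Then every degree-$4$ vertex is a sink, and each gadget is oriented in one of two rigid local patterns, ``chosen'' or ``unchosen''. Tightness at the element gadgets forces each element to receive its incoming structure from exactly one chosen set, which is an exact cover. Conversely, orienting every gadget according to a given exact cover $\mathcal S'$ realizes $T_q$ exactly, and this direction is a routine count.

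I expect the main obstacle to be the middle-range prefix bounds, and in particular showing that equality there propagates to a global exact-cover structure. I would first try to express each prefix bound as a sum of per-gadget inequalities, so that tightness of the global inequality forces tightness in every gadget. Failing that, I would simply invoke the corresponding lemma of~\cite{CFJ08} and check that its degree-sequence statement transfers verbatim.
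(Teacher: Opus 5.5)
The paper does not prove this lemma; it imports it from \cite{CFJ08}. Your fallback, invoking the corresponding statement of Cardinal, Fiorini, and Joret and checking that it is phrased in terms of sorted in-degree sequences, is therefore exactly the paper's route, and it is the only part of your proposal that establishes anything. The sketch you offer in its place is not a proof. $G_I$ is never defined, so planarity, the counts, the prefix bounds, and the ``two rigid local patterns'' of the equality case are all unverified. Two of its concrete claims are also false, so it could not be completed as written.

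First, the degree structure you posit is impossible. $\Delta(G_I)\le 4$ is forced, since a vertex of degree $5$ oriented as a sink would exceed the first entry of $T_q$. The degree sum is $24q$, but if only $q$ vertices had degree $4$ it would be at most $4q+3\cdot 6q=22q$. Hence at least $3q$ vertices have degree $4$, and exactly $q$ of them are sinks in a realization of $T_q$, so ``every degree-$4$ vertex is a sink'' is wrong. The middle-range bound is then a genuine global constraint: every $W$ with $q\le|W|\le 10q/3$ must meet at most $3|W|+q$ edges. It fails, for instance, as soon as $q+1$ degree-$4$ vertices are pairwise non-adjacent, since they meet $4q+4>3(q+1)+q$ edges.

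Second, the prefixes in the range of the $1$s do not follow from equal totals. Writing $a^k$ for $k$ copies of $a$, the sequence $(4^q,3^{8q/3},0^{10q/3})$ has sum $12q$ and agrees with $T_q$ on every prefix up to $t=10q/3$, yet exceeds it at $t=10q/3+1$. Your own bound is attained: orient every edge meeting $W$ into $W$. So the largest possible sum of the top $t$ in-degrees is exactly $\max_{|W|=t}\bigl(12q-|E(V\setminus W)|\bigr)$. Consequently the inequalities for $10q/3\le t\le 13q/3$ are equivalent to $|U|-|E(U)|\le 8q/3$ for every $U$ with $8q/3\le |U|\le 11q/3$. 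Each component of $G[U]$ contributes at most $1$ to $|U|-|E(U)|$, with equality exactly for trees, so this condition is precisely $\alpha(G_I)\le 8q/3$. That is a real structural property of the gadget, and it is tight whenever $T_q$ is realized, because the $8q/3$ sources form an independent set. Only $t\ge 13q/3$ is trivial.

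So the domination part amounts to three conditions: $\Delta(G_I)\le 4$, the middle-range edge bound above, and $\alpha(G_I)\le 8q/3$. None of these, nor the exact-cover characterization of equality, can be checked without the actual construction of \cite{CFJ08}.
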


As an intermediate step, we first show that the decision version of MSVC is NP-complete even on planar graphs. The decision version of MSVC is already known to be NP-hard on general graphs~\cite[Theorem~2.4]{AP25} and~\cite[Theorem~14]{FLT04}.

\begin{restatable}{theorem}{thmnpcomplete}
\label{thm:np-complete}
The decision version of \textsc{Minimum Sum Vertex Cover} is NP-complete on planar graphs.
\end{restatable}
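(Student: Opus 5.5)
Membership in NP is immediate: an ordering $\sigma$ is a polynomial-size certificate, and $\cost(\sigma)$ can be computed in $O(n+m)$ time and compared with the threshold. For hardness I would reduce from Planar Exact Cover (Theorem~\ref{thm:exact-cover-source}). Given an instance $I=(X,\mathcal S)$ with $|X|=|\mathcal S|=q$, first ensure $3\mid q$. A simple way is to take three disjoint copies of $I$: this preserves planarity of the incidence graph, the cubic and $3$-uniform structure, and the existence of an exact cover. Then build the planar graph $G_I$ of Lemma~\ref{lem:cfj-gadget}, which has $7q$ vertices and $12q$ edges. Set the threshold
\[
  W_q:=\sum_{i=1}^{7q} i\,(T_q)_i ,
\]
which is a closed-form polynomial in $q$ and can be computed directly from the block structure of $T_q$ (blocks of $4$'s, $3$'s, $1$'s and $0$'s, summed with the arithmetic-series formula~\eqref{eq:block-cost}). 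The output instance is $(G_I,W_q)$.

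For correctness I would use Lemma~\ref{lem:orientation-equivalence}, which gives $\mathrm{msvc}(G_I)=\min_{\vec G}F(\vec G)$.

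\emph{If $I$ has an exact cover:} Lemma~\ref{lem:cfj-gadget} gives an orientation with sorted indegree sequence $T_q$. Then $F=W_q$, so $\mathrm{msvc}(G_I)\le W_q$.

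\emph{If $I$ has no exact cover:} every orientation has sorted indegree sequence $d(\vec G)\ne T_q$. By Lemma~\ref{lem:cfj-gadget}, $T_q$ therefore dominates $d(\vec G)$. Both sequences have the same sum $12q$ and the same length $7q$, so the strict dominance property established above applies and gives $F(\vec G)>W_q$. Since there are finitely many orientations, the minimum is attained, and hence $\mathrm{msvc}(G_I)>W_q$.

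So $I$ is a yes-instance exactly when $\mathrm{msvc}(G_I)\le W_q$, and the reduction is polynomial.

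The only delicate point I expect is the bookkeeping around the dominance relation. One has to check that the paper's notion of "$a$ dominates $b$" (prefix sums $\ge$, with at least one strict) is used in the right direction. "Every sequence is dominated by $T_q$" should mean $T_q$ is the dominating one, so the strict dominance inequality gives $F(T_q)<F(d(\vec G))$. The other detail is making sure the divisibility assumption $3\mid q$ is met without breaking the hypotheses of Theorem~\ref{thm:exact-cover-source}; the triple-copy trick handles this. The same reduction, with $|V(G_I)|=O(q)$, is what will later yield the ETH bound.
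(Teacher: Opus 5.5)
Your proposal is correct and follows the paper's proof: the same reduction through Lemma~\ref{lem:cfj-gadget}, the orientation reformulation of Lemma~\ref{lem:orientation-equivalence}, and strict dominance with threshold $\sum_i i\,(T_q)_i$ (which the paper evaluates as $21q^2+6q$). The only difference is how $3\nmid q$ is handled. The paper notes that such instances have no exact cover at all, while you take three disjoint copies; your triple-copy trick is an equally valid alternative.
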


\begin{proof}
Membership in NP is immediate: given a bijection
$\sigma:\{1,\ldots,|V|\}\to V$, its cost can be computed in polynomial-time.

For NP-hardness, reduce from Planar Exact Cover (Theorem~\ref{thm:exact-cover-source}). Let $I=(X,\mathcal S)$ be an instance with $|X|=|\mathcal S|=q$. If $q$ is not divisible by $3$, then $I$ is a trivial instance, since an exact cover would have to partition $q$ elements into sets of size $3$. Assume now that $3$ divides~$q$. Construct the planar graph $G_I$ from Lemma~\ref{lem:cfj-gadget}. Let $\kappa := \sum_{i=1}^{7q} i\,T_{q,i} =  21q^2+6q$. Then planar MSVC instance is $(G_I,\kappa)$.

By Lemma~\ref{lem:orientation-equivalence}, $\OPT(G_I)=\min_{\vec G}F(\vec G)$. By Lemma~\ref{lem:cfj-gadget}, every sorted in-degree sequence of an orientation of $G_I$ is dominated by $T_q$. This implies $F(\vec G) \ge \kappa$ for every orientation $\vec G$, with equality only when the sorted in-degree sequence is exactly $T_q$. Again by Lemma~\ref{lem:cfj-gadget}, $T_q$ is achievable if and only if $I$ has an exact cover. Therefore $I$ has an exact cover if and only if $\OPT_{MSVC}(G_I)\le \kappa$. The reduction is polynomial and the graph $G_I$ is planar. Hence MSVC is NP-hard on planar graphs.
\end{proof}

\begin{restatable}{theorem}{thmeth}
\label{thm:eth}
Unless ETH fails, \textsc{Minimum Sum Vertex Cover} has no $2^{o(\sqrt n)}$ time algorithm on $n$-vertex planar graphs.
\end{restatable}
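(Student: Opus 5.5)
The plan is to reuse the reduction from the proof of Theorem~\ref{thm:np-complete} and track its size. Start from an instance $I=(X,\mathcal S)$ of Planar Exact Cover with $|X|=|\mathcal S|=q$. By Theorem~\ref{thm:exact-cover-source}, unless ETH fails this problem has no $2^{o(\sqrt q)}$ time algorithm. If $3\nmid q$, answer ``no'' immediately. Otherwise, build the planar graph $G_I$ of Lemma~\ref{lem:cfj-gadget} in polynomial time. It has $n=7q$ vertices and $12q$ edges. Also set the threshold $\kappa=21q^2+6q$.

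Next, argue exactly as in the proof of Theorem~\ref{thm:np-complete}. Lemma~\ref{lem:orientation-equivalence} gives $\mathrm{msvc}(G_I)=\min_{\vec G}F(\vec G)$. By Lemma~\ref{lem:cfj-gadget}, every sorted in-degree sequence is either dominated by $T_q$ or equal to it. The strict dominance property then gives $F(\vec G)\ge\kappa$, with equality exactly when the sequence is $T_q$. Hence $I$ has an exact cover if and only if $\mathrm{msvc}(G_I)\le\kappa$.

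Now suppose there were an algorithm computing $\mathrm{msvc}$ on $n$-vertex planar graphs in time $2^{o(\sqrt n)}$. Run it on $G_I$ and compare the result with $\kappa$. This decides $I$ in time $\mathrm{poly}(q)+2^{o(\sqrt{7q})}=2^{o(\sqrt q)}$, contradicting Theorem~\ref{thm:exact-cover-source} under ETH. The same argument applies to the decision version, since an optimization algorithm trivially solves it.

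The only real point to check is that the blow-up is linear ($n=\Theta(q)$), so that $\sqrt n=\Theta(\sqrt q)$ and the subexponential bound transfers. Lemma~\ref{lem:cfj-gadget} gives this directly. The quadratic blow-up from planarization is already absorbed into the $\sqrt q$ lower bound of Theorem~\ref{thm:exact-cover-source}, so nothing more is needed. I expect no substantial obstacle beyond confirming that the polynomial reduction time is dominated by $2^{o(\sqrt q)}$, which is immediate.
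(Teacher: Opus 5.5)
Your proposal is correct and matches the paper's proof: you reuse the planar reduction $I\mapsto(G_I,\kappa)$ with $\kappa=21q^2+6q$ from Theorem~\ref{thm:np-complete}, and you note that $n=7q$ is linear in $q$. A hypothetical $2^{o(\sqrt n)}$ planar MSVC algorithm would then decide Planar Exact Cover in $2^{o(\sqrt q)}$ time, contradicting Theorem~\ref{thm:exact-cover-source} under ETH.
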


\begin{proof}
Suppose, for contradiction, that there is an algorithm deciding planar MSVC on $n$-vertex graphs in time $2^{o(\sqrt n)}$. We use it to solve Planar Exact Cover faster than allowed by Theorem~\ref{thm:exact-cover-source}.

Let $I=(X,\mathcal S)$ be a Planar Exact Cover instance with $|X|=|\mathcal S|=q$. If $q$ is not divisible by $3$, answer no. Otherwise construct the graph $G_I$ from Lemma~\ref{lem:cfj-gadget} and the budget $\kappa=21q^2+6q$ as in the proof of Theorem~\ref{thm:np-complete}. The constructed graph has $n=|V(G_I)|=7q$ vertices. Running the assumed planar MSVC algorithm on $(G_I,\kappa)$ decides whether $I$ has an exact cover. The running time is
\[
  2^{o(\sqrt n)}\,q^{O(1)} = 2^{o(\sqrt{7q})}\,q^{O(1)} = 2^{o(\sqrt q)}.
\]
This contradicts the lower bound in Theorem~\ref{thm:exact-cover-source}, unless ETH fails. Therefore planar MSVC has no $2^{o(\sqrt n)}$ time algorithm under ETH.
\end{proof}

\bibliography{main}

\newpage
\appendix

\section{An Exact Algorithm and Lower Bound for General Graphs}
\label{sec:appendix-general}
 
For completeness we record the exact complexity of MSVC on general graphs. We give a Held and Karp~\cite{HeldK61} style subset dynamic program running in $O^\ast(2^n)$ time (Theorem~\ref{thm:dp-upper}) and show that it is qualitatively optimal: under Gap-ETH, MSVC admits no $2^{o(n)}$ time algorithm (Theorem~\ref{thm:msvc-subexp-lb}), through a linear-size reduction from bounded-degree gap Vertex Cover. These two theorems together imply \Cref{thm:exact-general-graphs}.

\subsection{A subset dynamic program}\label{subsec:dp}
 
For $S\subseteq V$, let $u(S)=|E[V\setminus S]|$ be the number of edges with both endpoints outside $S$; if the vertices of $S$ have already appeared as the prefix of an ordering, then $u(S)$ is precisely the number of edges still uncovered. Maintain a table $D(S)$, $S\subseteq V$, equal to the least cost accumulated by an ordering whose first $|S|$ vertices are exactly the vertices of $S$. Initialize $D(\varnothing)=0$ and, for every $S\subseteq V$ and every $v\in V\setminus S$, relax
\[
   D(S\cup\{v\})\ \gets\ \min\bigl\{\,D(S\cup\{v\}),\ D(S)+u(S)\,\bigr\};
\]
the answer is $D(V)$.
 
\begin{theorem}\label{thm:dp-upper}
Let $G$ be an $n$-vertex graph. \textsc{Minimum Sum Vertex Cover} on $G$ can be solved in $2^n n^2$ time and $2^n n$ space.
\end{theorem}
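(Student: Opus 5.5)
The plan is to prove correctness of the subset recursion by identifying $D(S)$ with an explicit optimization quantity, and then to count operations.

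\textbf{Correctness.} For an ordering $\sigma$, the identity~\eqref{eq:svc-uncovered} (multiplied by $m$) gives
\[
\cost(\sigma)=\sum_{i=0}^{n-1}u_i^\sigma=\sum_{i=0}^{n-1}u\bigl(\sigma(\{1,\ldots,i\})\bigr).
\]
So the cost is a sum, over the prefixes $\emptyset=P_0\subset P_1\subset\cdots\subset P_{n-1}$, of a quantity that depends only on the \emph{set} $P_i$ and not on its internal order. This additivity is the key structural fact.

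Define $D^\ast(S)$ as the minimum of $\sum_{i=0}^{|S|-1}u(P_i)$ over all chains $\emptyset=P_0\subset P_1\subset\cdots\subset P_{|S|}=S$ with $|P_i|=i$. Equivalently, this is the minimum over orderings of $S$ of the cost accumulated before $S$ is completed.

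I would show $D(S)=D^\ast(S)$ by induction on $|S|$, processing subsets in order of nondecreasing cardinality so that every $D(S)$ is final before it is relaxed outward.

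\begin{itemize}
\item The base case is $D(\emptyset)=0$.
\item For $S\neq\emptyset$, the last step of any chain adds some $v\in S$ to the set $S\setminus\{v\}$. This gives the recurrence $D^\ast(S)=\min_{v\in S}\bigl(D^\ast(S\setminus\{v\})+u(S\setminus\{v\})\bigr)$.
\item This recurrence is exactly what the relaxations into $S$ compute.
\end{itemize}

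The answer is then $D(V)=D^\ast(V)=\mathrm{msvc}(G)$. The last term $u(P_{n-1})$ counts uncovered edges before the final step, which matches the sum in~\eqref{eq:svc-uncovered} ending at $i=n-1$. If reconstruction is wanted, storing the minimizing $v$ yields an optimal ordering.

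\textbf{Running time and space.} There are $2^n$ subsets and at most $n$ relaxations per subset, so at most $2^n n$ relaxations in total. The table $D$ has $2^n$ entries of $O(\log(nm))$ bits each. Storing $u(S)$ or a parent pointer keeps the space within $2^n n$.

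The only step that needs care is evaluating $u(S)$ so that each relaxation costs $O(n)$. I would compute $u(S)$ once per subset in $O(n)$ time from the adjacency bitmasks of the vertices outside $S$. This uses the identity $u(S)=\tfrac12\sum_{w\notin S}|N(w)\setminus S|$, evaluated by scanning $n$ vertices with word-level masks. Alternatively, I would use the incremental recurrence $u(S)=u(S\setminus\{v\})-|N(v)\setminus S|$ for any $v\in S$, such as the lowest-index one.

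This costs $O(n)$ per subset even with naive $O(n)$-time neighbourhood intersections, giving $2^n n^2$ time overall. This bookkeeping is the main, though minor, obstacle; the rest is the standard Held--Karp argument.
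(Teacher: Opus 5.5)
Your proposal is correct and follows essentially the same route as the paper. Both rewrite the cost as $\sum_{i=0}^{n-1}u(S_i)$ over prefix sets, run the Held--Karp relaxation $D(S\cup\{v\})\gets D(S)+u(S)$, and maintain $u(\cdot)$ via the incremental identity $u(S)=u(S\setminus\{v\})-|N(v)\setminus S|$, which gives the $2^n n^2$ time and $2^n n$ space bounds. Your explicit induction $D(S)=D^\ast(S)$, with subsets processed in nondecreasing cardinality, spells out the evaluation order that the paper leaves implicit.
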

 
\begin{proof}
We claim that the dynamic program computes $D(V)=\mathrm{msvc}(G)$. Consider an ordering $v_1,\ldots,v_n$ and set $S_i=\{v_1,\ldots,v_i\}$ with $S_0=\varnothing$. Immediately before $v_{i+1}$ is chosen, the uncovered edges are exactly those induced by $V\setminus S_i$, so their number is $u(S_i)$. As in Equation~\eqref{eq:svc-uncovered}, the cost of the ordering equals $\sum_{i=0}^{n-1}u(S_i)$. The dynamic program minimizes precisely this sum over all prefixes and all choices of the next vertex, so $D(V)=\mathrm{msvc}(G)$.
 
There are $2^n$ states, each with at most $n$ outgoing transitions. The counts satisfy $u(S\cup\{v\})=u(S)-|N(v)\setminus(S\cup\{v\})|$, so they can computed by a precomputed for all subsets, within the same $O^\ast(2^n)$ bound; evaluating each transition in $O(n)$ time gives $2^n n^2$ time. Storing $D$ together with one argmin vertex per state uses $2^n n$ space, and tracing the argmins from $V$ back to $\varnothing$ recovers an optimal ordering.
\end{proof}
 
\subsection{A subexponential-time lower bound}\label{subsec:genlb}
 
We now show that the algorithm above is qualitatively optimal: under Gap-ETH, MSVC cannot be solved in subexponential time. We use the following bounded-degree gap hardness of Vertex Cover, where $\tau(G)$ denotes the size of a minimum vertex cover of $G$.
 
\begin{theorem}[Bounded-degree gap hardness of Vertex Cover]
\label{thm:bd-vc-gap}
Assuming Gap-ETH, there exist constants $\Delta\ge 3$ and $\gamma>0$ such that no $2^{o(n)}$ time algorithm distinguishes, given an $n$-vertex graph $G$ of maximum degree at most $\Delta$ and an integer $t$, between $\tau(G)\le t$ and $\tau(G)\ge t+\gamma n$.
\end{theorem}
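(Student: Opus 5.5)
The plan is a chain of linear-size, gap-preserving reductions. It starts from the Gap-ETH form of $3$-SAT given in \Cref{sec:prelim}, passes to a bounded-occurrence gap version of $3$-SAT, and ends with the standard clause-triangle reduction to Vertex Cover. Because every step has linear blow-up, a $2^{o(n)}$ distinguisher for gap Vertex Cover would give a $2^{o(N)}$ distinguisher for gap $3$-SAT, contradicting Gap-ETH.

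\emph{Step 1: bounded occurrences.} We first make each variable occur in at most $B=O(1)$ clauses, with only a linear number of clauses and a constant (smaller) gap. Two ingredients are needed. First, we may assume the number of clauses is $M=O(N)$. This is the known strengthening of Gap-ETH of Manurangsi and Raghavendra~\cite{ManurangsiR17}, and alternatively follows from a sparsification-type argument that preserves the gap. Second, we apply the Papadimitriou--Yannakakis expander replacement to a variable $x$ with $k_x$ occurrences:
\begin{itemize}
\item give each occurrence its own fresh copy $x_1,\ldots,x_{k_x}$;
\item add equality constraints $x_i\Rightarrow x_j$ as $2$-clauses along the edges of a constant-degree expander on the copies.
\end{itemize}
Since $\sum_x k_x=3M=O(N)$, the new instance has $O(N)$ variables and clauses, and each variable occurs in $O(1)$ clauses. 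Satisfiability is preserved in the completeness direction. For soundness, take any assignment and re-set each variable's copies to their majority value. Expansion guarantees that the number of violated equality constraints is at least a constant times the number of copies that change. Hence an assignment violating few clauses of the new instance yields an assignment of the original instance violating few clauses. The $\delta$-fraction gap therefore becomes a $\delta'$-fraction gap for some constant $\delta'>0$.

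\emph{Step 2: reduction to Vertex Cover.} I would first pad the $2$-clauses to $3$-clauses by repeating a literal, or simply let the clause gadgets below have size $2$ or $3$. For every clause $C_j$ we create a clique on its literal occurrences: a triangle for a $3$-clause, an edge for a $2$-clause. We then join every pair of complementary literal occurrences $x$ and $\bar x$ across clauses. The graph has $n=\Theta(M)$ vertices. Its maximum degree is at most $2+B$, a constant $\Delta$, because each occurrence is adjacent to the other occurrences in its clause and to the at most $B$ occurrences of its complement.

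Independent sets containing one vertex per chosen clause correspond exactly to consistent partial assignments satisfying those clauses, so $\alpha(G)$ equals the maximum number of simultaneously satisfiable clauses. Let $M_3$ and $M_2$ be the numbers of $3$-clauses and $2$-clauses, so $n=3M_3+2M_2$, and set
\[
t \;=\; n-M \;=\; 2M_3+M_2 .
\]
\begin{itemize}
\item \emph{Completeness.} If the formula is satisfiable, then $\alpha(G)=M$, so $\tau(G)=n-M=t$.
\item \emph{Soundness.} If every assignment violates at least a $\delta'$ fraction of the clauses, then $\alpha(G)\le(1-\delta')M$, so $\tau(G)\ge t+\delta' M\ge t+\gamma n$ for a constant $\gamma>0$, using $n=O(M)$.
\end{itemize}

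\emph{Main obstacle.} I expect Step 1 to be the hardest part, specifically keeping the blow-up linear while preserving the gap. ETH-style sparsification is not obviously gap-preserving. I would therefore first try to cite directly the version of Gap-ETH for $3$-SAT with bounded variable occurrences~\cite{ManurangsiR17}, which exists in the literature. Only if that citation turned out to be unavailable would I rederive it via the expander argument above. The remaining verification is then routine, namely the soundness calculation for the majority re-assignment and the degree bound.
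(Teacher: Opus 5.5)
Your proof is correct. It follows the same overall strategy as the paper: compose Gap-ETH with linear-size, gap-preserving reductions that end at bounded-degree Vertex Cover. The difference is that the paper gives no argument beyond one sentence invoking the APX-hardness reductions for bounded-degree and cubic Vertex Cover~\cite{AlimontiKann2000,ChlebikChlebikova2006}, whereas you build the chain explicitly. You use the Papadimitriou--Yannakakis expander replacement to reach bounded occurrences with a constant gap $\delta'$. You then use the clause-clique graph, where $\alpha(G)$ equals the maximum number of simultaneously satisfiable clauses, which gives $t=n-M$ and $\gamma=\delta'/3$.

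Your version buys two things. First, it makes plain that no PCP machinery is involved: Gap-ETH already supplies the constant gap, and every later step is combinatorial and linear-size. Second, it isolates the one non-routine input, Gap-ETH for $3$-SAT with $M=O(N)$ clauses. The paper's composition also relies on this implicitly, since otherwise the reductions are linear in $M$ rather than in $N$. If you want to avoid citing it, the standard justification is random subsampling of $O(N/\delta^2)$ clauses, with a Chernoff bound and a union bound over all $2^N$ assignments. Sparsification, whose gap preservation is unclear as you note, is not the right tool here.

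The paper's citation buys a specific degree, $\Delta=3$ via Alimonti--Kann, rather than your $\Delta=2+B$. This is irrelevant, since the statement only asks for some constant $\Delta\ge 3$.

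One detail to state explicitly: encode each expander-edge equality by both clauses $(\bar x_i\vee x_j)$ and $(x_i\vee\bar x_j)$. Then every edge crossing the majority/minority cut of a variable's copies is a violated clause in your soundness count.
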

 
This follows from the linear-size, gap-preserving APX-hardness reductions for Minimum Vertex Cover on bounded-degree (cubic) graphs~\cite{AlimontiKann2000,ChlebikChlebikova2006}, composed with Gap-ETH~\cite{Dinur16,ManurangsiR17}; since the reductions are linear and gap-preserving, the additive gap $\gamma n$ and the $2^{o(n)}$ bound follow.
 
\begin{theorem}
\label{thm:msvc-subexp-lb}
Assuming Gap-ETH, \textsc{Minimum Sum Vertex Cover} has no exact algorithm running in $2^{o(N)}$ time on $N$-vertex graphs.
\end{theorem}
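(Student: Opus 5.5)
The plan is a linear-size reduction from the gap problem of Theorem~\ref{thm:bd-vc-gap}, using the padded matching construction of Feige, Lov\'asz, and Tetali. Suppose an algorithm solves MSVC exactly in $2^{o(N)}$ time. Given $(G,t)$ with $n$ vertices and maximum degree at most $\Delta$, we build a graph $G'$ with $N=O(n)$ vertices whose optimum MSVC value is a function of $\tau(G)$ with a controlled error. We then show that $\mathrm{msvc}(G')$ separates the two cases $\tau(G)\le t$ and $\tau(G)\ge t+\gamma n$. Computing $\mathrm{msvc}(G')$ exactly and comparing with a threshold would then decide the gap problem in $2^{o(n)}$ time, contradicting Gap-ETH.

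\textbf{Construction.} Let $G'$ be $G$ together with a disjoint perfect matching of $cn$ edges, where $c$ is a constant depending on $\Delta$ and $\gamma$. Then $N=n+2cn=O(n)$, and $G'$ has at most $\Delta n/2+cn$ edges.

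\textbf{Structure of an optimal ordering.} By Observation~\ref{obs:nonincreasing}, an optimal ordering has non-increasing new-coverage counts. So the vertices covering new $G$-edges at rates at least $2$ come first. Next come the vertices covering exactly one new edge: these are matching endpoints and $G$-vertices of residual degree one, interleaved arbitrarily. The idea is that the $cn$ matching edges, each covered at rate one, act as a long ``tail''. Once the $G$-edges have been covered by some vertex cover $C$, every further $G$-vertex used at rate one pushes all remaining matching edges back by one position. Each vertex of the cover used for $G$ therefore costs roughly $cn$ extra in the matching part.

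Quantitatively, I would show
\[
cn\cdot|C|+\tfrac{(cn)^2}{2} \le \cost(\text{ordering}) \le cn\cdot|C| + \tfrac{(cn)^2}{2}+O(\Delta n^2),
\]
where $C$ is the set of $G$-vertices that cover at least one $G$-edge. This set is a vertex cover of $G$, so $\mathrm{msvc}(G')\ge cn\,\tau(G)+\binom{cn+1}{2}$. Conversely, placing a minimum cover of $G$ first (in any order, at total cost at most $\Delta n\cdot n$) and then the matching gives $\mathrm{msvc}(G')\le cn\,\tau(G)+\binom{cn+1}{2}+\Delta n^2$. Choosing $c>\Delta/\gamma$ makes the gap term $cn\cdot\gamma n$ exceed the error $\Delta n^2$. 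A single threshold then distinguishes the two cases.

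\textbf{Main obstacle.} The hardest step is the lower bound. An optimal ordering may interleave matching vertices among the low-rate $G$-vertices, so the charge ``each cover vertex delays every matching edge'' is not immediate. I would make this rigorous by exchange. Matching endpoints can be moved to follow every $G$-vertex of rate at least $1$ with only $O(n^2)$ change, or the matching edges can be counted directly: with $j$ cover vertices preceding, the $i$-th matching edge has cover time at least $i+j$ unless it precedes them. The cleaner route is to lower bound, for each matching edge, its time by its rank among matching edges plus the number of cover vertices placed before it. Summing over matching edges, a cover vertex placed after many matching edges makes those edges cheaper, but then its own $G$-edges are covered late, at cost at least its position. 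Balancing these two effects should give the bound $cn\,|C|-O(\Delta n^2)$, which suffices with $c$ chosen a large enough constant.
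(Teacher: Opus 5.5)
Your reduction, your upper bound, and your choice $c>\Delta/\gamma$ are the paper's: a padded matching of $M=cn$ edges, then a threshold between the yes-case and no-case values. The gap is the lower bound. You correctly flag it as the crux, but you leave it as ``should give $cn|C|-O(\Delta n^2)$'' without proving it.

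Your first suggested route is also wrong as stated. In an arbitrary ordering, pushing matching endpoints behind the cover vertices can delay the matching edges by a total of $cn\cdot|C|$, for instance when all matching endpoints come first. That is the same order as the gap $c\gamma n^2$ you must preserve, so no $O(n^2)$ error bound independent of $c$ is available. What rescues that route is a sign, not a size. Moving zero-rate vertices to the end never increases the cost. After that, swapping a matching endpoint with an immediately following cover vertex of rate $c_v\ge 1$ changes the cost by $1-c_v\le 0$.

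The missing step is a one-line count. It is what the paper uses, and it is what your ``cleaner route'' yields once you finish the arithmetic. Let $C$ be the set of $G$-vertices that cover at least one new $G$-edge; this is a vertex cover of $G$. Let $g_1<\dots<g_{|C|}$ be the positions of the vertices of $C$, and $q_1<\dots<q_M$ the positions at which the matching edges are first covered. These are $|C|+M$ pairwise distinct positive integers. Each vertex of $C$ contributes at least its position, and each matching edge contributes exactly its position, so for every ordering $\sigma$
\[
\cost(\sigma)\;\ge\;\sum_{r} g_r+\sum_{s} q_s\;\ge\;\binom{|C|+M+1}{2}\;\ge\;\binom{M+1}{2}+M\,|C|\;\ge\;\binom{M+1}{2}+M\,\tau(G).
\]
There is no error term. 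In your balancing language, suppose $a_v$ matching edges are covered after $v\in C$. Those edges are delayed by $a_v$ because of $v$, while $v$'s own position is at least $M-a_v$ plus its rank in $C$. The two effects add to exactly $M$ per cover vertex.

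Alternatively, your structure paragraph already closes the argument. In an optimal ordering (Observation~\ref{obs:nonincreasing}), the rate-$1$ block can be permuted at zero cost, because no matching vertex is adjacent to a $G$-vertex. So all of $C$ may be placed before the matching edges.

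With the exact bound $\mathrm{msvc}(G')\ge\binom{M+1}{2}+M\,\tau(G)$ together with your upper bound, the choice $c>\Delta/\gamma$ works as you claimed.
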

 
\begin{proof}
To prove this we use a similar reduction to the one in~\cite[Theorem 14]{FLT04}. Let $(G,t)$ be an instance of bounded-degree gap Vertex Cover from Theorem~\ref{thm:bd-vc-gap}, with $G=(V,E)$, $|V|=n$, $|E|=m$, and $\Delta(G)\le\Delta$, so that $m\le\Delta n/2$. Fix an integer constant $k\ge\Delta/\gamma$ and let $G'$ be the disjoint union of $G$ with a matching of $M:=kn$ edges (that is, we add $M$ isolated edges). The resulting instance has $N=n+2M=(1+2k)n=O(n)$ vertices, so the reduction is linear.
 
Let $\tau(G) \le t$ and $C$ be a vertex cover of $G$ with $|C|\le t$. Place $C$, padded by arbitrary vertices of $G$ to length exactly $t$, in the first $t$ positions; then one endpoint of each matching edge in positions $t+1,\ldots,t+M$; then the remaining vertices. Ordering $C$ uniformly at random covers each edge of $G$ within expected time $(t+1)/2$, so some ordering of the first block makes the edges of $G$ contribute at most $m(t+1)/2$. The matching edges contribute $\sum_{i=1}^M (t+i)=Mt+\tfrac{M(M+1)}2$. Hence
\[
   \mathrm{msvc}(G')\le\frac{m(t+1)}2+Mt+\frac{M(M+1)}2 =: B.
\]
 
Let $\tau(G) \ge t + \gamma n$. Fix an ordering of $G'$. Let the vertices of $G$ that cover at least one new edge of $G$ when placed occupy positions $g_1<\cdots<g_\theta$; since they cover all the edges of $G$, they contain a vertex cover of $G$, so $\theta\ge\tau(G)$. The vertex at $g_r$ covers $c_r\ge 1$ new edges, all at time $g_r$, so the edges of $G$ contribute $\sum_r c_r g_r\ge\sum_r g_r$. Let the matching edges be first covered at the distinct positions $q_1<\cdots<q_M$; they contribute $\sum_s q_s$. In the cost analysis of the $\mathrm{msvc}(G')$, we can assume that $M$ disjoint edges are covered after edges of $G$ in along the ordering. Therefore
\begin{align*}
\mathrm{msvc}(G')
&\ge \sum_r g_r+\sum_s q_s \ge \binom{\theta+1}{2}+\binom{M+1}{2}+\theta M \ge \binom{M+1}{2}+\theta M \\
&\ge \frac{M(M+1)}{2}+\tau(G)M \ge \frac{M(M+1)}{2}+(t+\gamma n)M
=: L.
\end{align*}
 
Subtracting the two bounds and using $M=kn$, $k\ge\Delta/\gamma$, $m\le\Delta n/2$, and $t\le n$,
\[
   L-B=\gamma nM-\frac{m(t+1)}2=\gamma k\,n^2-\frac{m(t+1)}2
   \ge\Delta n^2-\frac{\Delta n(n+1)}4=\frac{3\Delta}4 n^2-O(n).
\]
Thus $L>B$ for all sufficiently large $n$, and an exact MSVC algorithm distinguishes the two cases by comparing $\mathrm{msvc}(G')$ to any threshold strictly between $B$ and $L$. Since $N=O(n)$, a $2^{o(N)}$ time MSVC algorithm would yield a $2^{o(n)}$ time algorithm for bounded-degree gap Vertex Cover, contradicting Theorem~\ref{thm:bd-vc-gap}.
\end{proof}

\end{document}